\newtheorem{theorem}{Theorem}
\newtheorem{claim}[theorem]{Claim}
\numberwithin{equation}{section}
\theoremstyle{plain}
\newcommand{\ignore}[1]{}
\newcommand{\sF}{\mathcal{F}}
\newcommand{\sG}{\mathcal{G}}
\newcommand{\IE}{\mathbb{E}}
\newcommand{\IP}{\mathbb{P}}
\newcommand{\blind}{0}
\begin{document}
	
	\def\spacingset#1{\renewcommand{\baselinestretch}%
		{#1}\small\normalsize} \spacingset{1}

	\if0\blind
	{
		\title{\bf {Time-varying auto-regressive models for count time-series}}
		\author{Arkaprava Roy, 
			Sayar Karmakar\\
			University of Florida}
		\maketitle
	} \fi
	
	\if1\blind
	{
		\bigskip
		\bigskip
		\bigskip
		\begin{center}
			{\LARGE\bf }
		\end{center}
		\medskip
	} \fi
	
	\bigskip
	
	\begin{abstract}
		Count-valued time series data are routinely collected in many application areas. We are particularly motivated to study the count time series of daily new cases, arising from COVID-19 spread. First, we propose a Bayesian framework to study time-varying semiparametric AR$(p)$ model for count and then extend it to propose a time-varying INGARCH model considering the rapid changes in the spread. We calculate posterior contraction rates of the proposed Bayesian methods with respect to average Hellinger metric. Our proposed structures of the models are amenable to Hamiltonian Monte Carlo (HMC) sampling for efficient computation. We substantiate our methods by simulations that show superiority compared to some of the close existing methods. Finally we analyze the daily time series data of newly confirmed cases to study its spread through different government interventions.
	\end{abstract}
	
	\noindent%
	{\it Keywords:}  Autoregressive model, B-splines, COVID-19, Count-valued time series, Hamiltonian Monte Carlo (HMC), INGARCH, Posterior Contraction Rates, Non-stationary, Poisson Regression 
	
	\spacingset{1.45}
	
	
	\section{Introduction}
	
	Modeling count time series is important in many applications such as disease incidence, accident rates, integer financial datasets such as price movement, etc. This relatively new research stream was introduced in \cite{zeger1988regression} and interestingly he analyzed another outbreak namely the US 1970 Polio incidence rate. This stream was furthered by \cite{chan1995} where Poisson generalized linear models (GLM) with an autoregressive latent process in the mean are discussed. A wide range of dependence was explored in \cite{davis2003} for simple autoregressive (AR) structure and external covariates. On the other hand, a different stream explored integer-valued time series counts such as ARMA structures as in \citep{brandt2001linear, biswas2009discrete} or INGARCH structure as done in \cite{zhu2011negative, zhu2012zero, zhu2012modeling1,zhu2012modeling}. However, from a Bayesian perspective, the only work to the best of our knowledge is that of \cite{silveira2015bayesian} where the authors discussed an ARMA model for different count series parameters. However, their treatment of ignoring zero-valued data or putting the MA structure by demeaned Poisson random variable remains questionable. None of these works focused on the time-varying nature of the coefficients except for a brief mention in \cite{sayar2020}.
	
	

	Our goals are motivated by both the application and methodological development. To the best of our knowledge, ours is the first attempt to model possibly autoregressive count time series with time-varying coefficients which can be regarded as the time-varying analog of \cite{fokianos2009poisson}. We consider a linear link based GLM route instead of the traditional exponential link \citep{fokianos2011log}, since 
	linear link helps in better interpretability of the coefficient functions. 
	Linear link however requires more stringent shape restrictions on the functions. 
	We impose those by putting constraints on the B-spline coefficients while modeling those coefficient functions. 
	However, it is possible to extend all the computations of the current paper to an exponential link based GLM framework.
	The mean function stands for the overall spread and the autoregressive coefficients stand for the effect of different lags. We are particularly motivated to study the spread of COVID-19 in New York City (NYC) from 23rd January to 14th July using the daily count data of new cases. In terms of our motivating data application, we wish to identify which lags are significant in our model which can be directly linked to the period of time symptoms did not show up. We find that some higher-order lags like 6, 7, and 8 are also significant. These findings are in-line with several research articles discussing the incubation length for the novel coronavirus with a median of 6-7 days and 98\% below 11 days. For example, see \cite{incubation}. We also find that after the lockdown or stay-at-home orders it takes about 12-16 days to reach the peak and then the intercept coefficient function starts decreasing. This is also an interesting find which characterizes the fact that the number of infected but asymptomatic cases is large compared to the new cases reported. 
	Additional to the time-varying AR model proposal, we also offer an analysis via a time-varying Bayesian integer-valued generalized autoregressive conditional heteroscedasticity (TVBINGARCH) model that assumes an additional recursive term in the conditional expectation (cf. \eqref{TVBING}). This extension offers some more comprehensiveness in the modeling part as even BINGARCH with small orders can help us get rid of choosing an appropriate maximum lag value. Since for a Poisson model, the mean is the same as the variance, this can also be thought of as an extension of the GARCH model in the context of count data. First introduced by \cite{ferland2006integer}, these models were thoroughly analyzed in \cite{zhu2012zero,zhu2012modeling1,zhu2011negative, zhu2012modeling, ahmad2016poisson}. Our proposal for the time-varying TVBINGARCH model adapts to the non-stationarity theme and also can be viewed as a new contribution. Finally, we contrast the time-varying AR and the GARCH for both simulations and real-data applications under different metrics of evaluation.
	Our semiparametric time-varying model provides better estimates.

	Regression models with varying coefficient were introduced by 
	\cite{hastie1993varying}. They modeled the varying coefficients using cubic B-splines. Later, these models has been further explored in various directions \cite{gu1993smoothing,biller2001bayesian,fan2008statistical,franco2019unified,yue2014bayesian}. Spline bases have been routinely used to model the time-varying coefficients within non-linear time series models \citep{cai2000functional, huang2002varying, huang2004functional,amorim2008regression}. We also consider the B-spline series based priors to model the time-varying coefficient functions. We develop efficient computational algorithms for the proposed models. 

	Apart from developing a computationally tractable hierarchical model, we also establish posterior contraction rates of the proposed models. \cite{ghosal2007convergence} established posterior contraction for a general stationary Markov chain with much stricter conditions. 
	However, they relaxed some of those conditions in Theorem 8.29 of \cite{ghosal2017fundamentals}.
	To the best of our knowledge, the posterior contraction rate result of this paper is the first for the time-varying Markov model based on minimal assumptions under Poisson-link. 
	We also consider the strategy, used to relax the conditions in Theorem 8.29 of \cite{ghosal2007convergence}.
	Our posterior contraction rate is with respect to the average Hellinger metric. The primary theoretical hurdle is to construct exponentially consistent tests in a time-varying Markov setup. Our proposed test construction is inspired by \cite{jeong2019frequentist, ning2020bayesian}. We construct the test relying on the Neyman-Pearson lemma with respect to negative average log affinity distance and calculate contraction rates. Then we show that the same rate holds for the average Hellinger metric as well. We also discuss a pointwise inferential tool by drawing credible intervals. Such tools are important to keep an objective perspective in terms of the evolution of the time-varying coefficients without restricting it to some specific trend models. See \cite{sayar2020}  (\cite{karmakar2018asymptotic} for an earlier version) for a comprehensive discussion on time-varying models and their applications. 
	
	The rest of the paper is organized as follows. Section \ref{model} describes the proposed Bayesian models in detail. Section~\ref{comp} discusses an efficient computational scheme for the proposed method. We calculate posterior contraction rates in Section~\ref{theo}. The performance of our proposed method in capturing true coefficient functions are studied in Section~\ref{sim} and we show excellent performance over other existing methods. Section~\ref{application} deals with an application of the proposed method on COVID-19 spread for NYC. Then, we end with discussions and possible future directions in Section~\ref{discussion}. Section~\ref{sec:proof} contains detail theoretical proofs.

	\section{Modeling}
	\label{model}
	Given the rapidly evolving nature of the pandemic, the patterns and number of new affected cases were changing rapidly over different geographical regions.  The rapid change in the observed counts make all earlier time-constant analysis inappropriate and builds a path where we can explore methodological and inferential development in tracking down the trajectory of this spread. Thus, we propose two novel semiparametric time-varying autoregressive models for counts to study the spread and examine the effects of these interventions in the spread based on the time-varying coefficient functions. 
	We first consider the most general case where we model the data using a time-varying Bayesian integer-valued generalized autoregressive conditional heteroscedasticity (TVBINGARCH) model where the conditional mean depends on the past observations as well as past conditional means. However, the relatively simpler process consisting of a time-varying mean/intercept function along with time-varying autoregressive coefficient functions upto lag-$p$ is also important keeping in mind the scope of application to real data and its interpretation. For example, the particular lags in an AR$(p)$ model for the COVID-19 count data can crave an interesting phenomenon in the lag-dynamics of the spread. This might be lost if we model the same using a TVBINGARCH(1,1) model since typically for GARCH type models it is standard practice to only consider smaller orders.

	\subsection{Time-varying generalized autoregressive conditional heteroscedasticity model for counts}
	\label{sec:TVBINGARCH}
	
	\cite{ferland2006integer} proposed integer valued analogue of generalized autoregressive conditional heteroscedasticity model (GARCH) after observing that the variability in number of cases of campylobacterosis
	infections also changes with level. 
	We consider here a time-varying analog of such process.
	The conditional distribution for count-valued time-series $X_t$ given $\sF_{t-1}=\{X_i: i\leq (t-1)\}$ and $\sG_{t-1}=\{\lambda_i: i\leq (t-1)\}$ is,
	\begin{align}
	X_t|\sF_{t-1},\sG_{t-1}\sim& \mathrm{Poisson}(\lambda_t) \text{ where } \lambda_t=\mu(t/T) + \sum_{i=1}^p a_i(t/T) X_{t-i}+\sum_{j=1}^q b_j(t/T)\lambda_{t-j}.\label{TVBING}
	\end{align}
	We call our method time-varying Bayesian Integer valued Generalized Auto Regressive Conditional Heteroscedastic (TVBINGARCH) model. We impose following constraints on the parameter space similar to \cite{ferreira2017estimation},
	\begin{align}\label{eq:parconditionG}
	\mathcal{P}_1=\{\mu, a_i:0<\mu(x)<\infty, \sup_{x}\sum_{i,j}(a_{i}(x)+b_j(x))<1\}.
	\end{align}
	This constraint ensure a unique solution of the time-varying GARCH process as discussed in \cite{davis2009extreme, rohan2013nonparametric, ferreira2017estimation}. Now, we put priors on the functions $\mu(\cdot)$, $a_i(\cdot)$ and $b_j(\cdot)$ such that they are supported in $\mathcal{P}_1$. Using the B-spline bases, we put following hierarchical prior on the unknown functions,
	\begin{align}
	\mu(x) =&\sum_{j=1}^{K_1}\alpha_jB_j(x)\label{prior3}\\
	a_{i}(x)=&\sum_{j=1}^{K_2}\theta_{ij}M_{i}B_j(x), \quad 0\leq\theta_{ij}\leq 1,1\leq i\leq p,\\
	b_{k}(x)=&\sum_{j=1}^{K_3}\eta_{kj}M_{k+p}B_j(x), \quad 0\leq\eta_{kj}\leq 1,1\leq k\leq q,\\
	M_i=&\frac{\tau_i}{\sum_{k=0}^p{\tau_k}}, \quad i=1,\ldots,p+q,\\
	\theta_{ij}\sim& U(0,1)\textrm{ for }1\leq i\leq p, 1\leq j\leq K_2,\\
	\eta_{kj}\sim& U(0,1)\textrm{ for }1\leq k\leq q, 1\leq j\leq K_3,\\
	\lambda_0\sim&\textrm{Inverse-Gamma}(d_1,d_1)\label{prior4},
	\end{align}
	where $\lambda_0$ is the rate parameter for $X_0$. The specification for the density of $X_0$ is required for computation. Otherwise we need to assume $\lambda_0$ to be known which is not reasonable for a real data application. We primarily focus on the special case where $p=1,q=1$. Based on the constraints on the parameter space we consider following prior for $\alpha_j$'s and $\tau_i$'s,
	\begin{align}
	\alpha_j\sim\mathrm{TN}(0,c_1^2, 0, \infty),\quad \tau_i \sim U(0,1), \label{prioringar:choice1}
	\end{align}
	\noindent where TN stands for the truncated normal with mean 0, variance $c_1^2$ and truncated to $[0,\infty)$.
	In above construction, $\sum_{j=0}^PM_j=1$. Thus $\sum_{j=1}^{p+q}M_j< 1$ if $M_0>0$. As $\Pi(M_0>0)=1$, we have $\Pi(\sum_{j=1}^{p+q}M_j< 1)=1$. Since $0\leq\theta_{ij}\leq 1$, we have $\sup_{x}a_i(x)\leq M_i$, and $\sup_{x}b_j(x)\leq M_{p+j}$. Thus $\sup_x\sum_{i=1}^{p}a_{i}(x)+\sum_{j=1}^{q}b_{j}(x)\leq \sum_{i=1}^{p+q}M_i< 1$. We have $\sum_{j=1}^{p+q}M_j= 1$ if and only if $\tau_0=0$, which has zero prior probability. On the other hand, we also have $\mu(\cdot)\geq 0$ as we have $\alpha_j\geq 0$. Thus, the induced priors, described in~\eqref{prior3}$-$~\eqref{prior4} are well supported in $\mathcal{P}_1$.

	\subsection{Time-varying auto-regressive model for counts}
	Although our previous modeling framework is more general, one may only wish to study higher order lag dependence from the past observations.
	Thus we consider a simplified model in this subsection.
	The linear Poisson autoregressive model \citep{zeger1988regression, brandt2001linear} is popular in analyzing higher order lag-dependence in count valued time series. Due to the assumed non-stationary nature of the data, we propose a time-varying version of this model. The conditional distribution for count-valued time-series $X_t$ given $\sF_{t-1}=\{X_i: i\leq (t-1)\}$ is,
	\begin{align}
	X_t|\sF_{t-1}\sim& \mathrm{Poisson}(\lambda_t) \text{ where } \lambda_t=\mu(t/T) + \sum_{i=1}^p a_i(t/T) X_{t-i}.\label{TVBARC}
	\end{align}
	We call our method time-varying Bayesian Auto Regressive model for Counts (TVBARC). The rescaling of the time-varying parameters to the support [0,1] is usual for in-filled asymptotics. Due to the Poisson link in~\eqref{TVBARC}, both conditional mean and conditional variance depend on the past observations. The conditional expectation of $X_t$ in the above model \eqref{TVBARC} is $\IE(X_t|\sF_{t-1})=\mu(t/T) + \sum_{i=1}^p a_i(t/T) X_{t-i}$, which needs to be positive-valued. To ensure that, we impose the following constraints on parameter space for the time-varying parameters, 
	\begin{align}\label{eq:parcondition}
	\mathcal{P}_2=\{\mu, a_i:0<\mu(x)<\infty, \sup_{x}\sum_{k}a_{k}(x)<1\}.
	\end{align}
	Note that, the conditions imposed (\ref{eq:parcondition}) on the parameters are somewhat motivated by the stationarity conditions for the time-constant versions of these models. This is not uncommon in time-varying AR literature. See \cite{dahlhaussubbarao2006, subbarao2008, sayar2020} for example. Even though the condition on $\mu(\cdot)$ seem restrictive in the light of what we need for invertible time-constant AR(p) process with Gaussian error, it is not unusual when it is used to model variance parameters to ensure positivity; it was unanimously imposed for all the literature mentioned above. Additionally, the above references heavily depend on local stationarity: namely, for every rescaled time  $0<t<1$, they assume the existence of an $\tilde{X}_i$ process which is close to the observed process. One key advantage of our proposal is it is free of any such assumption. Our assumption of only the first moment is also very mild for theoretical exploration in Section~\ref{theo}. Moreover, except for a very general linear model discussed in \citep{sayar2020}, to the best of our knowledge, this is the very first analysis of the time-varying parameter for count time-series modeled by Poisson regression. Thus we choose to focus on the methodological development rather than proving the optimality of these conditions. When $p=0$, our proposed model reduces to routinely used nonparametric Poisson regression model as in \cite{shen2015adaptive}.
	
	To proceed with Bayesian computation, we put priors on the unknown functions $\mu(\cdot)$ and $a_i(\cdot)$'s such that they are supported in $\mathcal{P}_2$. The prior distributions on these functions are induced through basis expansions in B-splines. Suitable constraints on the coefficients are imposed to ensure the shape constraints as in $\mathcal{P}_2$. Detailed description of the priors are given below,
	\begin{align}
	\mu(x) =&\sum_{j=1}^{K_1}\alpha_jB_j(x)\label{prior1}\\
	a_{i}(x)=&\sum_{j=1}^{K_2}\theta_{ij}M_{i}B_j(x), \quad 0\leq\theta_{ij}\leq 1,\\
	M_i=&\frac{\tau_i}{\sum_{k=0}^p{\tau_k}}, \quad i=1,\ldots,p,\\
	\theta_{ij}\sim& U(0,1)\textrm{ for }1\leq i\leq p, 1\leq j\leq K_2\label{prior2}.
	\end{align}
	Here $B_{j}$'s are the B-spline basis functions. The parameters $\delta_{j}$'s are unbounded. Based on the constraints on the parameter space we consider following prior for $\alpha_j$'s and $\tau_i$'s,
	\begin{align}
	\alpha_j\sim\mathrm{TN}(0, c_1^2, 0,\infty),\quad \tau_i \sim U(0,1), \label{prioringar:choice1}
	\end{align}
	where TN stands for the truncated normal distribution with mean 0, variance $c_1^2$ and truncated in $[0,\infty)$.
	The priors induced by above construction are $\mathcal{P}_2$-supported. The verification is very straightforward and similar to the previous subsection.

	\subsection{Model properties}
	In this paper, we only consider TVBINGARCH(1,1) which is commonly used for the GARCH class of models. One drawback of TVBARC is proper selection of lag. To alleviate this, one may then consider the TVBINGARCH framework. 
	As in the stationary case, TVBINGARCH(1,1) can be viewed as TVBARC with infinite order.
	Then the higher values in $b_1(\cdot)$'s is an indication that there might be important higher lags in TVBARC.
	Besides, to infer about higher lag dependence TVBARC is more suitable than TVBINGARCH. 
	In our real data illustration, we find that the TVBARC model identifies three important higher order lags 6,7 and 8 in COVID-19 spread.
	Such inference is difficult to obtain from TVBINGARCH.
	If the CH coefficient $b_1(\cdot)$ is uniformly zero, TVBINGARCH(1,1) reduces to TVBARC(1).
	However, the computational steps for TVBARC(1) does not easily follow from TVBINGARCH(1,1).
	Furthermore, our theoretical result of TVBINGARCH requires a lower bound for the true CH coefficient which is standard for time varying GARCH class of models. 
	Thus the theoretical result of TVBARC does not easily follow from TVBINGARCH.
	
	Towards writing the likelihood, note that our proposed models are non-stationary since the coefficient functions $a_i(\cdot)$, $b_j(\cdot)$ are possibly not constant. 
	However, we still take a simple product of individual conditional likelihoods for $X_t$'s rather than first locally approximating it by a stationary process. 
	The latter approach is more prominent in the frequentist framework and this phenomenon is known by `locally stationary approximation'. This was introduced in some seminal papers by \cite{dahlhaus1997fitting,dahlhaus2000likelihood} and were later used in many time-varying literature. See \cite{dahlhaussubbarao2006,dahlhaus2012locally,truquet2019local} among many others. Towards the Bayesian approach of modelling such approximating phenomenon, interested readers can refer to  \cite{rosen2009local,rosen2012adaptspec}. However, the assumption of existence of such an approximating stationary process is somewhat stringent and is probably not required in Bayesian paradigm. For example, see \cite{deyoreo2017bayesian} where the likelihood is formed by taking product of individual conditional likelihoods for a non-stationary time-series.
	Other approaches can be found in \cite{hadj2020bayesian,yang2020bayesian} where the likelihoods for the proposed non-stationary processes were computed without any local stationary approximation. Moreover, note that such approximating stationary processes can be shown to exist under the general smoothness conditions as outlined in Theorem 1 in \cite{dahlhaussubbarao2006} (for tvARCH case) or Proposition 2.3 in \cite{rohan2013nonparametric}(for tvGARCH case). These are easily extendible to the Poisson setting and for more general Holder smooth coefficient functions with probably an amended approximation rate. So in a sense, our smoothness assumption and the parameter restriction as (\ref{eq:parconditionG}) or (\ref{eq:parcondition})implies existence of such stationary processes without us implicitly putting additional assumption.

	\section{Posterior computation}
	\label{comp}
	In this section, we discuss the Markov Chain Monte Carlo (MCMC) sampling method for posterior computation. Our proposed sampling is dependent on the gradient-based Hamiltonian Monte Carlo (HMC) sampling algorithm \citep{neal2011mcmc}. Hence, we show the gradient computations of the likelihood with respect to different parameters for TVBARC$(p)$ and TVBINGARCH$(p,q)$ in the following two subsections.
	
	We obtain the likelihood from the joint density of the data based on our Poisson error model. 
	Since the joint density can be written as product of conditionals, we can thus write the joint likelihood of the data as product of conditional densities. 
	Detail expressions for each case are separately presented below.
	The likelihoods of the two models are constructed differently, thus we present them separately.

	\subsection{TVBINGARCH structure}
	We only derive the computational steps for TVBINGARCH(1,1) which is the frequent choice among GARCH-type models.
	While fitting this model, we assume for any $t<0$ $X_t=0,\lambda_t=0$. 
	The expression for $\lambda_1$ also involves $\lambda_0$.
	Thus, we need to additionally estimate the parameter $\lambda_0$, the Poisson rate parameter for $X_0$.
	Here the likelihood for TVBINGARCH(1,1) is given by $P(X_{0})\prod_{t=1}^TP(X_{t}|\mathcal{F}_{t-1})$.
	We assume that the marginal distribution of $X_{0}$ is Poisson$(\lambda_0)$ and the prior for $\lambda_0$ is Inverse-Gamma$(d_1, d_2)$ as described in Section~\ref{sec:TVBINGARCH}.
	The complete likelihood $L_2$ of the propose Bayesian method of~\eqref{TVBING} is given by
	\begin{align*}
	L_2&\propto \exp\bigg(\sum_{t=1}^T \big[-\{\mu(t/T) +  a_1(t/T) X_{t-1}+b_1(t/T) \lambda_{t-1}\big\} + X_t\log \big\{\mu(t/T) \\
	&\quad+  a_1(t/T) X_{t-1}+ b_1(t/T) \lambda_{t-i}\}\big] - \sum_{j=1}^{K_1} \alpha_j^2/(2c_1^2)\\
	&\quad-(d_1+1)\log\lambda_0 -d_1/\lambda_0\bigg){\mathbf 1}_{0\leq\theta_{11},\eta_{ij}\leq 1,, 0\leq \tau_{i}\leq 1, \alpha_{j}\geq 0},
	\end{align*}
	We calculate the gradients of negative log-likelihood $(-\log L_2)$ with respect to the parameters $\beta$, $\theta$, $\eta$ and $\delta$. The gradients are given below,
	\begin{align*}
	&-\frac{d\log L_2}{\alpha_1}\\&\quad=\bigg(1-\sum_t  \frac{B_1(t/T)X_{t-j}}{(\mu(t/T)+a_{1}(t/T)X_{t-j})+b_{1}(t/T)\lambda_{t-1})}\bigg) +\alpha_j/(2c_1^2),\\
	&-\frac{d\log L_2}{\theta_{11}}=M_{i}\bigg(1-\sum_t \frac{B_{1}(t/T)X_{t-j}}{(\mu(t/T)+a_{j}(t/T)X_{t-j})+b_{k}(t/T)\lambda_{t-1})}\bigg),\\
	&-\frac{d\log L_2}{\eta_{kj}}=M_{p+k}\bigg(1-\sum_t \frac{B_{1}(t/T)\lambda_{t-j}}{(\mu(t/T)+a_{j}(t/T)X_{t-j})+b_{k}(t/T)\lambda_{t-1})}\bigg),\\
	&-\frac{d\log L_2}{\tau_j}=\sum_k (M_j{\mathbf 1}_{\{j=k\}}-M_jM_k)\times \nonumber\\&\Bigg[\quad\sum_{i\leq p}\theta_{ij}B_j(x)\bigg(1-\sum_t \frac{B_{j}(t/T)X_{t-j}}{(\mu(t/T)+a_{j}(t/T)X_{t-j})+b_{1}(t/T)\lambda_{t-1})}\bigg){\mathbf 1}_{\{j\leq p\}}+\nonumber\\
	&\quad\sum_{1\leq k\leq q}\eta_{kj}B_j(x)\bigg(1-\sum_t \frac{B_{j}(t/T)\lambda_t}{(\mu(t/T)+a_{j}(t/T)X_{t-1})+b_{1}(t/T)\lambda_{t-1})}\bigg){\mathbf 1}_{\{j > p\}}\Bigg].
	\end{align*}
	The derivative of the likelihood concerning $\lambda_0$ is calculated numerically by differentiating from the first principles. Hence, it is sampled using the HMC algorithm too.
	
	\subsection{TVBARC structure}
	Since we do not have any information of the process for $t<0$, our computation for TVBARC(p) is based on the likelihood $\prod_{t=p}^TP(X_{t}|\mathcal{F}_{t-1})$.
	This likelihood may thus be regarded as a quasi-likelihood as we are looking at the joint density of last $T-p+1$ time points given the first $p$ observations and it is similar to the likelihood from \cite{deyoreo2017bayesian}.
	This likelihood also shares some commonality with the objective functions used for computation in \cite{dahlhaussubbarao2006, fryzlewicz2008normalized}.
	The complete posterior likelihood $L_1$ of the proposed Bayesian method in~\eqref{TVBARC} is given by
	\begin{align*}
	L_1&\propto \exp\bigg(\sum_{t=p}^T \big[-\{\mu(t/T) + \sum_{i=1}^p a_i(t/T) X_{t-i}\big\} + X_t\log \big\{\mu(t/T) \\
	&\quad+ \sum_{i=1}^p a_i(t/T) X_{t-i}\}\big] - \sum_{j=1}^{K_1} \alpha_j^2/(2c_1^2)\bigg){\mathbf 1}_{0\leq\theta_{ij}\leq 1, 0\leq \tau_{i}\leq 1, \alpha_{j}\geq 0},
	\end{align*}
	where we have $\mu(x) =\sum_{j=1}^{K_1}\exp(\beta_j)B_j(x), a_{i}(x)=\sum_{j=1}^{K_2}\theta_{ij}M_{i}B_j(x)$ and $M_j=\frac{\exp(\delta_j)}{\sum_{k=0}^p{\exp(\delta_k)}}$. We develop efficient MCMC algorithm to sample the parameter $\beta,\theta$ and $\delta$ from the above likelihood. The derivatives of above likelihood with respect to the parameters are easily computable. This helps us to develop an efficient gradient-based MCMC algorithm to sample these parameters. We calculate the gradients of negative log-likelihood $(-\log L_1)$ with respect to the parameters $\beta$, $\theta$ and $\delta$. The gradients are given below,
	\begin{align*}
	&-\frac{d\log L_1}{\alpha_j}=\bigg(1-\sum_t  \frac{B_j(t/T)X_t}{(\mu(t/T)+\sum_{j}a_{j}(t/T)X_{t-j})}\bigg) +\alpha_j/(2c_1^2),\\
	&-\frac{d\log L_1}{\theta_{ij}}=M_{i}\bigg(1-\sum_t \frac{B_{j}(t/T)X_t}{(\mu(t/T)+\sum_{j}a_{j}(t/T)X_{t-j})}\bigg),\\
	&-\frac{d\log L_1}{\tau_j}=\quad\sum_k (M_j{\mathbf 1}_{\{j=k\}}-M_jM_k)\sum_i\theta_{ij}B_j(x)\bigg(1-\sum_t \frac{B_{j}(t/T)X_{t-j}}{(\mu(t/T)+\sum_{j}a_{j}(t/T)X_{t-j})}\bigg),
	\end{align*}
	where ${\mathbf 1}_{\{j=k\}}$ stands for the indicator function which takes the value one when $j=k$.

	As the parameter spaces of $\theta_{ij}$'s and $\eta_{kj}$'s have bounded support, we map any Metropolis candidate, falling outside of the parameter space back to the nearest boundary point of the parameter space. To obtain a good acceptance rate, we tune our HMC sampler periodically. There are two tuning parameters in HMC namely the leapfrog step, and the step size parameter. The step size parameter is tuned to maintain an acceptance rate within the range of 0.6 to 0.8. The step size is reduced if the acceptance rate is less than 0.6 and increased if the rate is more than 0.8. This adjustment is done automatically after every 100 iterations. However, we choose to pre-specify the leapfrog step at 30 and obtain good results. Due to the increasing complexity of the parameter space in TVBINGARCH, we consider updating all the parameters involved in $a_{i}(\cdot)$'s, $b_k(\cdot)$'s, and $\lambda_0$ together. 
	
	\section{Large-sample properties}
	\label{theo}
	In this section we obtain posterior contraction rates for the two proposed models.
	Posterior contraction measures the speed at which we can recover the true parameter from the posterior distribution with increasing sample size. 
	The notion of recovery is specified by a semimetric $d$.
	
	{\underline {Definition \citep{ghosal2017fundamentals}}:} The posterior contraction rate at the true parameter $\kappa_0\in\mathcal{A}$ with respect to the semimetric $d$ on $\mathcal{A}$ is a sequence $\epsilon_T\to 0$ such that $P_{\kappa_0} \Pi (\kappa: d(\kappa,\kappa_0)>M_T \epsilon_T | X^{(T)} )\to 0$ for every $M_T\to \infty$, where $\mathcal{A}$ denotes the parameter space of $\theta_0$. Here $X^{(T)}$ stands for the complete dataset. 
	
	Although TVBINGARCH(1,1) may reduce to TVBARC(1) assuming $b_{1}(x)=0$ for all $x\in[0,1]$, the required technical assumptions do not allow us to derive the results for TVBARC as a special case for TVBINGARCH.
	For clarity in presenting the assumptions under which the respective results are established, we will make the conditions in (\ref{eq:parcondition}) and (\ref{eq:parconditionG}) more specific.  
	Since TVBARC is a simpler model, we first develop the theoretical results for this model and then make modifications to obtain the results for TVBINGARCH.

	\subsection{TVBARC structure}
	We start by studying large sample properties of the simpler AR model in~\eqref{TVBARC}. For simplicity, we fix order $p$ at $p=1$ for this section however the results are easily generalizable for any fixed order $p$ with some additional assumptions. The posterior consistency is studied in the asymptotic regime of increasing sample size $T$.  Let $\kappa=(\mu, a_1)$ stands for the complete set of parameters.
	For sake of generality of the method, we put a prior on $K_1$ and $K_2$ with probability mass function given by, 
	\begin{eqnarray}\label{eq:bij}
	\Pi(K_i=k)=b_{i1}\exp[-b_{i2} k (\log k)^{b_{i3}}],
	\end{eqnarray}
	with $b_{i1},b_{i2}>0$ and $0\le b_{i3}\le 1$ for $i=1,2$. Poisson and geometric probability mass functions appear as special cases of the above prior density for $b_{i3}=1$ or $0$ respectively. These priors have not been considered while fitting the model as it would require computationally expensive reversible jump MCMC strategy. We study the posterior consistency with respect to the average Hellinger distance on the coefficient functions which is 
	$$d_{1,T}^2=\frac{1}{T}d^2_{H}(\kappa_1,\kappa_2)=\frac{1}{T}\int(\sqrt{f_1}-\sqrt{f_2})^2,$$
	where $f_1=\prod_{t=1}^TP_{\kappa_1}(X_{t}|X_{t-1})$.
	Here $P$ stand for the conditional Poisson density defined in~\eqref{TVBARC}. 
	The contraction rate will depend on the smoothness of true coefficient functions $\mu$ and $a$ and the parameters $b_{13}$ and $b_{23}$ from the prior distributions of $K_1$ and $K_2$. Let $\kappa_0=(\mu_0, a_{10})$ be the truth of $\kappa$.
	
	\noindent Assumptions (A): There exists constants $0<M_{\mu}<M_X$ such that,
	\begin{itemize}
		\item[(A.1)] At time $t=0$, $\IE_{\kappa_0}(X_0)<M_{X}$. 
		\item[(A.2)] The coefficient functions $\sup_{x\in[0,1]} \mu_0(x)<M_{\mu}$ and $\sup_{x\in[0,1]} a_{10}(x) <1-M_{\mu}/M_X$.
		\item[(A.3)] $\inf_{x\in[0,1]} \min(\mu_0(x),a_{10}(x))>\rho$ for some small $\rho>0$.
	\end{itemize}
	Assumptions (A.1), (A.2) ensure $$\IE_{\kappa_0}(X_t)=\IE_{\kappa_0}(\IE_{\kappa_0}(X_t|X_{t-1}))<M_{\mu}+\left(1-\frac{M_{\mu}}{M_X}\right)M_X<M_X$$
	by recursion. Assumption (A.3) is imposed to ensure strict positivity of parameters and is standard in time-varying literature that deals with such constrained parameters. 

	Posterior consistency theory studies recovery of the `true' parameter $\kappa_{0}$ with increasing sample size when the data is sampled from the distribution characterized by $\kappa_{0}$. Our notion of recovery is based on the average Hellinger metric $d_{1,T}^2$ defined above.
	
	\begin{theorem}
		\label{contheo}
		Under assumptions (A.1)-(A.3), let the true functions $\mu_0(\cdot)$ and $a_{10}(\cdot)$ be H\"older smooth functions with regularity level $\iota_1$ and $\iota_2$ respectively, then the posterior contraction rate with respect to the distance $d_{1,T}^2$ is 
		\begin{align*}
		\max\bigg\{&T^{-\iota_1/(2\iota_1+1)} (\log T)^{\iota_1/(2\iota_1+1)+(1-b_{13})/2},T^{-\iota_2/(2\iota_2+1)} (\log T)^{\iota_2/(2\iota_2+1)+(1-b_{23})/2}\bigg\}.
		\end{align*}
		
	\end{theorem}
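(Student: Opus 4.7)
\textbf{Proof plan for Theorem \ref{contheo}.}

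The strategy is to apply a Ghosal--van der Vaart-style general posterior contraction theorem adapted to the Markov setting, relying on the three standard ingredients: a prior-mass lower bound on a Kullback--Leibler-type neighborhood of $\kappa_0$, the existence of sieves $\mathcal{A}_T$ whose complement carries exponentially small prior mass, and exponentially consistent tests separating $\kappa_0$ from $\{\kappa:d_{1,T}(\kappa,\kappa_0)>M\epsilon_T\}$ on the sieve. Because the observations form a time-inhomogeneous Markov chain, I will work throughout with the average negative log-affinity $\rho_T(\kappa,\kappa_0)=-T^{-1}\log\int\sqrt{f_\kappa f_{\kappa_0}}$ in the spirit of \cite{jeong2019frequentist, ning2020bayesian}, and convert to the average Hellinger $d_{1,T}$ at the end via the elementary inequality $d_{1,T}^2 \le 2\rho_T$.

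First, I will verify the prior-mass condition. For Hölder-$\iota_i$ functions, standard B-spline approximation gives $\mu_\star\in\mathrm{span}\{B_j\}_{j=1}^{K_1}$ and $a_\star\in\mathrm{span}\{B_j\}_{j=1}^{K_2}$ with sup-norm error $O(K_1^{-\iota_1})$ and $O(K_2^{-\iota_2})$. I then check that the Gamma prior on $\alpha_j$ and the Uniform prior on $\tau_i$ (with the Uniform prior on $\theta_{ij}$) put polynomial-in-$T$ mass on a sup-norm ball of the approximating coefficient vectors; Assumption (A.3), which keeps $\mu_0,a_{10}$ bounded away from zero, is what makes Poisson KL divergence $\lambda\log(\lambda/\lambda_0)-\lambda+\lambda_0$ locally equivalent to squared distance so that sup-norm closeness translates to a KL neighborhood. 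Taking expectation in the $X_{t-1}$'s, using the moment bound $\mathbb{E}_{\kappa_0}X_t<M_X$ that Assumptions (A.1)--(A.2) propagate by recursion, upgrades the pointwise control to the averaged KL ball required by the theorem. Combining with the prior on $K_i$ in \eqref{eq:bij} and choosing $K_i \asymp (T/\log T)^{1/(2\iota_i+1)}$ balances approximation against dimension and yields $\epsilon_{iT}=T^{-\iota_i/(2\iota_i+1)}(\log T)^{\iota_i/(2\iota_i+1)+(1-b_{i3})/2}$, with the extra $(\log T)^{(1-b_{i3})/2}$ coming from the sub-geometric tail decay in \eqref{eq:bij}, exactly as in the adaptive spline literature (cf.\ \cite{shen2015adaptive}).

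Second, I will construct sieves $\mathcal{A}_T$ by capping $K_1,K_2 \le \bar K_T \asymp T\epsilon_T^2/\log T$, truncating $|\beta_j|,|\delta_l|\le R_T$ for $R_T$ of logarithmic order, and retaining Assumptions (A). The log-covering number of $\mathcal{A}_T$ in the sup-norm (and hence in $\rho_T$) is of order $\bar K_T \log(R_T/\epsilon_T)\lesssim T\epsilon_T^2$, as required. Prior mass on the complement is controlled by the Gaussian tails of $\beta_j,\delta_l$ together with the prior tail \eqref{eq:bij} on $K_i$; the factor $(\log k)^{b_{i3}}$ in the exponent is precisely what makes $\Pi(K_i>\bar K_T)=o(e^{-(c+4)T\epsilon_T^2})$ for the same $\epsilon_T$.

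The main obstacle is the testing step, since in the time-varying Markov setup observations are neither independent nor identically distributed. Following the Neyman--Pearson construction of \cite{jeong2019frequentist, ning2020bayesian}, for each $\kappa_1$ on the sieve with $\rho_T(\kappa_1,\kappa_0)\ge \epsilon_T^2$ I build the likelihood-ratio test $\phi_{\kappa_1}=\mathbb{1}\{\prod_{t}\sqrt{f_{\kappa_1}/f_{\kappa_0}}\ge 1\}$ and bound both error probabilities by $\exp(-T\rho_T(\kappa_1,\kappa_0))$ using the affinity identity $\mathbb{E}_{\kappa_0}\prod_t\sqrt{f_{\kappa_1}/f_{\kappa_0}}=\exp(-T\rho_T)$ and its symmetric counterpart. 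A local covering of each shell $\{j\epsilon_T<\rho_T\le (j+1)\epsilon_T\}$ in the affinity pseudo-metric, whose cardinality is controlled by the entropy computed above, is then combined by the standard union-maximum argument to yield the exponentially consistent test. Finally, since for the Poisson conditional law $d_{1,T}^2\le 2\rho_T$ and also $\rho_T\lesssim d_{1,T}^2$ on the sieve (using boundedness of $\lambda_t$ from Assumptions (A)), the contraction rate transfers to $d_{1,T}$, and the joint rate is the maximum of the two componentwise rates, completing the proof.
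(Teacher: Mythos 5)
Your overall architecture is the same as the paper's: contraction is first obtained for the average negative log-affinity via the three standard ingredients (prior mass, sieve, tests), the tests are Neyman--Pearson likelihood-ratio tests, the rates come from B-spline approximation combined with the prior \eqref{eq:bij}, and the result is transferred to average Hellinger at the end via $\frac{1}{T}d_H^2(f,f_0)=\frac{2}{T}(1-e^{-Tr_T^2})\le 2r_T^2$. The prior-mass and rate-balancing parts of your plan are essentially the paper's argument.

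There is, however, a genuine gap at the step you dispose of as ``a local covering of each shell in the affinity pseudo-metric \ldots combined by the standard union-maximum argument.'' The Neyman--Pearson test $\phi_{\kappa_1}$ controls the Type II error only at the single alternative $\kappa_1$; in the i.i.d.\ setting one upgrades this to a Hellinger ball around $\kappa_1$ by convexity/minimax arguments, but for this time-inhomogeneous Markov chain no such off-the-shelf upgrade exists --- this is exactly the hurdle the paper identifies as its primary theoretical difficulty. The paper's resolution is to write, via Cauchy--Schwarz, $\IE_{\kappa}(1-\phi_{1T})\le\{\IE_{\kappa_1}(1-\phi_{1T})\}^{1/2}\{\IE_{\kappa_1}(f/f_1)^2\}^{1/2}$ and then to prove that the second factor stays bounded on a \emph{sup-norm} ball around $\kappa_1$ of radii $r_1\asymp\sqrt{\rho_T/T}$ for $\mu$ and $r_2\asymp\sqrt{\rho_T/(TB_T)}$ for $a_1$, where $\rho_T\asymp T^{-a}$ is the sieve's lower truncation level (not your log-affinity). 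That boundedness is established by a recursive control of the moment generating function $\IE(e^{qX_t})$ with $q=Tr_2^2/\rho_T$ propagated along the chain, and the recursion only closes because the sieve carries the lower bound $\min(\alpha,\gamma)>\rho_T$, the caps $\|\alpha\|_\infty\le A_T$ and $\lambda_0\le B_T$, and the constraint $\gamma\le 1-A_T/B_T$, which together force $\mu(1)/B_T+a_1(1)<1$ inside the sieve. Your sieve omits the lower truncation and the $1-A_T/B_T$ constraint, and your covering is declared in the affinity pseudo-metric rather than in sup-norm at these specific polynomially small radii; consequently neither the entropy bound $\log D_T\lesssim (K_{1T}+K_{2T})\log T$ nor, more importantly, the uniform exponential Type II error bound over each covering ball is actually established. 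Until you supply the second-moment bound on the dependent likelihood ratio (or an equivalent device), the testing step --- and hence the theorem --- is not proved.
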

	where $b_{ij}$ are specified in (\ref{eq:bij}).
	\noindent For the proof, the first step is to calculate posterior contraction rate with respect to average log-affinity $r_T^2(f_1,f_2)=-\frac{1}{T}\log\int f_1^{1/2}f_2^{1/2}$ and then show that $r_T^2(f_1,f_2)\lesssim\epsilon_T^2$ implies $\frac{1}{T}d_H^2(f_1,f_2) \lesssim \epsilon_T^2$. The average log-affinity provides a unique advantage to construct exponentially consistent tests leveraging on the famous Neyman-Pearson Lemma as has also been used in \cite{ning2020bayesian} for a multivariate linear regression setup under group sparsity. The proof is postponed to Section~\ref{sec:proof}. The proof is based on the general contraction rate result from \cite{ghosal2017fundamentals} and some results on B-splines based finite random series. 

	\subsection{TVBINGARCH structure}
	Next, we discuss the more comprehensive tvBINGARCH model \eqref{TVBING}. To maintain simplicity in the proof, we again assume $p=1,q=1$.
	Similar to the previous subsection, we put a prior on the number of Bspline bases, $K_i$ with probability mass function given by, $$\Pi(K_i=k)=b_{i1}\exp[-b_{i2} k (\log k)^{b_{i3}}],$$ with $b_{i1},b_{i2}>0$ and $0\le b_{i3}\le 1$ for $i=1,2,3$. Let us assume that $\psi=(\mu,a_1,b_1)$ be the complete set of parameters. We study the posterior consistency with respect to the Hellinger distance on the coefficient functions which is 
	$$d_{2,T}^2=\frac{1}{T}d^2_{H}(\psi_1,\psi_2)=\frac{1}{T}\int(\sqrt{f_1}-\sqrt{f_2})^2,$$
	where $f_1=P_{\phi_1}(X_{0})\prod_{t=1}^TP_{\psi_1}(X_{t}|X_{t-1},\lambda_{t-1})$. Here $P$ stands for the conditional Poisson density defined in~\eqref{TVBINGARCH} and the marginal density of $X_{0}$, $P_{\phi_1}(X_{0})$ is Poisson$(\lambda_{10})$ as described in our computational steps.
	

	\noindent For this structure, we modify the assumptions as 
	
	\noindent Assumptions(B): There exists constants $0<M_{\mu}<M_X$ such that,
	\begin{itemize}
		\item[(B.1)] At time $t=0$, $\IE_{\psi_0}(X_0),\lambda_0<M_{X}$. 
		\item[(B.2)] The coefficient functions $\sup_{x\in[0,1]} \mu_0(x)<M_{\mu}$ and $\sup_{x\in[0,1]} (a_{10}(x)+b_{10}(x)) <1-M_{\mu}/M_X$.
		\item[(B.3)] $\inf_{x\in[0,1]} \min(\mu_0(x),a_{10}(x), b_{10}(x))>\rho$ for some small $\rho>0$.
	\end{itemize}
	Assumptions (B.1), (B.2) ensure $$\IE_{\psi_0}(X_t)=\IE_{\psi_0}(\IE_{\psi_0}(X_t|X_{t-1},\lambda_{t-1}))<M_{\mu}+\left(1-\frac{M_{\mu}}{M_X}\right)M_X<M_X$$
	by recursion. 
	Thus we have, by Assumption (B.1-B.2)
	$$ \IE_{\psi_0}(X_t) <M_X,\quad  \IE_{\psi_0}(\lambda_t)=\IE_{\psi_0}(X_t|X_{t-1},\lambda_{t-1})=\IE_{\psi_0}(X_t) <M_X.$$

	\noindent Assumption (B.3) is imposed to ensure strict positivity of parameters and is standard in time-varying literature that deals with such constrained parameters. 
	Now we present our posterior contraction rate theorem below. 
	The definition of the contraction rate is the same as before.
	
	\begin{theorem}
		\label{contheo2}
		Under assumptions (B.1)-(B.3), let the true functions $\mu_0(\cdot)$, $a_{10}(\cdot)$ and $b_{10}(\cdot)$ be H\"older smooth functions with regularity level $\iota_1$, $\iota_2$ and $\iota_3$ respectively, then the posterior contraction rate with respect to the distance $d_{2,T}^2$ is
		\begin{align*}
		\max\bigg\{&T^{-\iota_1/(2\iota_1+1)} (\log T)^{\iota_1/(2\iota_1+1)+(1-b_{13})/2},T^{-\iota_2/(2\iota_2+1)} (\log T)^{\iota_2/(2\iota_2+1)+(1-b_{23})/2},\\&T^{-\iota_3/(2\iota_3+1)} (\log T)^{\iota_3/(2\iota_3+1)+(1-b_{33})/2}\bigg\}.
		\end{align*}
		
	\end{theorem}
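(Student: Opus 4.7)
The plan is to extend the argument that established Theorem~\ref{contheo} to the three-function parameter $\psi=(\mu,a_1,b_1)$, handling the additional complication that the intensity $\lambda_t$ is latent and depends on $\psi$ only through the recursion in~\eqref{TVBING}. As in the TVBARC case I would apply the general posterior-contraction theorem of \cite{ghosal2017fundamentals} for dependent data, which reduces the task to verifying a KL-type prior-mass condition around $\psi_0$, a sieve $\mathcal{A}_T$ whose complement carries exponentially small prior mass, and exponentially consistent tests separating $\psi_0$ from $\{d_{2,T}(\psi,\psi_0)>M\epsilon_T\}$. Mirroring the treatment in Theorem~\ref{contheo} and the approach of \cite{ning2020bayesian, jeong2019frequentist}, I would first construct tests with respect to the negative average log-affinity $r_T^2(\psi_1,\psi_2)=-T^{-1}\log\int f_{\psi_1}^{1/2}f_{\psi_2}^{1/2}$ via Neyman--Pearson, which requires no independence, and only at the end transfer the resulting $r_T$-rate to the stated average Hellinger rate $d_{2,T}$ using the standard Hellinger-affinity inequality.

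The main obstacle is controlling how perturbations in $\psi$ propagate through the $\lambda_t$ recursion. Unrolling the recursion for two parameter values $\psi$ and $\psi_0$ on the same observed path yields
\begin{equation*}
|\lambda_t^\psi-\lambda_t^{\psi_0}| \le \|\mu-\mu_0\|_\infty + \|a_1-a_{10}\|_\infty X_{t-1} + \|b_1-b_{10}\|_\infty \lambda_{t-1}^{\psi_0} + \sup_x b_1(x)\,|\lambda_{t-1}^\psi-\lambda_{t-1}^{\psi_0}|.
\end{equation*}
The contractive condition $\sup_x(a_1(x)+b_1(x))<1-M_\mu/M_X$ from (B.2), together with the first-moment control $\IE_{\psi_0}X_t,\,\IE_{\psi_0}\lambda_t<M_X$ inherited from (B.1)--(B.2), makes this recursion geometric and produces $\IE_{\psi_0}|\lambda_t^\psi-\lambda_t^{\psi_0}| \lesssim \|\mu-\mu_0\|_\infty + \|a_1-a_{10}\|_\infty + \|b_1-b_{10}\|_\infty$ uniformly in $t$. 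Using the lower bound (B.3), which ensures that the Poisson KL divergence between means $\lambda^\psi$ and $\lambda^{\psi_0}$ is dominated by a constant times $(\lambda^\psi-\lambda^{\psi_0})^2$, I can then upgrade coordinate-wise sup-norm closeness of B-spline approximations into the required bounds on $\IE_{\psi_0}\log(f_{\psi_0}/f_\psi)$ and its second moment.

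For prior concentration I would approximate each of $\mu_0,a_{10},b_{10}$ by a finite B-spline series of dimension $K_i\asymp (T/\log T)^{1/(2\iota_i+1)}$, with H\"older approximation error of order $K_i^{-\iota_i}$; the prior tail $\Pi(K_i=k)\propto\exp[-b_{i2}k(\log k)^{b_{i3}}]$ then inserts the extra $(\log T)^{(1-b_{i3})/2}$ factor, and the three coordinate-wise rates combine via a union bound into the maximum displayed in the theorem. For the sieve I would truncate each $K_i\le \bar K_T\asymp T\epsilon_T^2/\log T$ and bound the Gaussian-prior parameters $\beta_j,\delta_k$ polynomially in $T$; the exponential tails of the priors combined with the built-in boundedness of $\theta_{ij},\eta_{kj}$ give the required bound on the sieve complement, and a covering-number computation for finite B-spline expansions on $\mathcal{A}_T$ then furnishes tests through Neyman--Pearson against $r_T$. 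Assembling these three pieces into the master theorem of \cite{ghosal2017fundamentals} produces the contraction rate in $r_T$, which the Hellinger-affinity inequality promotes to the stated $d_{2,T}$-rate.
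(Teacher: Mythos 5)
Your proposal follows essentially the same route as the paper: the general Ghosal--van der Vaart scheme with prior-mass, sieve, and Neyman--Pearson tests built against the negative average log-affinity and transferred to average Hellinger at the end, with the key new ingredient being exactly the geometric recursion for $|\lambda_t^{\psi}-\lambda_t^{\psi_0}|$ driven by the contractivity in (B.2), and B-spline approximation plus the prior tails on $K_i$ yielding the three coordinate-wise rates. The only part you gloss over is where the paper does most of its new work for the INGARCH case, namely bounding $\IE_{\psi_1}(f/f_1)^2$ on the test neighborhoods, which forces shrunken radii of order $\rho_T/T^{1+a_3}$ and a moment-generating-function recursion for $\IE(e^{qX_t})$ within the sieve; but the overall architecture you describe is the paper's.
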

	
	\noindent The proof follows from a similar strategy as in Theorem 1. An outline of the proof can be found in the Section~\ref{sec:proof}. 
	
	\section{Simulation studies}
	\label{sim}
	In this section, we study the performance of our proposed Bayesian method in capturing the true coefficient functions. We compare both TVBARC and TVBINGARCH methods with some other competing models. It is important to note that, this is to the best of our knowledge first work in Poisson autoregression with a time-varying link. Thus, we compare our method with the existing time-series models with time-constant coefficients for count data and time-varying AR with Gaussian error. We also examine the estimation accuracy of the coefficient functions for estimating the truth. 
	
	The hyperparameter $c_1$ of the truncated normal prior is set to 10 to ensure weak informativeness. The hyperparameters for Inverse-Gamma prior $d_1=0.1$, which is also weakly informative. We consider 6 equidistant knots for the B-splines based on comparing the AMSE scores. We choose the knot number after which the AMSE score does not change significantly. We collect 10000 MCMC samples and consider the last 5000 as post-burn-in samples for inferences. In absence of any alternative method for time-varying AR$(p)$ model of count-valued data, we compare the estimated functions with the true functions in terms of the posterior estimates of functions along with its 95\% pointwise credible bands. The credible bands are calculated from the MCMC samples at each point $t=1/T,2/T,\ldots, 1$. We also compare different competing methods in terms of average MSE (AMSE) score using the INGARCH method of {\tt tsglm} from R package {\tt tscount}, GARMA using {\tt tscount} as well, {\tt tvAR} and our proposed Bayesian methods. The AMSE is defined as $\frac{1}{T}\sum_t(X_t-\hat{\lambda}_t)^2$. We estimate this in terms of the posterior mean of AMSEs across MCMC as 
	$$AMSE=\frac{1}{5000}\sum_{S=1}^{5000}\frac{1}{T}\sum_t(X_t-\hat{\lambda}^S_{t})^2,$$ where $\hat{\lambda}^S_{t}$ is the posterior estimate of $\lambda_{t}$ at $S$-$th$ postburn sample.

	\subsection{Case 1: TVBARC structure}
	\label{TVARsec}
	
	Here, we consider two model settings $p=1;X_t\sim\textrm{Poisson}(\mu(t/T)+a_1(t/T)X_{t-1})$ and $p=2;X_t\sim\textrm{Poisson}(\mu(t/T)+a_1(t/T)X_{t-1} + a_2(t/T)X_{t-2})$ for $t=1,\ldots,T$. Three different choices for $T$ have been considered, $T=100,500$ and $1000$. The true functions are for $x\in[0,1]$,
	\begin{align*}
	\mu_0(x)=&10\exp\big(-(x-0.5)^2/0.1\big),\\
	a_{10}(x)=&0.3(x-1)^2+0.1,\\
	a_{02}(x)=&0.4x^2+0.1.
	\end{align*}
	
	We compare the estimated functions with the truth for sample size 1000 in Figures~\ref{AR1} and Figure~\ref{AR2} for the models $p=1$ and $p=2$ respectively. Tables~\ref{AMSEAR1} and~\ref{AMSEAR2} illustrate the performance of our method with respect to other competing methods.
	
	\begin{table}[ht]
		\centering
		\caption{AMSE comparison for different sample sizes across different methods when the true model is~\eqref{TVBARC} with $p=1$.}
		\begin{tabular}{rrrrr}
			\hline
			& INGARCH(1,0) & GARMA(1,0) & TVAR(1) & TVBARC(1) \\ 
			\hline
			$T=100$ & 11.60 & 11.18 & 11.41 & \textbf{8.65} \\  
			$T=500$ & 11.35 & 11.04 & 11.24 & \textbf{8.12} \\ 
			$T=1000$ & 11.05 & 10.73 & 10.94 & \textbf{7.02} \\ 
			\hline
		\end{tabular}
		\label{AMSEAR1}
	\end{table}
	
	\begin{table}[ht]
		\centering
		\caption{AMSE comparison for different sample sizes across different methods when the true model is~\eqref{TVBARC} with $p=2$.}
		\begin{tabular}{rrrrr}
			\hline
			& INGARCH(2,0) & GARMA(2,0) & TVAR(2) & TVBARC(2) \\ 
			\hline
			$T=100$ & 18.02 & 17.28 & 13.04 & \textbf{11.01} \\ 
			$T=500$ & 16.42 & 15.86 & 12.61 & \textbf{10.79} \\ 
			$T=1000$ & 15.79 & 15.25 & 12.75 & \textbf{10.61} \\ 
			\hline
		\end{tabular}
		\label{AMSEAR2}
	\end{table}
	
	\begin{figure}[htbp]
		\centering
		\subfigure[$\mu()$]{\label{fig:c.1}\includegraphics[width=50mm]{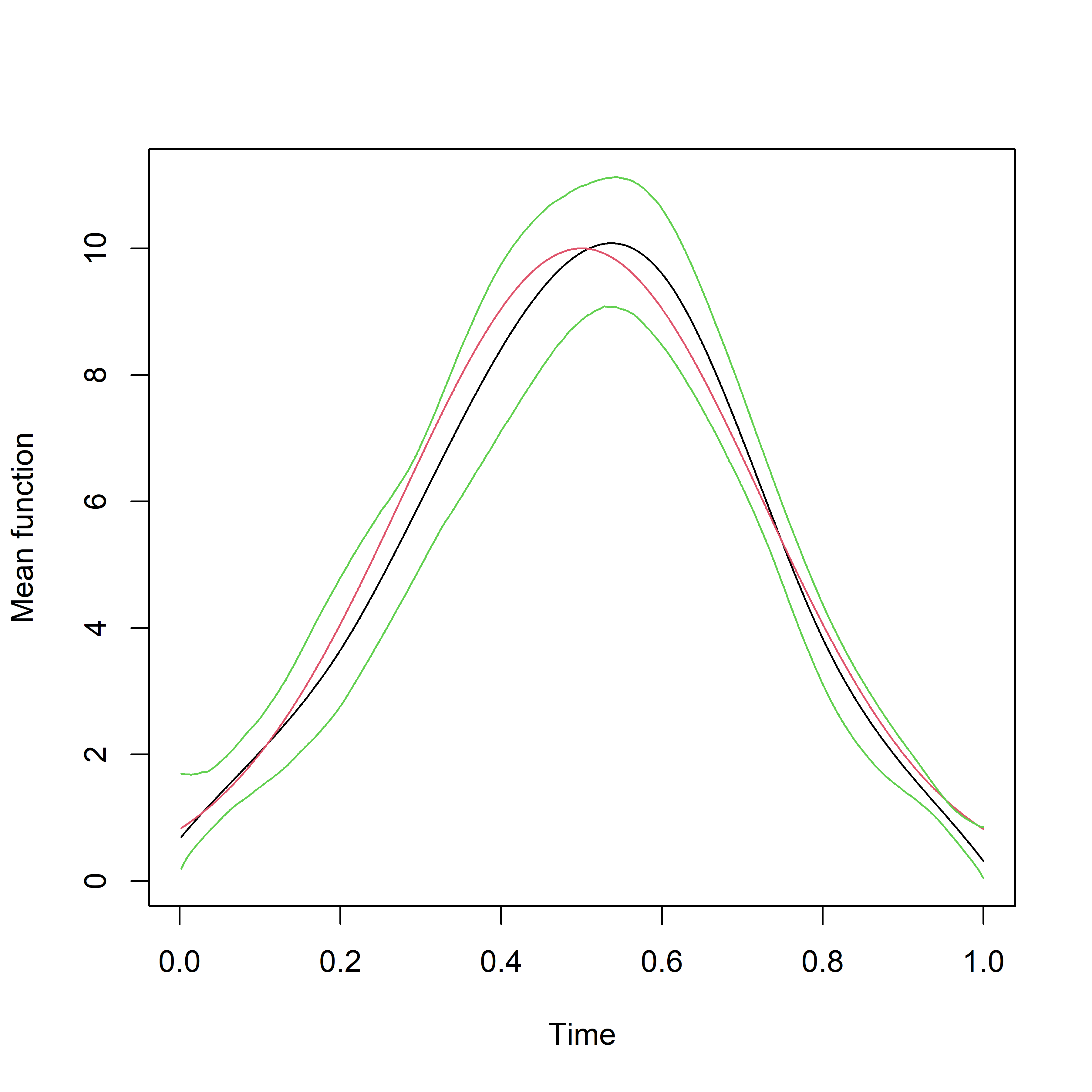}}
		\subfigure[$a_1()$]{\label{fig:c.2}\includegraphics[width=50mm]{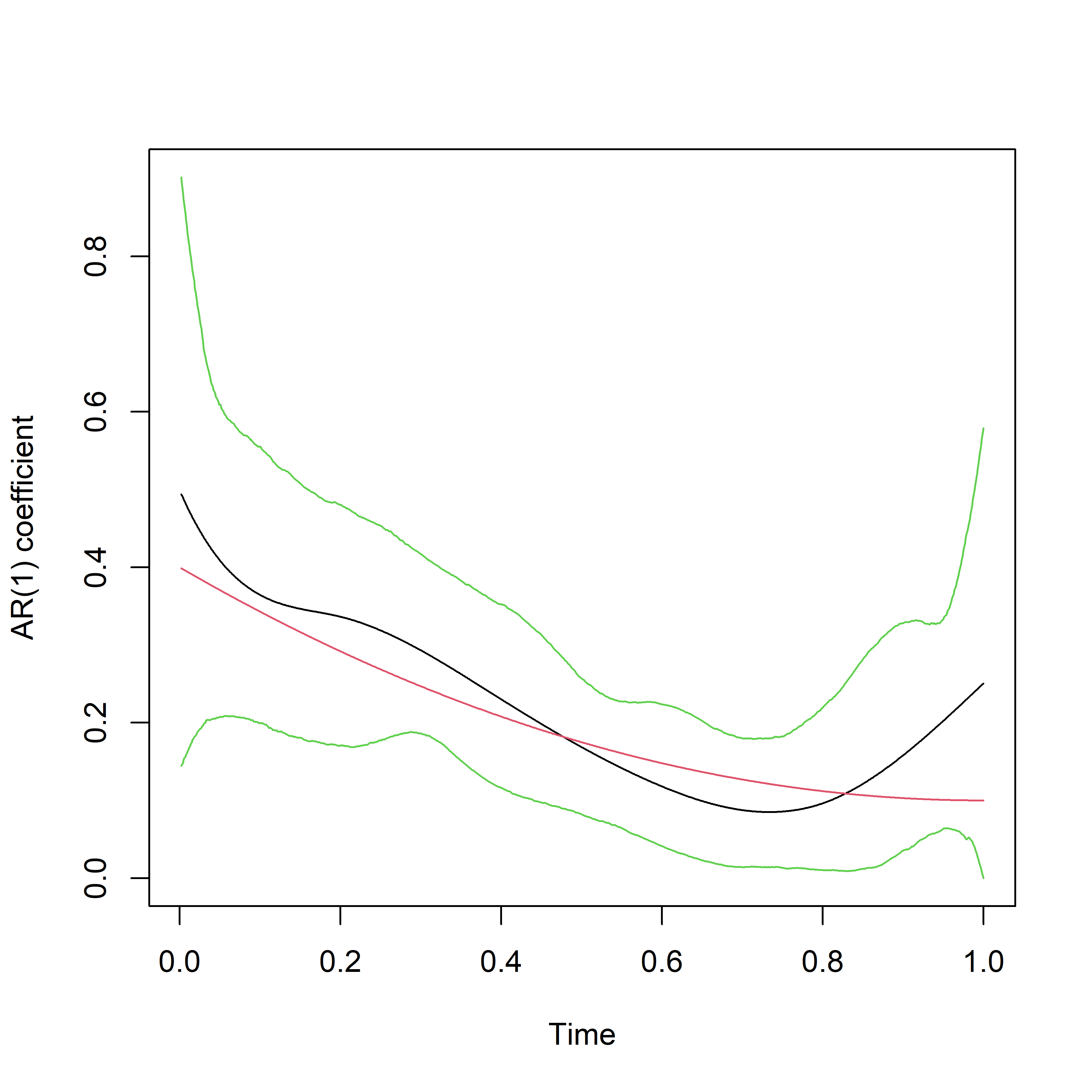}}
		\caption{Estimated mean function in 1st column and estimated AR(1) coefficient function in the 2nd column for the case $p=1$ and sample size 1000. Red is the true function, black is the estimated curve along with the 95\% pointwise credible bands in green.} 
		\label{AR1}
	\end{figure}

	\begin{figure}[htbp]
		\centering
		\includegraphics[width=120mm]{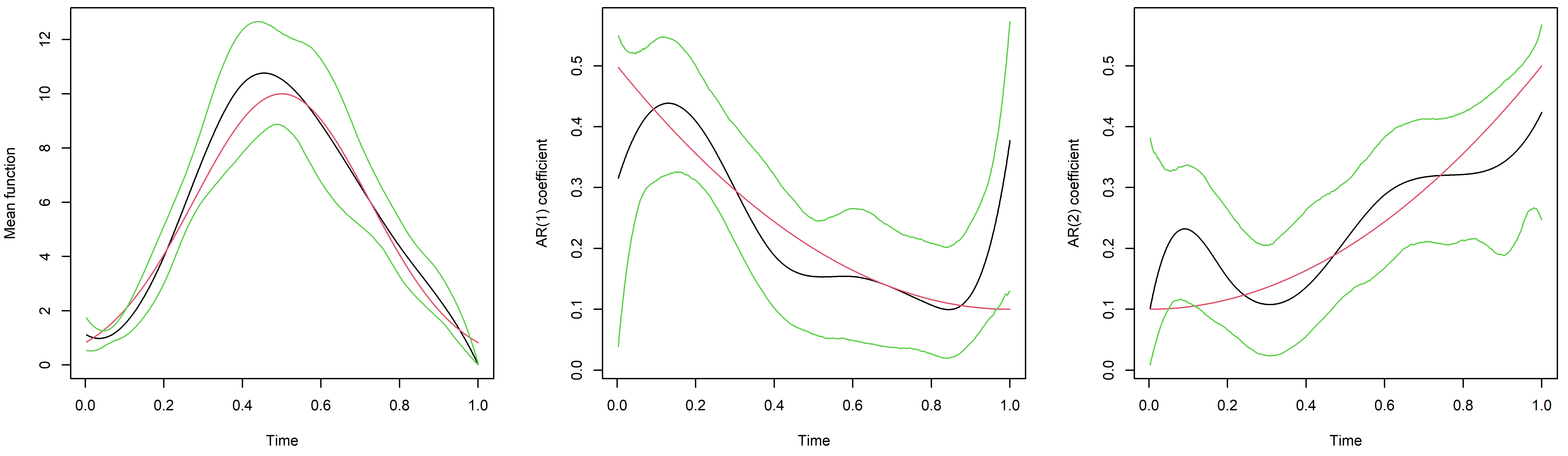}
		\caption{Estimated coefficient functions for the simulation case $p=2$ and sample size 1000. Red is the true function, black is the estimated curve along with the 95\% pointwise credible bands in green.} 
		\label{AR2}
	\end{figure}

	\subsection{Case 2: TVBINGARCH structure}
	\label{TVBINGARCHsec}
	For the tvBINGARCH case, we only consider one simulation settings $p=1, q=1; X_t\sim\textrm{Poisson}(\mu(t/T)+a_1(t/T)X_{t-1} +b_1(t/T)\lambda_{t-1})$. Two different choices for $T$ have been considered, $T=100$ and $200$ and  for $x\in[0,1]$ the coefficient functions are,
	\begin{align*}
	\mu_0(x)=&25\exp\big(-(x-0.5)^2/0.1\big),\\
	a_1(x) =& 0.3(x-1)^2+0.1,\\
	b_1(x) =& 0.1x^{1.5}+0.1
	\end{align*}
	
	\noindent Figure~\ref{TVBINGARCH} compares the estimated functions with the truth for sample size 200 for the model in~\eqref{TVBING} with $p=1,q=1$. The performance of our method is compared to other competing methods in Tables~\ref{TVINGARtab}.
	
	\begin{figure}
		\centering
		\includegraphics[width=100mm]{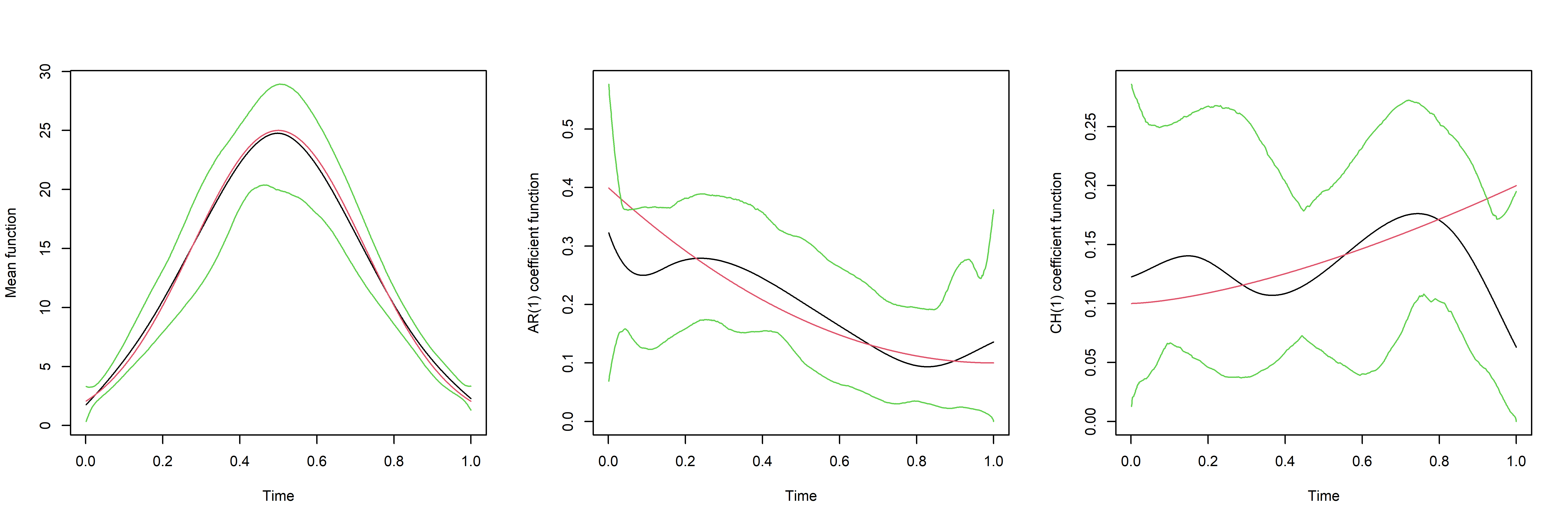}
		\caption{Estimated coefficient functions for the TVBINGARCH(1,1) and sample size 1000. Red is the true function, black is the estimated curve along with the 95\% pointwise credible bands in green.}
		\label{TVBINGARCH}
	\end{figure}
	
	\begin{table}[ht]
		\caption{Average MSE comparison for different sample sizes across different methods when the true model is~\eqref{TVBING} with $p=1, q=1$.}
		\centering
		\begin{tabular}{rrrrrr}
			\hline
			& INGARCH(1,1) & GARMA(1,1) & tvAR(10) & TVBINGARCH(1,1) \\ 
			\hline
			$T=100$ & 27.38 & 27.60 & 24.50 & \textbf{22.83} \\ 
			$T=500$ & 24.02 & 24.07 & 22.90 & \textbf{21.23} \\
			$T=1000$& 23.23 & 23.32 & 22.93 & \textbf{21.19}\\
			\hline
		\end{tabular}
		\label{TVINGARtab}
	\end{table}
	
	Figure~\ref{AR1} to~\ref{TVBINGARCH} shows that our proposed Bayesian method captures the true functions quite well for both of the two simulation experiments. We find that the estimation accuracy improves as the sample size increases. As the sample size grows, the 95\% credible bands are also getting tighter, implying lower uncertainty in estimation. This gives empirical evidence in favor of the estimation consistency which has also been verified theoretically in Section~\ref{theo}. 
	The average mean square error (AMSE) is always the lowest for our method in Tables~\ref{AMSEAR1}, \ref{AMSEAR2} and~\ref{TVINGARtab}. 
	
	\section{COVID-19 spread at NYC}
	\label{application}
	We collect the data of new affected cases for every day from 23rd January to 14th July from an open-source platform \url{ {https://www.kaggle.com/sudalairajkumar/novel-corona-virus-2019-dataset}}. The end date 14th July is chosen as around that time NYC started the process of re-opening. The data on daily new cases are illustrated in Figure~\ref{fig:NYCdata}. We were particularly interested in NYC data as this city remained an epicenter in US for about a month. With the help of government interventions and sustained lock-down, the recovery was significant in about 3 months. Such a time-varying nature of the data motivated us to retrospect as how the mean trend and AR trend behave which can also shed some insight about effects of lockdown or the contagious spread. 
	
	Based on the findings on the incubation of the virus in \cite{lauer2020incubation} and others, it is understood that the symptoms often take some time after the virus affects through contagion. Our idea is to consider different models with varying number of lags for this. We consider TVBARC(1), TVBARC(10) and TVBINGARCH(1,1) here. The results for the TVBARC(1) are illustrated in Figure~\ref{fig:NYC1}. We see that during the spike in daily new cases the function $a_1(\cdot)$ is the highest. Figure~\ref{fig:NYC10} depicts the estimated mean and coefficient functions from a TVBARC(10) model. We find that the estimated $a_1(\cdot)$ functions show a similar trend. On top of that, we see that $a_6(\cdot),a_7(\cdot)$ and $a_8(\cdot)$ have also some effect. Finally we fit our TVBINGARCH(1,1) which might be considered TVBARC with infinite order. Figure~\ref{fig:NYCTVBINGARCH} depicts the estimated functions, the mean $\{\mu(\cdot)\}$, AR(1) $\{a_1(\cdot)\}$ and CH(1) $\{b_1(\cdot)\}$ coefficient functions. In Table~\ref{tab:NYCcompare}, we compare the AMSE scores across different models. For all the models, we consider 12 equidistant knots based on the AMSE scores as discussed in Section~\ref{sim}. 
	
	Figure~\ref{fig:NYC10} suggests that even lag 6, 7, and 8 have some significant contribution. The effect of this lag is suppressed in Figure~\ref{fig:NYC1} and is expressed in terms of $b_1(\cdot)$ of Figure~\ref{fig:NYCTVBINGARCH}. The estimated mean functions also behave similarly for all the three cases. It shows a spike during the rise of daily new cases. After that, it decreases which can talk about successful containment strategies in NYC. More specifically it decreases after around 15 days since the strict implementation of statewide lockdown on 20-$th$ March. This is consistent with what was found in our unsubmitted preprint \citep{roy2020bayesian} through an empirical early-stage analysis of the spread in different cities and countries.
	
	The effect of Lag 6, 7, and 8 can be attributed to the incubation period of the virus. It can also lead to the finding that there was a weekly periodicity which is probably due to shorter testing/administrative facilities being available during the weekend. Note that our choice of fitting an TVBARC(10) model is more general than separately fitting a seasonal/periodic time-series model. Another important finding is coming from the overall trend of $a_1(\cdot)$. It starts to decrease when the number of cases starts going down. However later on it varies around 0.6 can be attributed to the fact that the number of new cases did not vary much and remained around the same level from the middle of May. The credible bands look very small around the mean function which is probably due to the large magnitude of the estimated function. 
	
	\begin{figure}
		\centering
		\includegraphics[width=100mm]{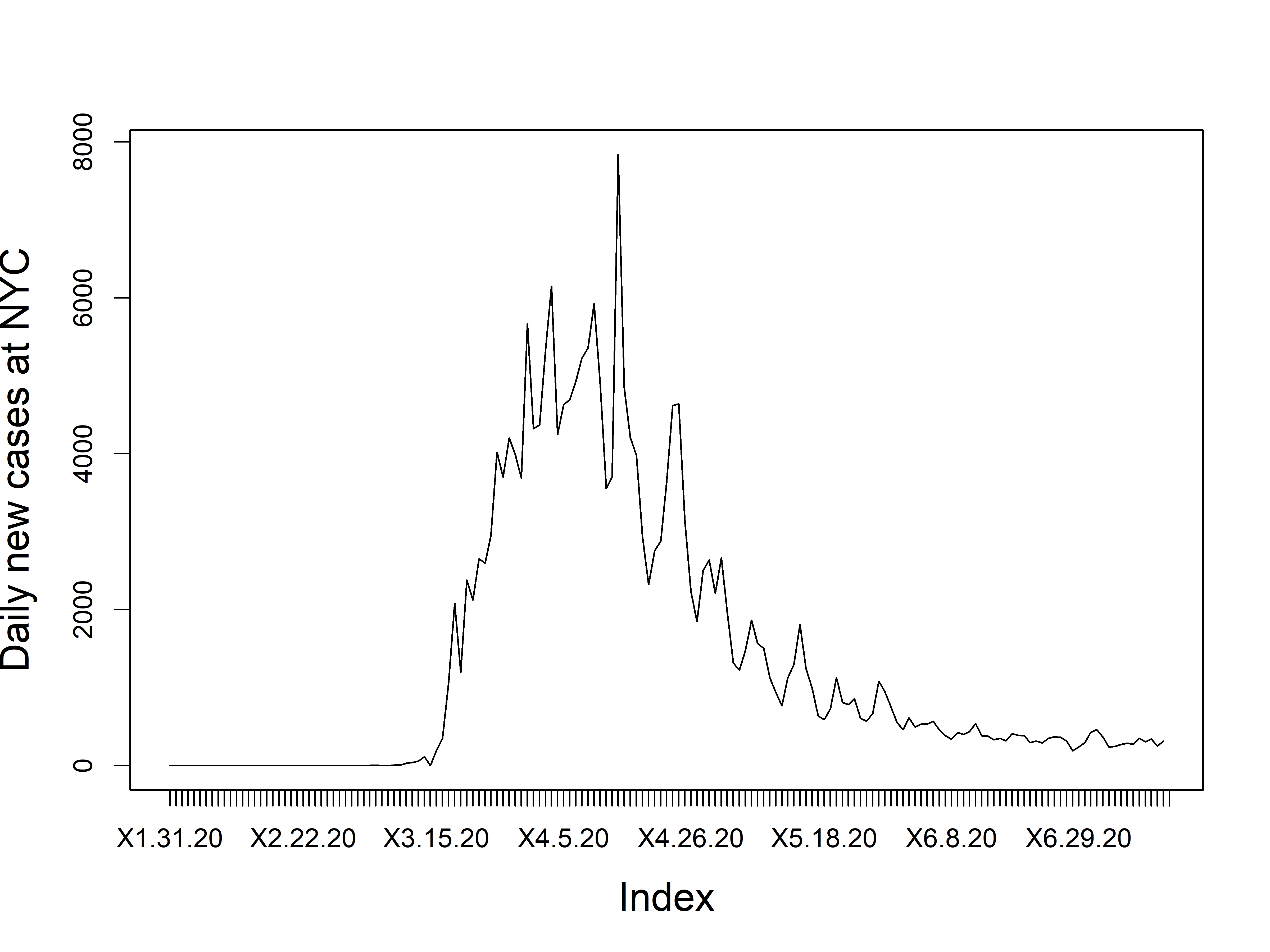}
		\caption{Daily new COVID-19 cases from 31st January to 14th of July recorded at NYC.}
		\label{fig:NYCdata}
	\end{figure}
	
	\begin{figure}[htbp]
		\centering
		\subfigure[NYC-$\mu(\cdot)$ function]{\label{fig:a.1}\includegraphics[width=60mm]{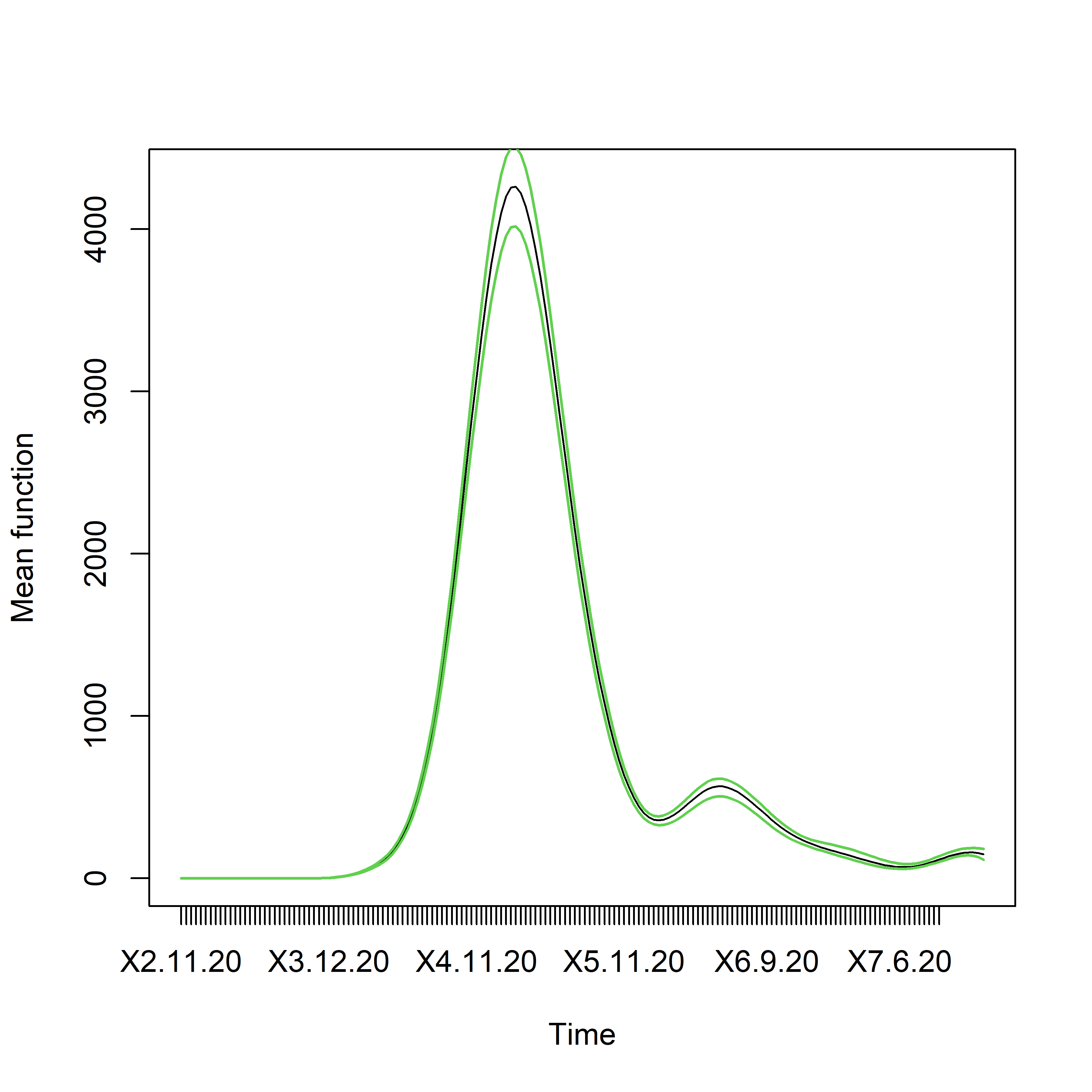}}
		\subfigure[NYC-$a_1(\cdot)$ function]{\label{fig:a.2}\includegraphics[width=60mm]{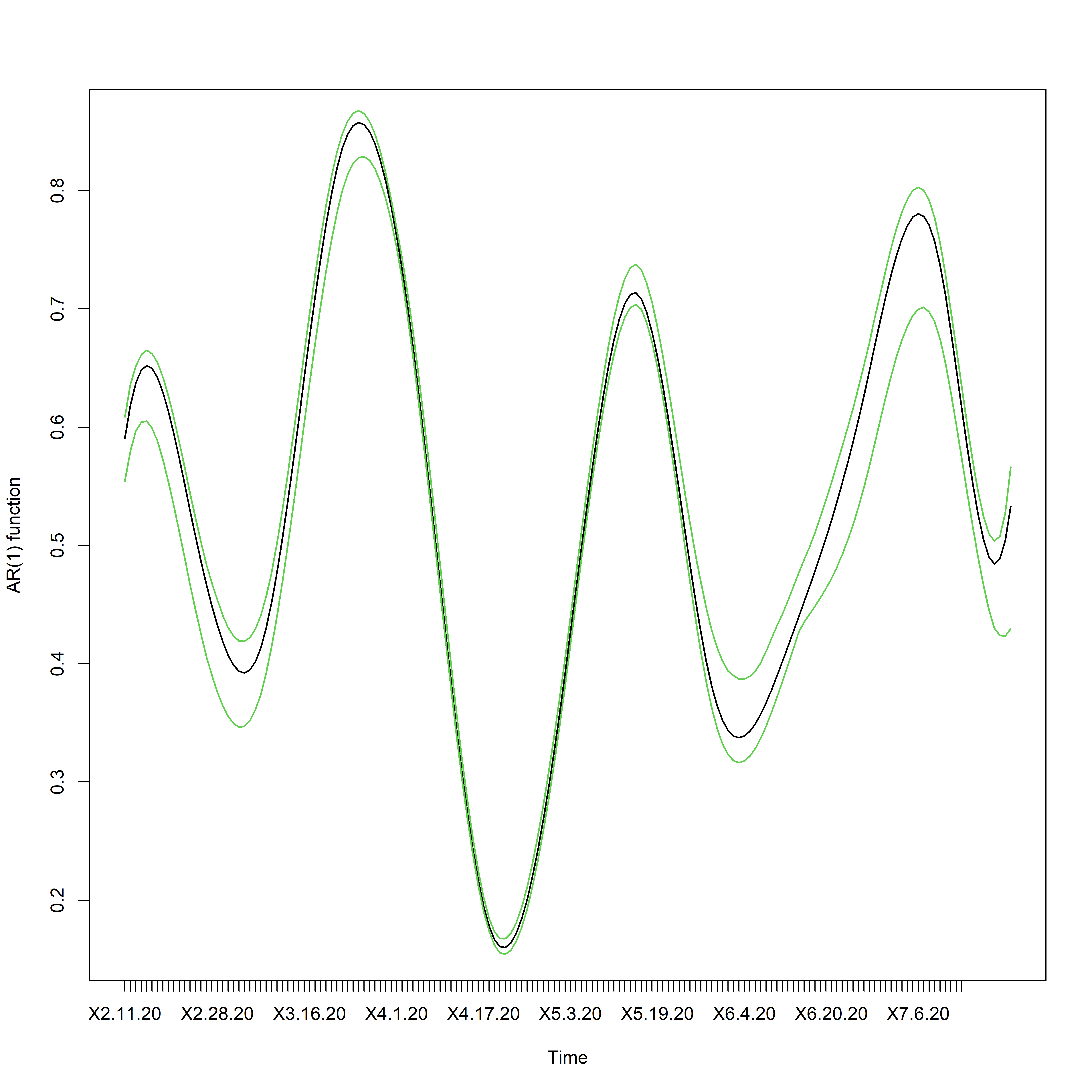}}
		\caption{Estimated mean functions in 1st column and estimated AR coefficient functions in the 2nd column for NYC using TVBARC(1). Black is the estimated curve along with the 95\% pointwise credible bands in green for the mean and AR(1) function.} 
		\label{fig:NYC1}
	\end{figure}
	
	\begin{figure}[htbp]
		\centering
		\subfigure[NYC-$\mu(\cdot)$ function]{\label{fig:a.1}\includegraphics[width=60mm]{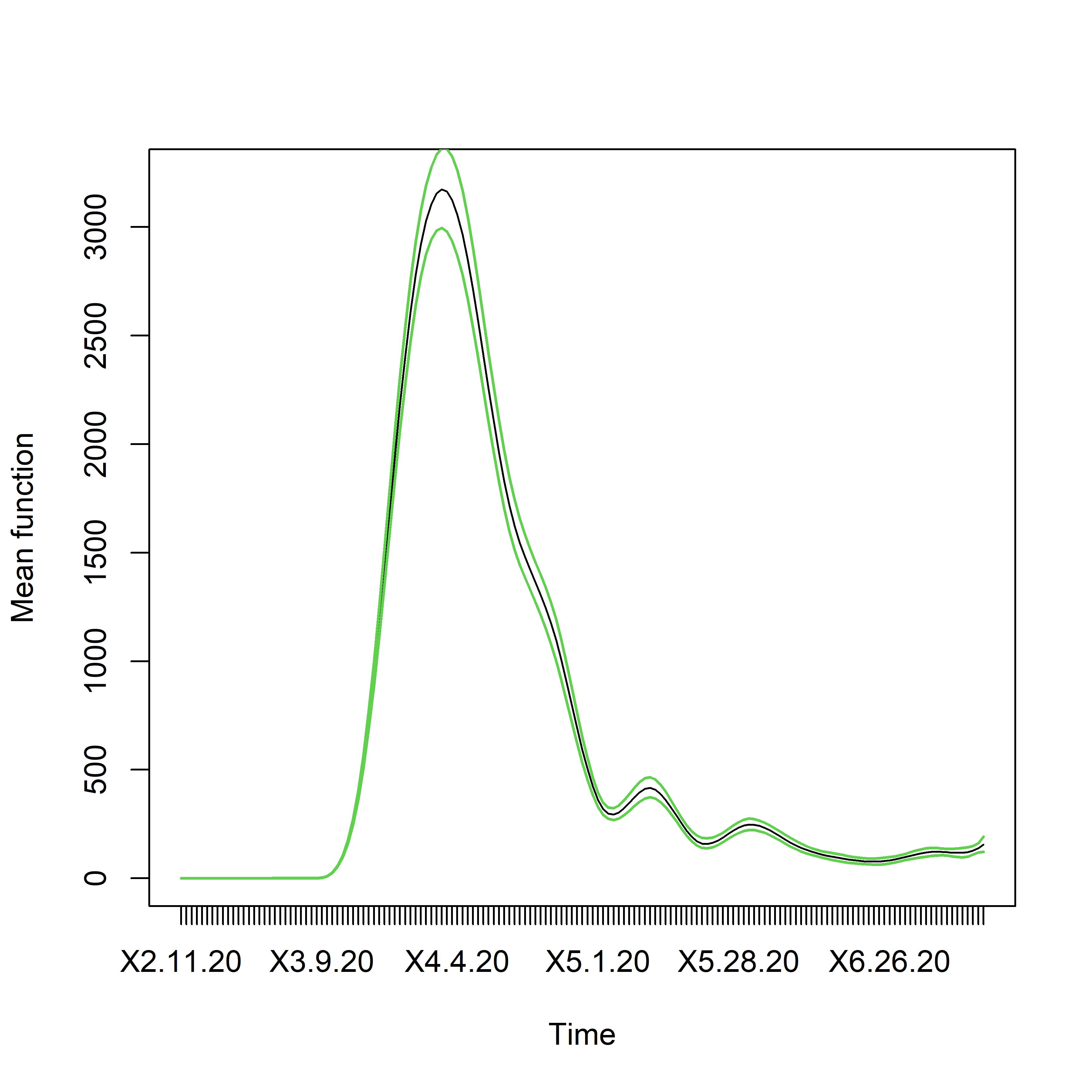}}
		\subfigure[NYC-$a(\cdot)$ functions]{\label{fig:a.2}\includegraphics[width=60mm]{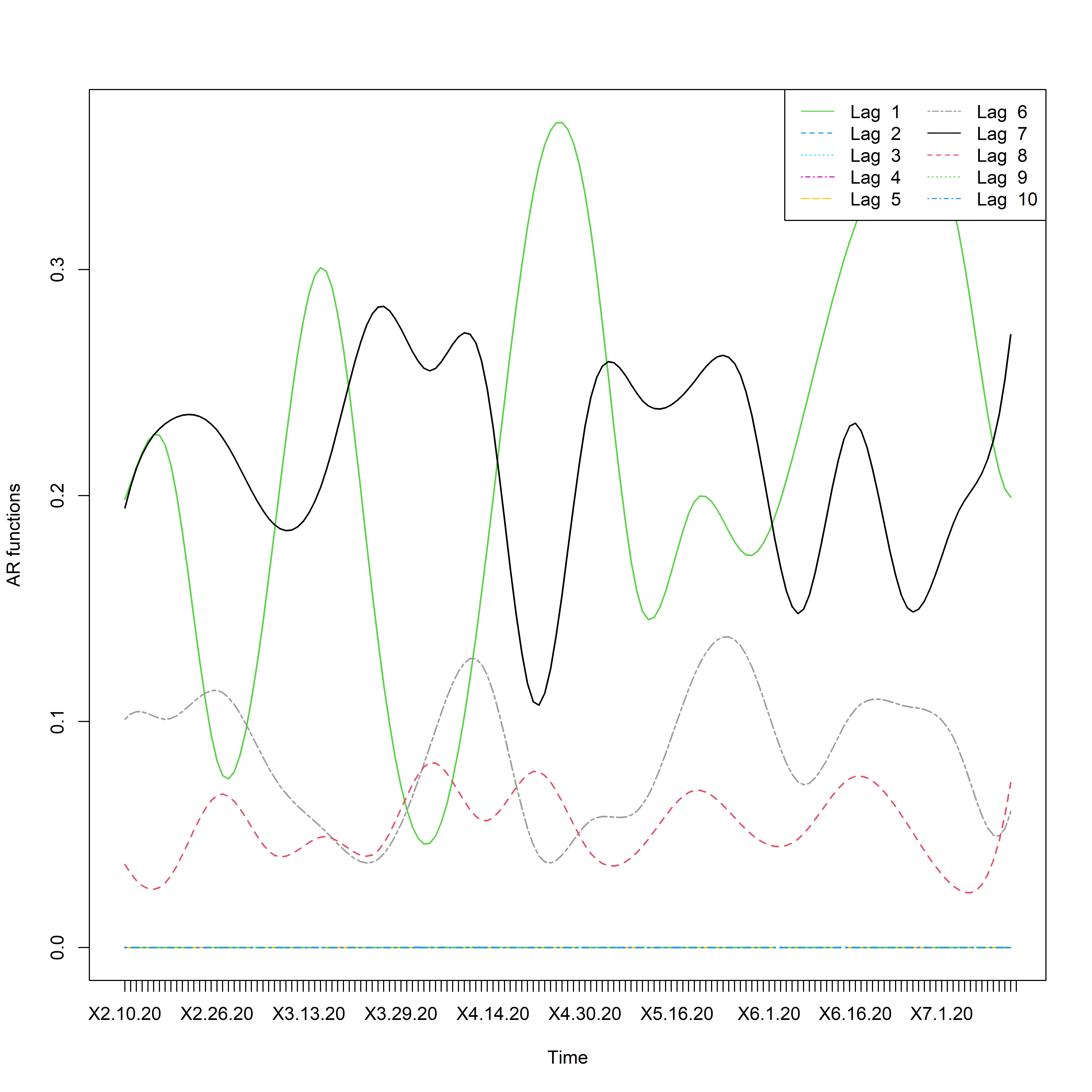}}
		\caption{Estimated mean functions in 1st column and estimated AR coefficient functions in the 2nd column for NYC using TVBARC(10). Black is the estimated curve along with the 95\% pointwise credible bands in green for the mean function.} 
		\label{fig:NYC10}
	\end{figure}
	
	\begin{figure}
		\centering
		\includegraphics[width=120mm]{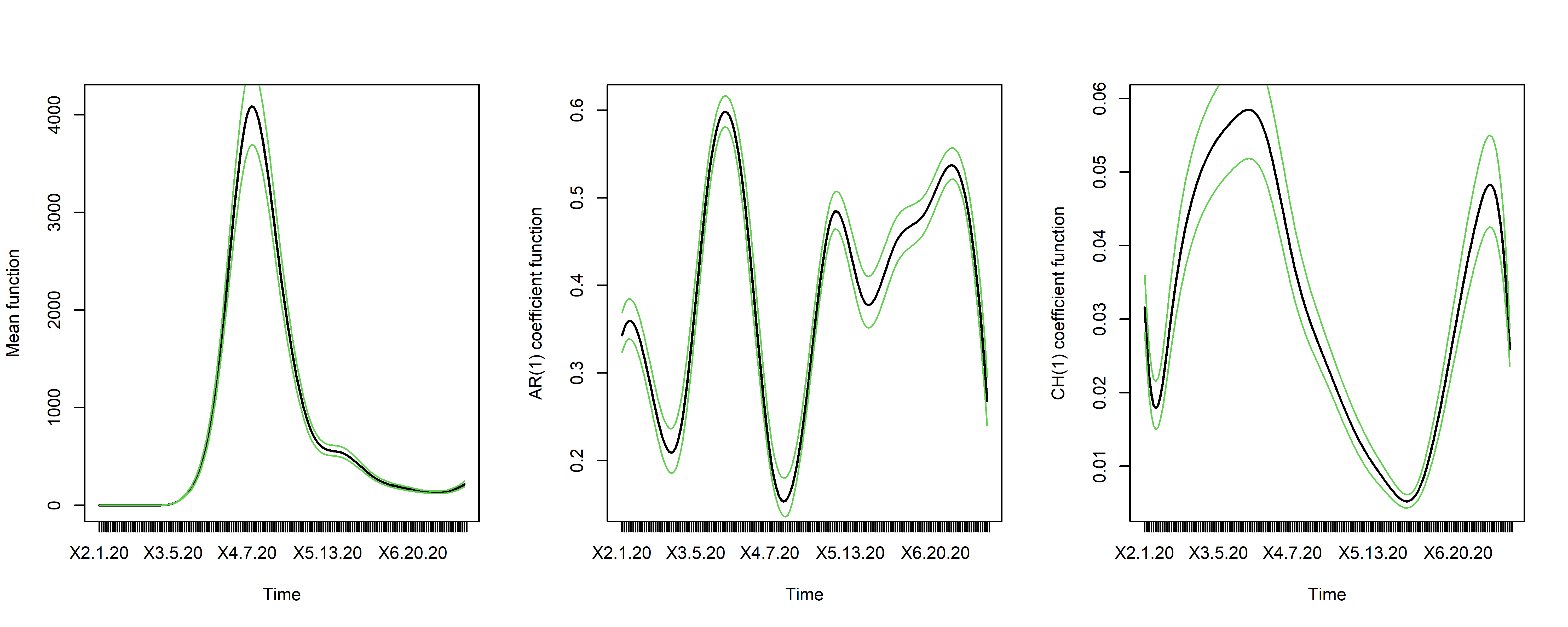}
		\caption{Estimated coefficient functions for the TVBINGARCH(1,1) on NYC data. Black is the estimated curve along with the 95\% pointwise credible bands in green.}
		\label{fig:NYCTVBINGARCH}
	\end{figure}

	\begin{table}[ht]
		\caption{Average MSE comparison for different methods on NYC data.}
		\centering
		\resizebox{.9\textwidth}{!}{
			\begin{tabular}{rrrrrr}
				Method & AMSE & Method & AMSE & Method & AMSE\\
				\hline
				INGARCH(1,1) & 318056.3 & GARMA(10,0) & 1682976.1 & TVBARC(1) & 210258.9\\
				GARMA(1,1) & 329610.1 & tvAR(1) & 338970.6 & TVBARC(10) & 185777.9\\
				AR(10,0) & 1376133.7 & tvAR(10) & 274913.7 & TVBINGARCH(1,1) & 212168.1\\
				\hline
			\end{tabular}
			\label{tab:NYCcompare}}
	\end{table}

	\section{Discussion}
	\label{discussion}
	We propose a time-varying Bayesian autoregressive model for counts (TVBARC) and time-varying Bayesian integer-valued generalized autoregressive conditional heteroskedastic model (TVBINGARCH) with linear link function within Poisson error to study the time series of daily new confirmed cases of COVID-19. We develop a novel hierarchical Bayesian model that satisfies the stability condition for the respective time-varying models and propose an HMC algorithm based MCMC sampling scheme. We also establish posterior contraction rate results of the proposed Bayesian methods. The `R' function with an example code can be found at {\url{https://github.com/royarkaprava/TVBARC}}. Relying on the proposed hierarchical Bayesian model, one can develop a time-varying Bayesian model for positive-valued time-series data too. Our analysis of NYC data shows that there is a time-varying effect of Lag 6, 7, and 8. Some preliminary analysis on COVID data using our model based on the data until April 24 are archived in our unpublished pre-print \cite{roy2020bayesian}. There are some more interesting findings related to significant lags for different countries.
	
	
	The definition of posterior contraction rate we followed involves an diverging sequence $M_T\rightarrow\infty$. If it is possible to replace $M_T$ by a large constant $M$ without changing $\epsilon_T$, the contraction rate then holds in a slightly stronger sense (see Chapter 8, \cite{ghosal2017fundamentals}). Local stationary approximation of the proposed nonstationary process is expected help to establish such result.
	Establishing Bernstein von-Mises type theorem to ensure asymptotic normality of the posterior distribution will also be interesting.
	However, such results are not yet available for the corresponding stationary cases.
	Nevertheless, it is also important direction of future research. 
	
	As future work, it will be interesting to include some country-specific information such as demographic information, geographical area, the effect of environmental time-series, etc in the model. These are usually important factors for the spread of any infectious disease. We can also categorize the different types of government intervention effects to elaborate more on the specific impacts of the same. In the future we wish to analyze the number of deaths, number of recovered cases, number of severe/critical cases, etc. for these diseases as those will hopefully have different dynamics than the one considered here and can provide useful insights about the spread and measures required. For computational ease, we have considered the same level of smoothness for all the coefficient functions. Fitting this model with different levels of smoothness might be able to provide more insights. Lag selection is a difficult task for time-varying auto-regressive models. One potential future direction would be to put sparsity inducing prior to the time-varying coefficient functions in TVBARC for automatic lag detection. Other than building time-varying autoregressive models for count-valued data using the hierarchical structure from this article, one interesting future direction is to extend this model for vector-valued count data. In general, it is difficult to model multivariate count data. There are only a limited number of methods to deal with multivariate count data \citep{besag1974spatial,yang2013poisson,roy2019nonparametric}. Building on these multivariate count data models, one can extend our time-varying univariate AR$(p)$ to a time-varying vector-valued AR$(p)$. On the same note, even though we imposed Poisson assumption for increased model interpretation, in the light of the upper bounds for the KL distance, it is not a necessary criterion and can be applied to a general multiple non-stationary count time-series. Extending some of the continuous time-series invariance results for nonlinear non-stationary and multiple series from \cite{karmakar2020optimal} to a count series regime will be an interesting challenge. Finally, we wish to undertake an autoregressive estimation of the basic reproduction number with the time-varying version of compartmental models in epidemiology.

	\section{Proof of Theorems}\label{sec:proof}
	We study the frequentist property of the posterior distribution is increasing $T$ regime assuming that the observations are coming from a true density $f_0$ characterized by the parameter $\kappa_0$. We follow the general theory of \cite{ghosal2000convergence} to study the posterior contraction rate for our problem. In the Bayesian framework, the density $f$ is itself a random measure and has distribution $\Pi$ which is the prior distribution induced by the assumed prior distribution on $\kappa$. The posterior distribution of a neighborhood $U_T=\{f:d(f,f_0)<\epsilon_T\}$ around $f_0$ given the observation $X^{(T)}=\{X_0,X_1,\ldots,X_T\}$ is
	$$
	\Pi_T(U_T^c|X^{(T)})=\frac{\int_{U_T^c}f(X^{(T)})d \Pi(\kappa)}{\int f(X^{(T)})d\Pi(\kappa)}
	$$

	\subsection{General proof strategy}\label{sec:genproof}
	
	The posterior consistency would hold if above posterior probability almost surely goes to zero in $F_{\kappa_0}^{(T)}$ probability as $T$ goes to $\infty$, where $F_{\kappa_0}^{(T)}$ is the true distribution of $X^{(T)}$. Recall the definition of posterior contraction rate; for a sequence $\epsilon_T$ if $\Pi_T(d(f,f_0)|X^{(T)}\geq M_T\epsilon_T|X^{(T)})\rightarrow 0$ in $F_{\kappa_0}^{(T)}$-probability for every sequence $M_T\rightarrow\infty$, then the sequence $\epsilon_T$ is called the posterior contraction rate. If the assertion is true for a constant $M_T=M$, then the corresponding contraction rate becomes slightly stronger. 
	
	Note that for two densities $f_{0}, f$ characterized by $\kappa_0$ and $\kappa$ respectively, the Kullback-Leibler divergences are given by 
	\begin{gather*}
	KL(\kappa_0, \kappa) = \int f_0\log{\frac{f_0}{f}}=E_{\kappa_0}\left[\log\frac{\IP_{Q_{\kappa_0}}(X_0)\prod_{t=1}^T\IP_{\kappa_0}(X_t|\sF_{t-1},\lambda_0)}{\IP_{Q_{\kappa}}(X_0)\prod_{t=1}^T\IP_{\kappa}(X_t|\sF_{t-1},\lambda_0)}\right].
	\end{gather*}
	
	\noindent Assume that there exists a sieve in parameter space such that $\Pi(W_T^c)\leq \exp(-(C_T+2)T\epsilon_T^2)$ and we have tests $\chi_T$ such that $$\IE_{\kappa_0}(\chi_T)\leq e^{-L_TT\epsilon_T^2/2}\quad \sup_{\kappa\in W_T: d^2(f,f_0)>L_{T}\epsilon_T^2}\IE_{\kappa}(1-\chi_T)\lesssim e^{-L_{T}T\epsilon_T^2}$$ for some $L_{T}>C_T+2$. Say $U_T=\{f:d^2(f,f_0)\leq L_T\epsilon_T^2\}$  and $S_T=\{\int \frac{f(X^T)}{f_0(X^T)}d\Pi(\kappa)\geq\Pi_{T}(\frac{1}{T}KL(\kappa_0,\kappa)<\epsilon_T)\exp(-C_{T}T\epsilon_T^2)\}$. We can bound the posterior probability from above by,
	\begin{align}
	\Pi_T(d(f,f_0)\geq M_T\epsilon_T|X^{(T)}) &\leq \chi_T+(1-\chi_T)\frac{\int_{U_T^c}f(X^T)d \Pi(\kappa)}{\int f(X^{(T)})d\Pi(\kappa)}\nonumber\\
	&=\chi_T+(1-\chi_T)\frac{\int_{U_T^c}\frac{f(X^{(T)})}{f_0(X^{(T)})}d \Pi(\kappa)}{\int \frac{f(X^{(T)})}{f_0(X^{(T)})}d\Pi(\kappa)}\nonumber\\
	&\leq \chi_T+\mathbb{1}\{S_T^c\} + (1-\chi_T)\frac{\int_{U_T^c}\frac{f(X^{(T)})}{f_0(X^{(T)})}d \Pi(\kappa)}{\exp(-C_TT\epsilon_T^2)\Pi_{T}\{\frac{1}{T}KL(\kappa_0,\kappa)<\epsilon_T\}} \nonumber\\
	&\leq \chi_T+\mathbb{1}\{S_T^c\} + \frac{\exp(C_TT\epsilon_T^2)}{\Pi_{T}\{\frac{1}{T}KL(\kappa_0,\kappa)<\epsilon_T\}}(1-\chi_T)\frac{\int_{U_T^c}f(X^{(T)})}{f_0(X^{(T)})}d \Pi(\kappa)
	\end{align}
	Taking expectation with respect to $\kappa_0$, first term go to zero by construction of $\chi_T$. The second term $\IE_{\kappa_0}\mathbb{1}\{S_T^c\}$ goes to zero due to Lemma 8.21 of \cite{ghosal2017fundamentals} for any sequence $C_T\rightarrow \infty$. We would require that $\Pi_{T}\{\frac{1}{T}KL(\kappa_0,\kappa)< \epsilon_T\} \geq \exp(-T\epsilon_T^2)$. Then for the third term,
	\begin{align}
	&\IE_{\kappa_0}\exp((C_T+1)T\epsilon_T^2)(1-\chi_T)\frac{\int_{U_T^c}f(X^{(T)})}{f_0(X^{(T)})}d \Pi(\kappa)=\exp((C_T+1)T\epsilon_T^2)\int_{U_T^c}f(X^{(T)})(1-\chi_T)d\Pi(\kappa)\nonumber\\
	&\quad\leq\exp(C_T+1)T\epsilon_T^2)\left[\int_{U_T^c\cap W_T}f(X^{(T)})(1-\chi_T)d\Pi(\kappa) + \Pi(W_T^c)\right]\nonumber\\&\quad=\exp((C_T+1)T\epsilon_T^2)\left[\sup_{\kappa\in W_T: d^2(f,f_0)>L_{T}\epsilon_T^2}\IE_{\kappa}(1-\chi_T) + \Pi(W_T^c)\right]\lesssim \exp(-T\epsilon_T^2).
	\end{align}
	
	\noindent Thus we need three things to calculate posterior contraction rate.
	\begin{itemize}
		\label{req}
		\item[(i)](Prior mass Condition) We would require $\Pi_{T}\{\frac{1}{T}KL(\kappa_0,\kappa)<\epsilon_T\} \geq \exp(-T\epsilon_T^2)$, 
		\item[(ii)](Sieve) construct the sieve $W_T$ such that $\Pi(W_T^c)\leq \exp(-(C_T+2)T\epsilon_T^2)$ and
		\item[(iii)](Test construction) exponentially consistent tests $\chi_T$.
	\end{itemize}
	We first study the contraction properties with respect to $d^2(f,f_0)=r_T^2(f, f_0)=-\frac{1}{T}\log\int\sqrt{ff_0}$ and then show that the same rate holds for average Hellinger $\frac{1}{T}d_H^2(f,f_0)$. Note that $L_T$ can be taken as $L_T=M_T^2$. With the above general structure, we now proceed to prove individual theorems focusing on the TVBARC and the TVINGARCH cases.

	\ignore{
		The requirement that $\sup_{\kappa_0\in\mathcal{A}}\{F_{\kappa_0}(S_T^c)\}\rightarrow 0$ where $\mathcal{A}$ stands for the true parameter space satisfying the conditions in Assumption 1-3 is ensured if we have 
		\begin{align}
		-\log\Pi\left\{\kappa:KL(\kappa_0,\kappa)\leq T\epsilon_T^2, V(\kappa_0,\kappa)\leq T\epsilon_T^2\right\}\leq T\epsilon_T^2\label{eq:priormasscondi}
		\end{align} due to Lemma 8.10 of \cite{ghosal2017fundamentals}. Also in that case, we get slightly stronger rate with some fixed large constant $M_T=M$.
	}

	\subsection{Proof of Theorem 1}\label{sec:proofthm1}
	For the sake of technical convenience we show our proof for time-varying AR model with 1 lag only. All the proofs go through for higher lags with the same technical tools.
	\subsubsection{KL Support}\label{sec:klbound}

	The likelihood based on the parameter space $\kappa$ is given,
	$
	\IP_{\kappa}(X_0)\prod_{t=1}^T\IP_{\kappa}(X_t|X_{t-1}).
	$ Let $Q_{\kappa, t}(X_t)$ be the distribution of $X_t$ with parameter space $\kappa$.

	We have 
	\begin{align}
	&R=\log\frac{\prod_{t=1}^T\IP_{\kappa_0}(X_t|\sF_{t-1},\lambda_0)}{\prod_{t=1}^T\IP_{\kappa}(X_t|\sF_{t-1},\lambda_0)}\nonumber\\
	&\quad=\sum_{t=1}^T[-\{\mu_0(t/T)-\mu(t/T)\} -\{a_{01}(t/T)-a_{1}(t/T)\}X_{t-1} + X_t\{\log(\mu_0(t/T)+a_{01}(t/T)X_{t-1})\nonumber\\
	&\qquad-\log(\mu(t/T)+a_{1}(t/T)X_{t-1})\} ]\label{eq:KLR}
	\end{align}
	Then $KL(\kappa_0, \kappa)=\IE_{\kappa_0}(R)$. We have in light of MVT,
	\begin{align}
	&|R|\leq\sum_{t=1}^T[|\mu(t/T)-\mu_0(t/T)|+ |a_1(t/T)-a_{01}(t/T)|X_{t-1}\\&\quad+\frac{X_t}{\mu_{*}(t/T)+a_{1*}(t/T)X_{t-1}}\{|\mu(t/T)-\mu_0(t/T)|+ |a_1(t/T)-a_{01}(t/T)|X_{t-1}\}]\nonumber\\
	&\qquad\leq T\|\mu-\mu_0\|_{\infty}+\|a_1-a_{01}\|_{\infty}\sum_{t}X_{t-1} + \|\mu-\mu_0\|_{\infty}/\rho\sum_t X_t + \|a_1-a_{01}\|_{\infty}/\rho\sum_t X_t,
	\end{align}
	under the assumption that $\kappa(\cdot)=(\mu(\cdot),a_1(\cdot), b_1(\cdot))$ and $\kappa_0(\cdot)=(\mu_0(\cdot),a_{10}(\cdot),b_{10}(\cdot))$ are close and also $\kappa_{*}$ is close to both and also in conjunction with Assumption (A.3) to imply $\inf_t a_{1*}(t/T) >\rho$ and Assumption (A.2) which implies $ \IE(X_t) <M_X.$ Then for the first term we use the bound $\mu_{*}(t/T)+a_{1*}(t/T)X_{t-1}>\rho$ and for the second term the bound $\mu_{*}(t/T)+a_{1*}(t/T)X_{t-1}>\rho X_{t-1}$ is used to have $\frac{|\mu(t/T)-\mu_0(t/T)|+ |a(t/T)-a(t/T)|X_{t-1}}{\mu_{*}(t/T)+a_{1*}(t/T)X_{t-1}}\leq \|\mu-\mu_0\|_{\infty}/\rho + \|a_1-a_{01}\|_{\infty}/\rho$ for all $t$. Thus,
	\begin{align}
	\frac{1}{T}\IE(R)\lesssim \|\mu-\mu_0\|_{\infty}+\|a_1-a_{01}\|_{\infty}.\label{eq:KLineq1}
	\end{align}

	\subsubsection{Posterior contraction in terms of average negative log-affinity}
	
	In this section, we focus on the requirements to calculate posterior contraction rate as in Section~\ref{req}.
	We first show posterior consistency in terms of average negative log-affinity which is defined as $r_T^2(f_1,f_2)=-\frac{1}{T}\log\int f_1^{1/2}f_2^{1/2}$ between $f_1$ and $f_2$. Here, we have $f_1=\prod_{i=1}^T P_{\kappa_1}(X_{i}|X_{i-1})$. Then we show that, having $r_T^2(f_1,f_0)\lesssim\epsilon_n^2$ implies that our distance metric $d_{2,T}^2(f_1,f_0)\lesssim \epsilon_n^2$.
	
	\ignore{
		\begin{align*}
		R_t=&-\{\mu_0(t/T)-\mu(t/T)\} -\{a_{01}(t/T)-a_{1}(t/T)\}X_{t-1} \\ &  \quad+ X_t\{\log(\mu_0(t/T)+a_{01}(t/T)X_{t-1})-\log(\mu(t/T)+a_{1}(t/T)X_{t-1})\} \\
		= &\mu(t/T)-\mu_0(t/T)+ \{a(t/T)-a(t/T)\}X_{t-1}\\&\quad+\frac{X_t}{\mu_{*}(t/T)+a_{1*}(t/T)X_{t-1}}\{|\mu(t/T)-\mu_0(t/T)|+ |a(t/T)-a(t/T)|X_{t-1}\}.
		\end{align*} We can show that,
		\begin{align}
		&\mu(t/T)-\mu_0(t/T)+ \{a_1(t/T)-a_{01}(t/T)\}X_{t-1} + X_t\left\{-\frac{\|\mu-\mu_0\|_{\infty}+\|a_1-a_{01}\|_{\infty}}{\rho}\right\} \nonumber\\&\quad \leq R_t\leq \mu(t/T)-\mu_0(t/T)+ \{a_1(t/T)-a_{01}(t/T)\}X_{t-1} + X_t\left\{\frac{\|\mu-\mu_0\|_{\infty}+\|a_1-a_{01}\|_{\infty}}{\rho}\right\}
		\end{align}
		
		\begin{align}
		&\mu(t/T)-\mu_0(t/T)- \|a_1-a_{01}\|_{\infty}M_X - M_X\left\{\frac{\|\mu-\mu_0\|_{\infty}+\|a_1-a_{01}\|_{\infty}}{\rho}\right\} \nonumber\\&\quad \leq \IE(R_t)\leq \mu(t/T)-\mu_0(t/T)+ \|a_1-a_{01}\|_{\infty}M_X + M_X\left\{\frac{\|\mu-\mu_0\|_{\infty}+\|a_1-a_{01}\|_{\infty}}{\rho}\right\}
		\end{align}
		
		For $\|\mu-\mu_0\|_{\infty}$ and $\|a_1-a_{01}\|_{\infty}$ sufficiently small,
		\begin{align}
		& -\|a_1-a_{01}\|_{\infty}(X_{t-1}+M_X) - (X_t+M_X)\left\{\frac{\|\mu-\mu_0\|_{\infty}+\|a_1-a_{01}\|_{\infty}}{\rho}\right\} \nonumber\\&\quad \leq R_t-\IE(R_t)\leq  \|a_1-a_{01}\|_{\infty}(X_{t-1}+M_{X}) + (X_t+M_X)\left\{\frac{\|\mu-\mu_0\|_{\infty}+\|a_1-a_{01}\|_{\infty}}{\rho}\right\}
		\end{align}
		
		Thus,
		$$
		|R_t-\IE(R_t)|\leq  \|a_1-a_{01}\|_{\infty}(X_{t-1}+M_{X}) + (X_t+M_X)\left\{\frac{\|\mu-\mu_0\|_{\infty}+\|a_1-a_{01}\|_{\infty}}{\rho}\right\}
		$$
		
		We have $\IE(X_{t}^2)=\textrm{Var}(X_t)+(\IE(X_t))^2\lesssim M_X+M_X^2=M_{S}$ and $\IE(X_tX_{t-1})=\textrm{Cov}(X_{t}X_{t-1})+\IE(X_t)\IE(X_{t-1})\lesssim M_X\left(1-\frac{M_{\mu}}{M_X}\right)+M_X^2=M_{C}$. Using these bounds, $\IE(\sum_t(X_{t}+M_X)^2)=\textrm{Var}(\sum_t(X_{t}+M_X))+(\IE(\sum_t(X_{t}+M_X)))^2\lesssim TM_S+3TM_X^2=TM_{S_1}$ and similarly $\IE((X_t+M_X)(X_{t-1}+M_X))\lesssim M_C+3M_X^2=M_{C_1}$
		
		**$(\IE(\sum_t(X_{t}+M_X)))^2$ isnt this of the order $T^2$?
		Thus,
		\begin{align*}
		\textrm{Var}(\sum_t R_t)&\lesssim T\Bigg[\|a_1-a_{01}\|_{\infty}^2M_{S_1}+\left\{\frac{\|\mu-\mu_0\|_{\infty}+\|a_1-a_{01}\|_{\infty}}{\rho}\right\}^2M_{S_1}+\\&\quad2\|a_1-a_{01}\|_{\infty}\left\{\frac{\|\mu-\mu_0\|_{\infty}+\|a_1-a_{01}\|_{\infty}}{\rho}\right\}M_{C_1}\Bigg]
		\end{align*}
		
		Then, it is easy to verify that for $\|\mu-\mu_0\|_{\infty}<1,\|a_1-a_{01}\|_{\infty}<1$
		
		\begin{align}
		\max \left\{\frac{1}{T}\IE(R),\frac{1}{T}\textrm{Var}(R)\right\}\lesssim \|\mu-\mu_0\|_{\infty}+\|a_1-a_{01}\|_{\infty},\label{eq:KLineq1}
		\end{align}
	}

	\ignore{
		We provide an upper bound the Kullback Leibler divergence between the parameter space as following:
		\begin{align*}
		&KL(\kappa^T_0, \kappa^T)\\&\quad=\int \IP_{Q_{\kappa_0}}(X_0,\lambda_0)\prod_{t=1}^T \IP_{\kappa_0}(X_t|\sF_{t-1},\lambda_0)\log\frac{\IP_{Q_{\kappa_0}}(X_0)\prod_{t=1}^T\IP_{\kappa_0}(X_t|\sF_{t-1},\lambda_0)}{\IP_{Q_{\kappa}}(X_0)\prod_{t=1}^T\IP_{\kappa}(X_t|\sF_{t-1},\lambda_0)}\prod_{i=0}^TdX_i\\
		&\quad=\int \IP_{Q_{\kappa_0}}(X_0,\lambda_0)\prod_{t=1}^T \IP_{\kappa_0}(X_t|\sF_{t-1},\lambda_0)\sum_{t=1}^T\log\frac{\IP_{\kappa_0}(X_t|\sF_{t-1},\lambda_0)}{\IP_{\kappa}(X_t|\sF_{t-1},\lambda_0)}\prod_{i=0}^TdX_i \\ &\qquad+ \int \IP_{Q_{\kappa_0}}(X_0,\lambda_0)\prod_{t=1}^T \IP_{\kappa_0}(X_t|\sF_{t-1},\lambda_0)\log\frac{\IP_{Q_{\kappa_0}}(X_0)}{\IP_{Q_{\kappa}}(X_0)}\prod_{i=0}^TdX_i\\
		& \leq T\sup_t \I\IE_{\sF_{t-1},\lambda_0}( KL(\IP_{\kappa_0}(X_t|X_{t-1}=y, \lambda_{t-1}),\IP_{\kappa}(X_t|X_{t-1}=y, \lambda_{t-1}))\\ &\qquad\quad + KL(\IP_{Q_{\kappa_0}}(X_0,\lambda_0), \IP_{Q_{\kappa}}(X_0,\lambda_0)),
		\end{align*}
		where $KL$ in the first term denotes the conditional (on $X_{t-1}=y,\lambda_{t-1}$) Kullback-Leibler divergence between the conditional distributions of $X_t$ under $\kappa_0$ and $\kappa$. 
		
		\noindent Take $\kappa$ close to $\kappa_0$ such that $KL(\IP_{Q_{\kappa_0, 0}}(X_0), \IP_{Q_{\kappa,0}}(X_0))$ is bounded, say by one. Then for large $T$,
		\begin{align}
		\lim_{T \to \infty} \frac{KL(\kappa^T_0, \kappa^T)}{T} &\le \sup_t \I\IE_{\sF_{t-1}}( KL(\IP_{\kappa_0}(X_t|X_{t-1}=y, \lambda_{t-1}),\IP_{\kappa}(X_t|X_{t-1}=y, \lambda_{t-1})).
		\end{align}

		\noindent This above result is similar to the first part of Lemma 8.28 of \cite{ghosal2017fundamentals}. Similarly for $V(\kappa_0^T,\kappa^T)=\int \IP_{Q_{\kappa_0}}(X_0,\lambda_0)\prod_{t=1}^T \IP_{\kappa_0}(X_t|\sF_{t-1},\lambda_0)\left[\log\frac{\IP_{Q_{\kappa_0}}(X_0)\prod_{t=1}^T\IP_{\kappa_0}(X_t|\sF_{t-1},\lambda_0)}{\IP_{Q_{\kappa}}(X_0)\prod_{t=1}^T\IP_{\kappa}(X_t|\sF_{t-1},\lambda_0)}\right]^2\prod_{i=0}^TdX_i$, we can show simply mimicking the steps for $KL$
		\begin{align}
		\lim_{T \to \infty} \frac{V(\kappa^T_0, \kappa^T)}{T} &\le T\sup_t \I\IE_{\sF_{t-1}}( V(\IP_{\kappa_0}(X_t|X_{t-1}=y, \lambda_{t-1}),\IP_{\kappa}(X_t|X_{t-1}=y, \lambda_{t-1})).
		\end{align}
		We need to show the following claim:
		\begin{claim}\label{claim2}
			For close $\kappa(\cdot)=(\mu(\cdot),a_1(\cdot), b_1(\cdot))$ and $\kappa_0(\cdot)=(\mu_0(\cdot),a_{10}(\cdot), b_1(\cdot))$, we have
			\begin{align}
			\sup_t & \I\IE_{\sF_{t-1},\lambda_0}( \max\{KL(\IP_{\kappa_0}(X_t|X_{t-1}=y, \lambda_{t-1}),\IP_{\kappa}(X_t|X_{t-1}=y, \lambda_{t-1})),\nonumber\\&\quad V(\IP_{\kappa_0}(X_t|X_{t-1}=y, \lambda_{t-1}),\IP_{\kappa}(X_t|X_{t-1}=y, \lambda_{t-1}))\}) \nonumber\\ & \lesssim \|\mu-\mu_0\|_{\infty}^2+\|a_1-a_{01}\|_{\infty}^2+\|b_1-b_{01}\|^2_{\infty},\label{KLineq1}
			\end{align}
			
		\end{claim}

		\begin{proof}[Proof of Claim \ref{claim2}]: Note that, it is easy to prove that under Assumption (B.2), we have $\IE(\max(X_{t-1},\lambda_{t-1})) <\infty$. Next we use the same upper bound of the KL divergence between two Poisson densities with mean parameters $\lambda_0$ and $\lambda$. For some $\lambda_*$ between $\lambda_0$ and $\lambda$, we have, in the light of MVT, $$KL(\textrm{Poisson}(\lambda_0), \textrm{Poisson}(\lambda))=\lambda_0(\log\lambda_0-\log\lambda)+\lambda-\lambda_0 \leq \frac{(\lambda-\lambda_0)^2}{2\lambda_0}+\frac{|\lambda-\lambda_0|^3}{3\lambda_{*}^2}.$$
			$$V(\textrm{Poisson}(\lambda_0), \textrm{Poisson}(\lambda))=\lambda_0(\log\lambda_0-\log\lambda)^2\leq \frac{(\lambda_0-\lambda)^2}{\lambda_0}+\frac{(\lambda_0-\lambda)^3}{2\lambda_*}+\frac{(\lambda_0-\lambda)^4}{4\lambda_*^2}.$$
			
			\noindent Thus putting $\lambda_0=\mu_0(t/T)+a_{10}(t/T)y+b_{10}(t/T)\lambda_{t-1},\lambda=\mu(t/T)+a_{1}(t/T)y+b_1(t/T)\lambda_{t-1}$, we prove claim 2 by establishing the following upper bound of the term before $\I\IE_{\sF_{t-1}}$ in (\ref{KLineq1}),
			\begin{eqnarray*}
				&&  KL(\IP_{\kappa_0}(X_t|X_{t-1}=y, \lambda_{t-1}),\IP_{\kappa}(X_t|X_{t-1}=y, \lambda_{t-1}))\\
				&\lesssim&  \left[ \frac{3(\mu(\frac{t}{T})-\mu_{0}(\frac{t}{T}))^2+3 (a_1(\frac{t}{T})-a_{10}(\frac{t}{T}))^2y^2+3 (b_1(\frac{t}{T})-b_{10}(\frac{t}{T}))^2\lambda_{t-1}^2}{\mu_0(\frac{t}{T})+a_{10}(\frac{t}{T})y+b_{10}(\frac{t}{T})\lambda_{t-1}}\right] \\
				&&+ \left[\frac{9(|\mu(\frac{t}{T})-\mu_{0}(\frac{t}{T})|)^3+9 (|a_1(\frac{t}{T})-a_{10}(\frac{t}{T})|)^3y^3+9(|b_1(\frac{t}{T})-b_{10}(\frac{t}{T})|)^3\lambda_{t-1}^3}{3(\mu_*(\frac{t}{T})+a_{1*}(\frac{t}{T})y+b_{1*}(\frac{t}{T})\lambda_{t-1})^2}\right] \\
				&\lesssim& (\|\mu-\mu_0\|_{\infty}^2+\|a_1-a_{01}\|_{\infty}^2+\|\mu-\mu_{0}\|_{\infty}^2\\
				&\quad&+\|\mu-\mu_0\|_{\infty}^3+\|a_1-a_{01}\|_{\infty}^3+\|b_1-b_{01}\|_{\infty}^3)\max (y,\lambda_{t-1})\\
				&\lesssim& (\|\mu-\mu_0\|_{\infty}^2+\|a_1-a_{01}\|_{\infty}^2+\|b_1-b_{01}\|_{\infty}^2) \max (y,\lambda_{t-1}).
			\end{eqnarray*}
			In the above derivation, we have used the closeness of $\kappa(\cdot)=(\mu(\cdot),a_1(\cdot), b_1(\cdot))$ and $\kappa_0(\cdot)=(\mu_0(\cdot),a_{10}(\cdot),b_{10}(\cdot))$ multiple times as is and also in conjunction with Assumption (B.3) to imply $\inf_t a_{1*}(t/T) >\rho$ and $\inf_t b_{1*}(t/T) >\rho$ and Assumption (B.2) which implies $ \IE(\max (y_t,\lambda_t)) <M_X.$ 
		\end{proof}
	}
	
	Proceeding with the rest of the proof of Theorem 1, we use the results of B-Splines,
	$\|\mu-\mu_0\|_{\infty}\leq \|\alpha-\alpha_0\|_{\infty},$ where $\alpha = \{\alpha_j\}$ and $\|a_1-a_{10}\|_{\infty}\leq \|\gamma-\gamma_{0}\|_{\infty}$, where $\gamma_{j}=\theta_{1j}M_1,$ such that $\gamma_{j}<1$. The H\"older smooth functions with regularity $\iota$ can be approximately uniformly up to order $K^{-\iota}$ with $K$ many B-splines. Thus we have $\epsilon_T\gtrsim\max\{K_{1T}^{-\iota_1},K_{2T}^{-\iota_2}\}$. 
	
	We need to lower bound the prior probability as required by (i). We have the result~\eqref{eq:KLineq1} and the prior probabilities
	$\Pi(\|\alpha-\alpha_0\|_{\infty}\lesssim\epsilon_T, \|\gamma-\gamma_{0}\|_{\infty}\lesssim\epsilon_T)\gtrsim \epsilon_T^{K_{1T}+K_{2T}}$ based on the discussion of A2 from \cite{shen2015adaptive}. The rate of contraction cannot be better than the parametric rate $T^{-1/2}$, and so $\log (1/\epsilon_T)\lesssim \log T$. Thus (i) requires that in terms of pre-rate $\bar{\epsilon}_T$, we need $(K_{1T}+K_{2T})\log T\lesssim T\bar{\epsilon}_T^2$. 
	
	In our problem, we consider following sieve as required by (ii) \begin{align}
	W_T&=\{K_1,K_2,\alpha,\gamma:K_1\leq K_{1T},K_2\leq K_{2T}, \|\alpha\|_{\infty}\leq A_{T}, \min(\alpha,\gamma)>\rho_T, \gamma\leq 1-A_T/B_T, \nonumber\\&\qquad\lambda_0\leq B_T, A_T<B_T \},
	\end{align}
	where $A_T, B_T$ are at least polynomial in $T$ and $\lambda_0$ is the mean of $X_0$ and $K_T=\max\{K_{1T},K_{2T}\}$. We take $\rho_T\asymp T^{-a}$ with $a<1$, $A_T\asymp T^{a_1}, B_T\asymp T^{a_2}$ with $a_2>a_1$ for technical need. Note that, for $\kappa\in W_T$, we have $\IE_{\kappa}(X_t)<B_T$. We need to choose these bounds carefully so that we have $\Pi(W_T^c)\leq \exp(-(1+C_1)T\epsilon_T^2)$, which depend on tail properties of the prior. We have, $\Pi(W_T^c)= \Pi [K_1>K_{1T}, K_2>K_{2T}, \alpha_{K_{1T}}\in\{x:\inf x > \rho_T, \sup x<  A_T, \gamma_{K_{2T}}\notin\{x:\inf x>\rho_T, \sup x<1-\frac{A_T}{B_T}\}, \lambda_0>B_T]$.
	
	Hence we have,
	$\Pi(W_T^c)\leq\Pi(K_1>K_{1T})+\Pi(K_2>K_{2T})+\Pi\{\alpha_{K_{1T}}\notin[\rho_T, A_T]^{K_{1T}}\}+\Pi\{\gamma_{K_{2T}}\notin\left[\rho_T, 1-\frac{A_T}{B_T}\right]^{K_{2T}}\}+\Pi\{\lambda_0>B_T\}$ where $\alpha_{K_{1T}}$ is the vector of full set of coefficients of length $K_{1T}$ and $\gamma_{K_{2T}}$ is the vector of coefficients of length $K_{2T}$. The quantity $\Pi[\alpha_{K_{1T}}\notin[\rho_T, A_T]^{K_{1T}}$ can be further upper bounded by $K_{1T}\Pi(\alpha_{1}\notin[\rho_T, A_T)])\leq K_{1T}\exp\{-R_1T^{a_3}\}$, for some constant $R_1, a_3>0$ which can be verified from the discussion of the assumption A.2 of \cite{shen2015adaptive} for our choice of prior which exponential. On the other hand, $\Pi\{\gamma_{K_{2T}}\notin\left[\rho_T, 1-\frac{A_T}{B_T}\right]^{K_{2T}}\leq K_{2T}\Pi(\gamma_1\notin[\rho_T, 1-\frac{A_T}{B_T}])\leq K_{2T}\exp\{-R_2T^{a_4}\}$ for some constant $R_2, a_4>0$ which can be verified from the proof of \cite{roy2018high}. The inverse-gamma prior of $\lambda_0$ has exponential tail similar to $\alpha_1$ and thus can be ignored as $K_{1T}$ grows with $T$. Since $B_T>A_T$, the tail of $\lambda_0$ can be upper bounded by tail of $\alpha_1$  

	Hence, $\Pi(W_T^c)\lesssim F_1(K_{1T})+F_2(K_{2T})+(K_{1T}+K_{2T})\exp\{-RT^{a_5}\}$.  The two functions $F_1$ and $F_2$ in the last expression stand for the tail probabilities of the prior of $K_1$ and $K_2$. We can calculate their asymptotic order as, $F_1(x)=\Pi(K_{1}>x)\asymp\exp\{-x(\log x)^{b_{13}}\}$ and $F_2(x)=\Pi(K_{2}>x)\asymp\exp\{-x(\log x)^{b_{23}}\}$. We need $\Pi(W_T^c)\lesssim\exp\{-(1+C_T)T\epsilon_T^2\}$. Hence, we calculate pre-rate from the following equation for some sequence $H_T\rightarrow\infty$,
	
	\begin{align}
	\label{eq:rate sieve}
	{K}_{1T} (\log T)^{b_{13}}+{K}_{2T}(\log T)^{b_{23}}\gtrsim H_TT\bar\epsilon_T^2, \quad \log(K_{1T}+K_{2T})+H_TT\bar\epsilon_T^2\lesssim T^{a_5}.
	\end{align}

	Now, we construct test $\chi_T$ such that $$\IE_{\kappa_0}(\chi_T)\leq e^{-L_TT\epsilon_T^2/2}\quad \sup_{\kappa\in W_T: r_T^2(\kappa,\kappa_0)>L_T\epsilon_T^2}\IE_{\kappa}(1-\chi_T)\lesssim e^{-L_TT\epsilon_T^2}$$ for some $L_T>C_T+2$. 
	
	To construct the test as required in (iii), we first construct the test for point alternative $H_0:\kappa=\kappa_0$ vs $H_1:\kappa=\kappa_1$. The most powerful test for such problem is Neyman-Pearson test $\phi_{1T}=\mathbb{1}\{f_1/f_0\geq 1\}$. For $r_T^2> L_T\epsilon_T^2$, we have
	
	$$\IE_{\kappa_0}\phi_{1T}=\IE_{\kappa_0}(\sqrt{f_1/f_0}\geq 1)\leq \int \sqrt{f_1f_0}\leq \exp(-L_TT\epsilon_T^2),$$ 
	
	$$\IE_{\kappa_1}(1-\phi_{1T})=\IE_{\kappa_1}(\sqrt{f_0/f_1}\geq 1)\leq \int \sqrt{f_0f_1}\leq \exp(-L_TT\epsilon_T^2).$$
	
	It is natural to have a neighborhood around $\kappa_1$ such the Type II error remains exponentially small for all the alternatives in that neighborhood under the test function $\phi_{1T}$. By Cauchy-Schwarz inequality, we can write that $$\IE_{\kappa}(1-\phi_{1T})\leq \{\IE_{\kappa_1}(1-\phi_{1T})\}^{1/2}\{\IE_{\kappa_1}(f/f_1)^2\}^{1/2}.$$In the above expression, the first factor already exponentially decaying. The second factor can be allowed to grow at most of order $e^{cT\epsilon_T^2}$ for some positive small constant $c$. We show that $\IE_{\kappa_1}(f/f_1)^2$ is bounded for every $\kappa$ such that
	$$\|\mu-\mu_1\|_{\infty}\leq\frac{\sqrt{\rho_T}}{\sqrt{T}}, \|a-a_1\|_{\infty}\leq\frac{\sqrt{\rho_T}}{\sqrt{TB_T}}. $$
	
	\noindent We have, in the light of AM-GM inequality, $$\IE_{\kappa_1}(f/f_1)^2=\int\frac{f^2}{f_1^2}f_1=\int\frac{f}{f_1}f=\IE_{\kappa}\frac{f}{f_1}=\IE_{\kappa}\prod_{t=1}^T\frac{f(X_t|X_{t-1})}{f_1(X_t|X_{t-1})}\leq\frac{1}{T}\sum_{t=1}\IE_{\kappa}\left(\frac{f(X_t|X_{t-1})}{f_1(X_t|X_{t-1})}\right)^T$$
	\ignore{
		For $p=\textrm{Poi}(\lambda)$ and $p_1=\textrm{Poi}(\lambda_1)$. we have $\IE_{p_1}(p/p_1)^2=\sum_{y=1}^{\infty}\frac{1}{y!}e^{\lambda_1-2\lambda}\lambda^{2y}/\lambda_{1}^y=\exp[\frac{(\lambda-\lambda_1)^2}{\lambda_1}]$. Take $\lambda(t/T)=\mu(t/T)+a_1(t/T)X_{t-1}$ and $\lambda_1=\mu_1(t/T)+a_{11}(t/T)X_{t-1}$. For the parameters in $W_T$, we have $\lambda(t/T)>\rho_T$ or $X_{t-1}\rho_T$. Thus, $\exp[\frac{(\lambda-\lambda_1)^2}{\lambda_1}]\leq\exp[\frac{2\{\mu(t/T)-\mu_1(t/T)\}^2+2\{a_1(t/T)-a_{11}(t/T)\}^2X_{t-1}^2}{\lambda_1}]$

		For $p=\textrm{Poi}(\lambda)$ and $p_1=\textrm{Poi}(\lambda_1)$. we have $\IE_{p_1}(p/p_1)^2=\sum_{y=1}^{\infty}\frac{1}{y!}e^{\lambda_1-2\lambda}\lambda^{2y}/\lambda_{1}^y=\exp[\frac{(\lambda-\lambda_1)^2}{\lambda_1}]$ and $\int \frap{p}/{p_1}=\exp[(\lambda-\lambda_1) + \frac{\lambda}{\lambda_1}]\leq \exp[|\lambda-\lambda_1|(1+\frac{1}{\lambda_1}) + 1]$. For the parameters in $W_T$, we have $\lambda(t/T)>\rho_T$. 
		
		Using this result we get that,
		\begin{align}
		&\IE_{\kappa_1}(f/f_1)^2= \int f^2/f_1
		\end{align}

		\noindent Next, use AM-GM inequality $\IE_{f}\frac{f}{f_1}=\IE_{f}\prod_{t=1}^T\frac{f(X_t|X_{t-1})}{f_1(X_t|X_{t-1})}\leq\frac{1}{T}\sum_{t=1}\IE_{f}\left(\frac{f(X_t|X_{t-1})}{f_1(X_t|X_{t-1})}\right)^T$
		\begin{align}
		&\IE_{\kappa_1}(f/f_1)^2= \IE_{f}\frac{f}{f_1} \leq\frac{1}{T}\sum_{t=1}^T\IE_{f} \left(\frac{f(X_t|X_{t-1})}{f_1(X_t|X_{t-1})}\right)^T\leq\sup_t\IE_{f} \left(\frac{f(X_t|X_{t-1})}{f_1(X_t|X_{t-1})}\right)^T
		\end{align}

	}
	
	\noindent Towards uniformly bounding the summand in the above display, we  write \begin{align}
	\IE_{\kappa} \left(\frac{f(X_t|X_{t-1})}{f_1(X_t|X_{t-1})}\right)^T&=\IE_{X_{t-1},\kappa}\sum_{X_t=0}^{\infty}\frac{\{f(X_t|X_{t-1})\}^{T}}{\{f_1(X_t|X_{t-1})\}^T}f(X_t|X_{t-1})\nonumber\\
	&=\IE_{X_{t-1},\kappa}\exp[-T(\lambda-\lambda_1)-\lambda]\sum_{X_t=0}^{\infty}\left(\frac{\lambda^{T+1}}{\lambda_1^T}\right)^{X_t}/X_t! \nonumber \\
	&=\IE_{X_{t-1},\kappa}\exp[-T(\lambda-\lambda_1)-\lambda+\frac{\lambda^{T+1}}{\lambda_1^T}] \nonumber \\
	&=\IE_{X_{t-1},\kappa}\exp\Bigg[-T\{\mu(t/T)-\mu_1(t/T)\}-T\{a_1(t/T)-a_{11}(t/T)\}X_{t-1}\nonumber\\
	&\quad\quad\quad\quad\quad -\mu(t/T) - a_1(t/T)X_{t-1}+\frac{(\mu(t/T)+a_1(t/T)X_{t-1})^{T+1}}{(\mu_1(t/T)+a_{11}(t/T)X_{t-1})^{T}}\Bigg]\label{eq:tobdexp}.
	\end{align}
	where, $\lambda=\mu(t/T)+a_1(t/T)X_{t-1}$,$\lambda_1=\mu_1(t/T)+a_{11}(t/T)X_{t-1}$ and $\IE_{X_{t-1},\kappa}$ denotes unconditional expectation over $X_{t-1}$ under the density $f$ with parameter $\kappa$. Let us define $r_1=\|\mu-\mu_1\|_{\infty}$ and $r_2=\|a_1-a_{11}\|_{\infty}$
	
	\ignore{
		\noindent Towards uniformly bounding the summand in the above display, we  write \begin{align}
		&\IE_{\kappa} \left(\frac{f(X_t|X_{t-1})}{f_1(X_t|X_{t-1})}\right)^T=\IE_{X_{t-1},\kappa}\sum_{X_t=0}^{\infty}\frac{\{f(X_t|X_{t-1})\}^{T}}{\{f_1(X_t|X_{t-1})\}^T}f(X_t|X_{t-1})\nonumber\\
		&=\IE_{X_{t-1},\kappa}\exp\Bigg[-T\{\mu(t/T)-\mu_1(t/T)\}-T\{a_1(t/T)-a_{11}(t/T)\}X_{t-1}-\mu(t/T) - a_1(t/T)X_{t-1}\nonumber\\
		&\quad + \frac{(\mu(t/T)+a_1(t/T)X_{t-1})^{T+1}}{(\mu_1(t/T)+a_{11}(t/T)X_{t-1})^{T}}\Bigg]\nonumber\\
		&\quad\times\sum_{X_t=0}^{\infty}\exp\left[-\frac{(\mu(t/T)+a_1(t/T)X_{t-1})^{T+1}}{(\mu_1(t/T)+a_{11}(t/T)X_{t-1})^{T}}\right]\left(\frac{(\mu(t/T)+a_1(t/T)X_{t-1})^{T+1}}{(\mu_1(t/T)+a_{11}(t/T)X_{t-1})^{T}}\right)^{X_t}/X_t!\nonumber\\
		&=\IE_{X_{t-1},\kappa}\exp\Bigg[-T\{\mu(t/T)-\mu_1(t/T)\}-T\{a_1(t/T)-a_{11}(t/T)\}X_{t-1}-\mu(t/T) - a_1(t/T)X_{t-1}\nonumber\\
		&\quad + \frac{(\mu(t/T)+a_1(t/T)X_{t-1})^{T+1}}{(\mu_1(t/T)+a_{11}(t/T)X_{t-1})^{T}}\Bigg]\label{eq:tobdexp}.
		\end{align}
		Here, $\IE_{X_{t-1},\kappa}$ denotes unconditional expectation over $X_{t-1}$ under the density $f$ with parameter $\kappa$.
		We can write
		$\sum_{X_t=0}^{\infty}\frac{\{f(X_t|X_{t-1})\}^{T}}{\{f_1(X_t|X_{t-1})\}^T}f(X_t|X_{t-1})=\sum_{X_t=0}^{\infty}\exp[-T(\lambda-\lambda_1)-\lambda](\frac{\lambda^{T+1}}{\lambda_1^T})^{X_t}/X_t!=\sum_{X_t=0}^{\infty}\exp[-T(\lambda-\lambda_1)-\lambda+(\frac{\lambda^{T+1}}{\lambda_1^T})]\exp[-\frac{\lambda^{T+1}}{\lambda_1^T}](\frac{\lambda^{T+1}}{\lambda_1^T})^{X_t}/X_t!=\exp[-T(\lambda-\lambda_1)-\lambda+\frac{\lambda^{T+1}}{\lambda_1^T}]$.
	}
	
	Assuming $\mu(t/T)-\mu_1(t/T)$ and $a_1(t/T)-a_{11}(t/T)$ very small, we can write
	\begin{align}
	&\left[ \frac{(\mu(t/T)+a_1(t/T)X_{t-1})^{T+1}}{(\mu_1(t/T)+a_{11}(t/T)X_{t-1})^{T}}\right]\nonumber\\
	&=\left\{1+\frac{\mu(t/T)-\mu_1(t/T)+(a_1(t/T)-a_{11}(t/T))X_{t-1}}{\mu_1(t/T)+a_{11}(t/T)X_{t-1}}\right\}^T(\mu(t/T)+a_1(t/T)X_{t-1})\nonumber\\
	&\approx\left\{1+T\frac{\mu(t/T)-\mu_1(t/T)+(a_1(t/T)-a_{11}(t/T))X_{t-1}}{\mu_1(t/T)+a_{11}(t/T)X_{t-1}}\right\}(\mu(t/T)+a_1(t/T)X_{t-1})\label{eq:impeq}
	\end{align}
	For the above approximation to hold, we need $\frac{\mu(t/T)-\mu_1(t/T)+(a_1(t/T)-a_{11}(t/T))X_{t-1}}{\mu_1(t/T)+a_{11}(t/T)X_{t-1}}$ to be small. To verify that, observe that $$\left|\frac{\mu(t/T)-\mu_1(t/T)+(a_1(t/T)-a_{11}(t/T))X_{t-1}}{\mu_1(t/T)+a_{11}(t/T)X_{t-1}}\right|\leq \frac{r_1}{\rho_T}+\frac{r_2}{\rho_T}=\frac{1}{\sqrt{T\rho_T}}(1+\frac{1}{\sqrt{B_T}}).$$ As we have $\rho_T=T^{-a}$ with $a<1$, it follows directly. 
	Thus ~\eqref{eq:tobdexp} before $\IE_{X_{t-1},\kappa}$ applying on ~\eqref{eq:impeq} becomes 
	\begin{align}
	&\exp\left[ \frac{[T\{\mu(\frac{t}{T})-\mu_1(\frac{t}{T})\}+T\{a_1(\frac{t}{T})-a_{11}(\frac{t}{T})\}X_{t-1}][\{\mu(\frac{t}{T})-\mu_1(\frac{t}{T})\}+\{a_1(\frac{t}{T})-a_{11}(\frac{t}{T})\}X_{t-1}]}{\mu_1(\frac{t}{T})+a_{11}(\frac{t}{T})X_{t-1}}\right]\nonumber\\
	&\leq \exp[Tr_1^2/\rho_T+2Tr_1r_2/\rho_T+Tr_2^2X_{t-1}/\rho_T]\label{eq:exbd}
	\end{align}
	
	\noindent The bound in~\eqref{eq:exbd} is obtained by applying a combination of the following inequalities $\mu(t/T)+a_1(t/T)X_{t-1}>\rho_{T}$ or $>\rho_{T} X_{t-1}$, $|\mu(t/T)-\mu_1(t/T)|<r_1$ and $ |a_1(t/T)-a_{11}(t/T)|<r_2$. Taking $q=Tr_2^2/\rho_T$, last part becomes $\IE(e^{qX_{t-1}})$ after taking exectation over \eqref{eq:exbd}. We have $\IE(e^{qX_0})=e^{\lambda_0(e^q-1)}<e^{B_T(e^q-1)}=e^{Q}$ for $Q=B_T(e^q-1)\implies (e^q-1) = Q/B_T,$ $B_T$ is the upper bound for $\lambda_0$ in the sieve). We will show $\IE(e^{qX_1})<Q$ under the above choice of $r_1$ and $r_2$. Then by recursion it holds for all $t$. We use the result $e^q-1\leq 2q$ for $q<1$. 
	
	\ignore{
		\begin{eqnarray}
		\IE(e^{qX_1})&=&\IE_{X_0}\sum_{X_1=0}^{\infty}e^{qX_1}\exp\{-\mu(1)-a_1(1)X_{0}\}(\mu(1)+a_1(1)X_{0})^{X_1}/X_1!\\&=&\IE_{X_0}\exp\{(\mu(1)+a_1(1)X_0)(e^q-1)\}\\&=&\exp\{\mu(1)(e^q-1)\}\IE_{X_0}\exp\{a_1(e^q-1)X_0\}\\&=&\exp\{\mu(1)(e^q-1)+\lambda_0(e^{(e^q-1)a_1(1)-1})\}\\&=&\exp\{\mu(1)Q/B_T+\lambda_0(e^{Qa_1(1)/B_T}-1)\}.
		\end{eqnarray}
	}
	\noindent With $\lambda_1(X_0)= \mu(1)+ a_1(1)X_0$, we have $$\IE( e^{qX_1} )= \IE( \IE( e^ {qX_1}|X_0)) = E ( e^{\lambda_1(X_0) (e^q-1)})=e^{(e^q-1)\mu(1)}e^{\lambda_0(e^{(e^q-1)a_1(1)}-1)}$$ 
	
	\noindent Then choose sieve parameters such that $Qa_1/B_T=a_1(1)(e^q-1)\leq 2a_{1}(1)q$ is very small which is ensured as $q$ is very small. Then $\mu(1)Q/B_T+\lambda_0(e^{Qa_1(1)/B_T}-1)\approx Q\mu(1)/B_T+\lambda_0(\frac{Qa_1(1)}{B_T}) \leq Q\{\mu(1)/B_T + a_1(1)\}<Q$ as within the sieve $\mu(1)/B_T + a_1(1) < A_T/B_T + (1-A_T/B_T)=1$. Hence, $\IE(e^{qX_1}) < e^Q$. Recursively, for all $t$, we can show $\IE(e^{qX_t})<e^Q$.
	
	Our primary goal of showing $\IE_{\kappa_1}(f/f_1)^2<\infty$ can be fulfilled if $Q$ is a constant, independent of $T$. To ensure $Q$ is independent of $T$ we need $B_T(e^q-1)$ is constant. It suffices to make $qB_T$ constant as $qB_T<B_T(e^q-1)<2qB_T$. Thus, for $r_2\leq\frac{\sqrt{\rho_T}}{\sqrt{TB_T}}$ and in the light of  \eqref{eq:exbd} $r_1\leq\frac{\sqrt{\rho_T}}{\sqrt{T}}$ we have $\IE_{\kappa_1}\left(\frac{f}{f_1}\right)^2$ bounded.

	The test function $\chi_T$ satisfying exponentially decaying Type I and Type II probabilities is then obtained by taking maximum over all tests $\phi_{jT}$'s for each ball, having above radius. Thus $\chi_T=\max_j\phi_{jT}$. Type I and Type II probabilities are given by $P_0(\chi_T)\leq \sum_{j}P_0\phi_{jT}\leq D_{T}P_0\phi_{jT}$ and $\sup_{\kappa\in W_T: r_T^2(\kappa,\kappa_0)>L_T\epsilon_T^2}P(1-\chi_T)\leq \exp(-TL_T\epsilon_T^2)$. Hence, we need to show that $\log D_{T}\lesssim T\epsilon_T^2$, where $D_T$ is the required number of balls of above radius needed to cover our sieve $W_T$. We have 
	\begin{align}
	\log D_T&\leq \log D(r_1, \|\alpha\|_{\infty}\leq A_T, \min(\alpha)>\rho_T, \|\cdot\|_{\infty})+\log D(r_2, \|\gamma\|_{\infty}\leq 1-\frac{A_T}{B_T}, \min(\gamma)>\rho_T, \|\cdot\|_{\infty}) \nonumber\\
	&\leq K_{1T}\log(3K_{1T}A_T/r_1)+K_{2T}\log(3K_{2T}/r_2)
	\end{align}
	
	\noindent Given our choices of $A_T, B_T$ and $\rho_T$, the two radii $r_1$ and $r_2$ are some fractional polynomials in $T$. Thus $\log D_T\lesssim (K_{1T}+K_{2T})\log T$, which is required to be $\lesssim T\epsilon_T^2$ as in the prior mass condition due to (i).
	
	Based on~\eqref{eq:rate sieve}, we have $\bar{K}_{1T}\asymp T^{1/(2\iota_1+1)}(\log T)^{-1/(2\iota_1+1)}$, $K_{2T}\asymp T^{1/(2\iota_2+1)}(\log T)^{-1/(2\iota_2+1)}$ and a pre-rate $\bar{\epsilon}_T=\max\bigg\{T^{-\iota_1/(2\iota_1+1)} (\log T)^{\iota_1/(2\iota_1+1)},T^{-\iota_2/(2\iota_2+1)} (\log T)^{\iota_2/(2\iota_2+1)}\bigg\}$. The actual rate will be slower that pre-rate. Now, the covering number condition, prior mass conditions and basis approximation result give us $(K_{1T}+K_{2T})\log T\lesssim T\epsilon_T^2$ and $\epsilon_T\gtrsim\max\{K_{1T}^{-\iota_1},K_{2T}^{-\iota_2}\}$.
	Combining all these conditions, we would require $K_{1T}\asymp T^{1/(2\iota_1+1)}(\log T)^{2\iota_1/(2\iota_1+1)-b_{13}}$, $K_{2T}\asymp T^{1/(2\iota_2+1)}(\log T)^{2\iota_2/(2\iota_2+1)-b_{23}}$. Hence we calculate the posterior contraction rate as $\epsilon_T$ equal to \begin{align*}
	\max\bigg\{&T^{-\iota_1/(2\iota_1+1)} (\log T)^{\iota_1/(2\iota_1+1)+(1-b_{13})/2},T^{-\iota_2/(2\iota_2+1)} (\log T)^{\iota_2/(2\iota_2+1)+(1-b_{23})/2}\bigg\}.
	\end{align*}
	\subsubsection{Posterior contraction in terms of average Hellinger}
	We can write Reyni divergence as $r_T^2=-\frac{1}{T}\log\int\sqrt{f_0f_1}=-\frac{1}{T}\log \IE_{\kappa_0}\sqrt{\frac{f_1}{f_0}}$. We need to show $r_T^2\lesssim\epsilon_T^2$ implies that $d_{2,T}^2(\kappa_0,\kappa)\lesssim \epsilon_T^2$ as $\epsilon_T$ goes to zero. 
	
	If $r_T^2\leq\epsilon_T^2$, we have $\left(\IE_{\kappa_0}\sqrt{\frac{f_1}{f_0}}\right)^{-1/T}\leq\exp(\epsilon_T^2)$ which implies for small $\epsilon_T^2$, we have $\left(\IE_{\kappa_0}\sqrt{\frac{f_1}{f_0}}\right)^{1/T}\geq 1-\epsilon_T^2$. By Cauchy-Squarz inequality $\left(\int\sqrt{f_0f_1}\right)^2\leq \int f_0\int f = 1$. Thus we have,
	
	$$
	1-\epsilon_T^2\leq \left(\IE_{\kappa_0}\sqrt{\frac{f_1}{f_0}}\right)^{1/T}\leq 1,
	$$
	
	Since $d_{H}^2(f_1,f_0)=2(1-\IE_{\kappa_0}\sqrt{\frac{f_1}{f_0}})$
	$$\left(\IE_{\kappa_0}\sqrt{\frac{f_1}{f_0}}\right)^{1/T}=\left\{1-\left(1-\IE_{\kappa_0}\sqrt{\frac{f_1}{f_0}}\right)\right\}^{1/T}\approx1-\frac{1}{2T}d_{H}^2(f_1,f_0).$$ Thus $\frac{1}{T}d_{H}^2(f_1,f_0)\lesssim\epsilon_T^2$. Thus it is consistent under average Hellinger distance.

	\subsection{Proof of Theorem 2}\label{sec:proofthm2}
	
	The proof will follow similar path as in the previous section. Thus we just specifically touch upon the parts that require different treatment. We can rewrite history of the INGARCH process as $\{\sF_{t-1},\sG_{t-1}\}=\{\sF_{t-1},\lambda_0\}$. For the INGARCH case, the likelihood based on the parameter space $\kappa$ is different from above and is given by,
	$
	\IP_{\psi_0}(X_0,\lambda_0)\prod_{t=1}^T \IP_{\psi}(X_t|\sF_{t-1},\lambda_0).
	$ Since all the steps are similar for the proof of Theorem 2, we only provide a outline. First to bound KL by the sup norm distances among functions, we need to tackle $|b_{11}(t/T)\lambda_{1t}-b_{01}(t/T)\lambda_{0t}|$. For this term we have 
	\begin{eqnarray}\label{eq:blambda}
	|b_{11}(t/T)\lambda_{1t}-b_{01}(t/T)\lambda_{0t}|\leq \lambda_{0t}\|b_{11}-b_{01}\|_{\infty}+\max_t{b_{11}(t)}|\lambda_{1t}-\lambda_{0t}|.
	\end{eqnarray}
	When $\psi_1$ is near $\psi_0$, we have for all $t$
	\begin{align*}
	|\lambda_{1t}-\lambda_{0t}|&\leq\|\mu_1-\mu_{0}\|_{\infty} + X_{t-1}\|a_{11}-a_{01}\|_{\infty}+(1-\frac{M_{\mu}}{M_{X}})|\lambda_{1,t-1}-\lambda_{0,t-1}| + \lambda_{0,t-1}|b_{11}-b_{01}|_{\infty}
	\end{align*}
	as we can upper bound $\max_t{b_{11}(t)}$ by $(1-\frac{M_{\mu}}{M_{X}})$ since $\psi_1$ is close to $\psi_0$. We have
	\begin{align*}
	&\sum_{t=1}^{T-1}\frac{M_{\mu}}{M_{X}}|\lambda_{1t}-\lambda_{0t}|+|\lambda_{1T}-\lambda_{0T}|\\&\quad\leq T\|\mu_1-\mu_{0}\|_{\infty} + \sum_{t}X_{t-1}\|a_{11}-a_{01}\|_{\infty} +(1-\frac{M_{\mu}}{M_{X}})|\lambda_{10}-\lambda_{00}|+ \sum_t\lambda_{0,t-1}|b_{11}-b_{01}|_{\infty}
	\end{align*}
	
	As $M_{\mu}<M_X$,
	\begin{align*}
	\sum_{t=1}^T|\lambda_{1t}-\lambda_{0t}|\leq & \frac{M_X}{M_{\mu}}\LARGE\{T\|\mu_1-\mu_{0}\|_{\infty} + \sum_{t}X_{t-1}\|a_{11}-a_{01}\|_{\infty} \\&\quad+(1-\frac{M_{\mu}}{M_{X}})|\lambda_{10}-\lambda_{00}|+ \sum_t\lambda_{0,t-1}|b_{11}-b_{01}|_{\infty}\LARGE\}.
	\end{align*}
	which implies,
	\begin{align}
	\IE\sum_{t=1}^T|\lambda_{1t}-\lambda_{0t}|\leq & \frac{M_X}{M_{\mu}}\LARGE\{T\|\mu_1-\mu_{0}\|_{\infty} + TM_X\|a_{11}-a_{01}\|_{\infty}\nonumber \\&\quad+(1-\frac{M_{\mu}}{M_{X}})|\lambda_{10}-\lambda_{00}|+ TM_X|b_{11}-b_{01}|_{\infty}\LARGE\}\label{eq:KLbdIN}.
	\end{align}
	Using the definition of $R$ as in~\eqref{eq:KLR}, we have
	\begin{align}
	&|R|\leq\sum_{t=1}^T\left[|\lambda_{1t}-\lambda_{0t}|+\frac{X_t}{\mu_{*}(t/T)+a_{1*}(t/T)X_{t-1}}|\lambda_{1t}-\lambda_{0t}|\right]
	\end{align}
	The first part follows directly. For the second part as $\psi_1$ and $\psi_0$ are close $$\sum_t\IE\left(\IE\left(\frac{X_t}{\lambda_{*t}}|\lambda_{1t}-\lambda_{0t}|\mathrel{\stretchto{\mid}{4ex}}\mathcal{F}_t\right)\right)\leq \sum_t\frac{M_X}{\rho}\IE(|\lambda_{1t}-\lambda_{0t}|)=\frac{M_X}{\rho}\IE(\sum_t|\lambda_{1t}-\lambda_{0t}|).$$
	Thus $\IE(\frac{R}{T})$ can again be bounded by sup-norm differences in functions as before and $|\lambda_{10}-\lambda_{00}|$ using~\eqref{eq:KLbdIN}. Next, we need to construct a sieve and construct tests.
	We consider similar sieve 
	\begin{align}
	W_T&=\{K_1,K_2,K_3\alpha,\gamma_1, \gamma_2:K_1\leq K_{1T},K_2\leq K_{2T},K_3\leq K_{3T}, \|\alpha\|_{\infty}\leq A_{T}, \min(\alpha,\gamma_1,\gamma_2)>\rho_T, \nonumber\\&\qquad\max{\gamma_1}+\max{\gamma_2}\leq 1-A_T/B_T, \lambda_0\leq B_T\},
	\end{align}
	as in the previous problem. Within the sieve, we have $ \IE_{\IE_{t-1}}(\max (X_t,\lambda_t)) <B_T.$ Here the extra terms such as $K_{3}$ stands for number of basis in $b_1(t)$ and the vectors $\gamma_1$ and $\gamma_2$ correspond to the B-spline coefficients of the functions $a_1(t)$ and $b_1(t)$ respectively. Also note that we now have a lower bound for $A_T$ for technical need.  We take $\rho_T \approx T^{-a}$ with $a<1$, $A_T=B_T(1-\exp(\log T/T)\rho_T), B_T \approx T^{a_2}$ for sufficiently large $T$ such that $\exp(\log T/T)\rho_T<1$. Within the sieve again we use a variant of above inequality. Note that within the sieve $\IE(X_t)\leq B_T$ and $\IE(\lambda_t)\leq B_T$. 
	
	We have that,
	\begin{align}
	|\lambda_{1t}-\lambda_{t}|&\leq\|\mu_1-\mu\|_{\infty} + X_{t-1}\|a_{11}-a_{1}\|_{\infty}+(1-\frac{A_T}{B_T})|\lambda_{1,t-1}-\lambda_{t-1}| + \lambda_{t-1}|b_{11}-b_{01}|_{\infty}\label{eq:ineq1}
	\end{align}
	
	and also,
	
	\begin{align*}
	\frac{|\lambda_{t}-\lambda_{1t}|}{\lambda_{t}}&\leq\frac{1}{\rho_T}\|\mu-\mu_{1}\|_{\infty} + \frac{1}{\rho_T}\|a_{1}-a_{11}\|_{\infty}+\frac{1-A_T/B_T}{\rho_T}\frac{|\lambda_{t-1}-\lambda_{1,t-1}|}{\lambda_{t-1}} + \frac{1}{\rho_T}\|b_{1}-b_{11}\|_{\infty}
	\end{align*}
	
	By recursion, 
	\begin{align}
	\frac{|\lambda_{t}-\lambda_{1t}|}{\lambda_{t}}\leq\frac{G_T^t-1}{(G_T-1)\rho_T}\left[\|\mu-\mu_{1}\|_{\infty}+\|a_{1}-a_{11}\|_{\infty}+\|b_{1}-b_{11}\|_{\infty}\right] + \frac{G_T^{t-1}}{\rho_T}|\lambda_{0}-\lambda_{01}|,
	\end{align}
	where $G_T=\frac{1-A_T/B_T}{\rho_T}>1$. Since RHS is increasing in $t$ and we only need to find a bound for $t=T$. If $A_T,B_T$ and $\rho_T$ are chosen in such a way that $G_T\asymp\exp(\log T/T)$, then $G_T^T\asymp T$. Based on that $r_1$, $r_2, r_3$ and $r_4$ can be chosen. For sufficiently large $T(>1/a)$ we have $(1-\exp(\log T/T)\rho_T)<1$. Let us assume that $\|\mu-\mu_1\|_{\infty}=r_1, \|a-a_1\|_{\infty}=r_2,\|b-b_1\|_{\infty}=r_3, |\lambda_0-\lambda_{01}|=r_4$. Then for $r_i\leq\frac{\rho_T}{T^{1+a_3}}$, we have that $\frac{|\lambda_{t}-\lambda_{1t}|}{\lambda_{t}}\leq 1/T^{a_3}$ for all $t$ with $a_3>0$. The choice of $a_3$ is shown later. Next goal is to find the radii for which $\IE_{\psi}\left(\frac{f}{f_1}\right)^2$ is bounded. Similar steps as before first give us $\IE_{\psi_1}\left(\frac{f}{f_1}\right)^2\leq \frac{1}{T}\sum_{t=1}\IE_{\psi}\left(\frac{f(X_t|\mathcal{F}_{t-1},\lambda_0)}{f_1(X_t|\mathcal{F}_{t-1},\lambda_0)}\right)^T$ and then the following,
	$$\IE_{\psi} \left(\frac{f(X_t|\mathcal{F}_{t-1},\lambda_0)}{f_1(X_t|\mathcal{F}_{t-1},\lambda_0)}\right)^T\approx\IE_{\psi}\exp\frac{T(\lambda_{1t}-\lambda_{t})(\lambda_{1t}-\lambda_{t})}{\lambda_{t}}\leq \IE_{\psi}\exp(T^{1-a^3}|\lambda_{1t}-\lambda_{t}|)\leq \IE_{\psi}\exp\frac{\lambda_{t}}{T^{2a_3-1}}.$$
	We have by Jensen's inequality, $\IE_{\psi}\exp\left[\frac{\lambda_{t}}{T^{2a_3-1}}\right]\leq \IE_{\psi}\exp\left[\frac{X_{t}}{T^{2a_3-1}}\right]$ as $\lambda_t=\IE_{\psi}(X_t|\mathcal{F}_{t-1},\lambda_0)$.
	We can again show by induction that within the sieve $\IE(e^{qX_t})<e^Q$ for some constant $Q$ following similar argument with $q=T^{1-2a_3}$. We again need $qB_T$ independent of $T$. Hence our choice for $a_3$ will be $a_3=\frac{1+a_2}{2}>1/2$. Thus $q$ is small for sufficiently large $T$ and hence $e^q-1\approx q$. We have from MGF of Poisson,
	\begin{align}
	\IE(e^{qX_t})=\IE(\exp\{\lambda_t(e^q-1)))&\approx\IE(\exp(\mu(t)q+a_1(t)X_{t-1}q)+b_1(t)\lambda_{t-1}q\}\nonumber\\&= \IE(\IE_{t-1}(\exp\{\mu(t)q+a_1(t)X_{t-1}q\})(\exp\{b_1(t)\IE_{t-1}(X_{t-1})q\}))\nonumber\\&\quad\leq\IE(\IE_{t-1}(\exp\{\mu(t)q+a_1(t)X_{t-1}q\})\IE_{t-1}(\exp\{b_1(t)X_{t-1}q\}))\nonumber\\&\qquad\leq\IE(\IE_{t-1}(\exp\{\mu(t)q+a_1(t)X_{t-1}q+b_1(t)X_{t-1}q\}))\nonumber\\&\qquad=\IE(\exp\{\mu(t)q+(a_1(t)+b_1(t))X_{t-1}q\}),\label{recrel}
	\end{align}
	by first Jensen's inequality as $\lambda_t=\IE_{\psi}(X_t|\mathcal{F}_{t-1},\lambda_0)$ and positive correlation between $\exp\{a_1(t)X_{t-1}q\}$ and $\exp\{b_1(t)X_{t-1}q\}$ under the expectation $\IE_{t-1}$. For two positively correlated random variables $Y$ and $Z$ under the sample space, we have $E(YZ) > E(Y)E(Z)$. Now using this recurrence result \eqref{recrel} of $\IE(e^{qX_t})$, we again arrive at similar type of bounds for $r_1\leq\frac{\sqrt{\rho_T}}{\sqrt{T}}, r_2\leq\frac{\sqrt{\rho_T}}{\sqrt{TB_T}}$ to ensure that $\IE(e^{qX_t})<e^Q$ for some constant $Q$ for all $t$. We also need that $r_4\asymp r_1, r_3\asymp r_2$, where $\asymp$ means asymptotically equivalent. Finally we need $r_1\leq\min\{\frac{\sqrt{\rho_T}}{\sqrt{T}},\frac{\rho_T}{T^{1+a_3}}\}$ and $r_2\leq\min\{\frac{\sqrt{\rho_T}}{\sqrt{TB_T}},\frac{\rho_T}{T^{1+a_3}}\}$ and $r_4\asymp r_1, r_3\asymp r_2$. These radii are also of polynomial order in $T$.  Rest of the pieces of the proof follow similar arguments as before.
	
	\ignore{
		
		This implies using recursion $$\sup_t \frac{|\lambda_{1t}-\lambda_{t}|}{\lambda_{t}}\leq \frac{B_T}{B_T\rho_T+A_T-B_T}\left\{\|\mu_1-\mu_{0}\|_{\infty} + \|a_{11}-a_{1}\|_{\infty} + |b_{11}-b_{1}|_{\infty}\right\}$$ which} 
	

	\ignore{
		\\subsection{Posterior contraction in terms of emperical $\ell_2$}
		
		We write the following,
		$$
		1-\epsilon_T^2\leq \left(\int\sqrt{f_0f_1}\right)^{1/T}\leq \int(\sqrt{f_0f_1})^{1/T}\leq \frac{1}{T}\sum_{t=1}^T\int \sqrt{f_{0}(X_t|X_{t-1})f_{1}(X_t|X_{t-1})}\leq 1 (?),
		$$
		
		Let $Q_{\kappa, t}(X_t)$ be the distribution of $X_t$ with parameter space $\kappa$. We provide the upper bound the Kullback Leibler divergence between the parameter space as following:
		\begin{align*}
		&KL(\kappa^T_0, \kappa^T)\\&\quad=\int \IP_{Q_{\kappa_0,0}}(X_0)\prod_{t=1}^T \IP_{\kappa_0}(X_t|X_{t-1})\log\frac{\IP_{Q_{\kappa_0,0}}(X_0)\prod_{t=1}^T\IP_{\kappa_0}(X_t|X_{t-1})}{\IP_{Q_{\kappa,0}}(X_0)\prod_{t=1}^T\IP_{\kappa}(X_t|X_{t-1})}\prod_{i=0}^TdX_i\\
		&\quad=\int \IP_{Q_{\kappa_0,0}}(X_0)\prod_{t=1}^T \IP_{\kappa_0}(X_t|X_{t-1})\sum_{t=1}^T\log\frac{\IP_{\kappa_0}(X_t|X_{t-1})}{\IP_{\kappa}(X_t|X_{t-1})}\prod_{i=0}^TdX_i \\ &\qquad+ \int \IP_{Q_{\kappa_0,0}}(X_0)\prod_{t=1}^T \IP_{\kappa_0}(X_t|X_{t-1})\log\frac{\IP_{Q_{\kappa_0, 0}}(X_0)}{\IP_{Q_{\kappa,0}}(X_0)}\prod_{i=0}^TdX_i\\
		&\leq T\sup_t\int KL(\IP_{\kappa_0}(X_t|X_{t-1}=y),\IP_{\kappa}(X_t|X_{t-1}=y))Q_{\kappa_0,t-1}(y)dy\\ &\qquad\quad + KL(\IP_{Q_{\kappa_0, 0}}(X_0), \IP_{Q_{\kappa,0}}(X_0)),
		\end{align*}
		where $KL(\IP_{\kappa_0}(X_t|y),\IP_{\kappa}(X_t|y))$ denotes the conditional (on $X_{t-1}=y$) Kullback-Leibler divergence between the conditional distributions of $X_t$ under $\kappa_0$ and $\kappa$ and 
		$Q_{\kappa_0,t}(X_t=z)=\int \IP_{Q_{\kappa_0,0}}(X_0)\IP_{\kappa_0}(X_t=z|X_{t-1})\prod_{l=1}^{t-1} \IP_{\kappa_0}(X_l|X_{l-1})dX_0dX_{1}\ldots dX_{t-1}$.
		\ignore{
			given $X_{t-1}=y$

			We have
			$$
			\sum_{t=1}^T\log\frac{\IP_{\kappa_0}(X_t|X_{t-1})}{\IP_{\kappa}(X_t|X_{t-1})}\leq T\sup_t \log\bigg(\frac{\IP_{\kappa_0}(X_t|X_{t-1})}{\IP_{\kappa}(X_t|X_{t-1})}\bigg).
			$$}
		
		Take $\kappa$ close to $\kappa_0$ such that $KL(\IP_{Q_{\kappa_0, 0}}(X_0), \IP_{Q_{\kappa,0}}(X_0))$ is bounded, say by one. Then for large $T$,
		\begin{align}
		\lim_{T \to \infty} \frac{KL(\kappa^T_0, \kappa^T)}{T} &\le \sup_t\int_y KL(\IP_{\kappa_0}(X_t|y),\IP_{\kappa}(X_t|y))Q_{\kappa_0,t-1}(y)dy.
		\end{align}

		\noindent This above result is similar to the first part of Lemma 8.28 of \cite{ghosal2017fundamentals}. 
		\ignore{
			The assumptions(A) leads to $\IE_{\kappa_0}(X_t) < M_X$ for all $t$. This implies by Markov inequality, $\IP(X_t>2M_X)<1/2$.
		}
		We need to show the following claim:
		\begin{claim}\label{claim1}
			For close $\kappa(\cdot)=(\mu(\cdot),a_1(\cdot))$ and $\kappa_0(\cdot)=(\mu_0(\cdot),a_{10}(\cdot))$, we have
			\begin{align}
			\sup_t\int_y KL(\IP_{\kappa_0}(X_t|y),\IP_{\kappa}(X_t|y))Q_{\kappa_0,t-1}(y)dy &\lesssim \|\mu-\mu_0\|_{\infty}^2+\|a_1-a_{01}\|_{\infty}^2,\label{KLineq}
			\end{align}
			
		\end{claim}
		
		\ignore{
			To proof above assertion, for a fixed $t$, we break the integral in two sub-intervals $(0,L]$ and $(L,\infty)$. Choose $L$ based on $\kappa_0$, $\kappa$ and $M_X$ such that $\mu_0(t/T)+La_{01}(t/T)$ is close to $\mu(t/T)+La_1(t/T)$ and both are greater than $1$. Then we can bound both of two integrals by the RHS using the fact that $\IE(X_{t-1})<M_X$ on applying Taylor series expansion in the first integral and using the result $|\log(x)-\log(y)|\leq |x-y|^2$ for $x>1$ in the second part. 
		}
		
		\begin{proof}[Proof of Claim \ref{claim1}]: To show the above, first we establish an upper bound of the KL divergence between two Poisson densities with mean parameters $\lambda_0$ and $\lambda$. For some $\lambda_*$ between $\lambda_0$ and $\lambda$, we have, in the light of MVT, $$KL(\textrm{Poisson}(\lambda_0), \textrm{Poisson}(\lambda))=\lambda_0(\log\lambda_0-\log\lambda)+\lambda-\lambda_0 \leq \frac{(\lambda-\lambda_0)^2}{2\lambda_0}+\frac{|\lambda-\lambda_0|^3}{3\lambda_{*}^2}.$$
			
			\noindent Thus putting $\lambda_0=\mu_0(t/T)+a_{10}(t/T)y,\lambda=\mu(t/T)+a_{1}(t/T)y$, we have following upper bound for the left hand side of (\ref{KLineq}),
			\begin{eqnarray*}
				&&\sup_t\int_y KL(\IP_{\kappa_0}(X_t|y),\IP_{\kappa}(X_t|y))Q_{\kappa_0,t-1}(y)dy \\
				&\lesssim& \sup_t \int_y \left[ \frac{2(\mu(\frac{t}{T})-\mu_{0}(\frac{t}{T}))^2+2 (a_1(\frac{t}{T})-a_{10}(\frac{t}{T}))^2y^2}{\mu_0(\frac{t}{T})+a_{10}(\frac{t}{T})y}\right]Q_{\kappa_0,t-1}(y)dy \\
				&&+\sup_t \int_y \left[\frac{4(|\mu(\frac{t}{T})-\mu_{0}(\frac{t}{T})|)^3+4 (|a_1(\frac{t}{T})-a_{10}(\frac{t}{T})|)^3y^3}{3(\mu_*(\frac{t}{T})+a_{1*}((\frac{t}{T})y)^2}\right]Q_{\kappa_0,t-1}(y)dy \\
				&\lesssim& \sup_t  \frac{1}{\rho}(\mu(\frac{t}{T})-\mu_{0}(\frac{t}{T}))^2 \int_y Q_{\kappa_0,t-1}(y)dy + \frac{1}{\rho}(a_1(\frac{t}{T})-a_{10}(\frac{t}{T}))^2 \int_y yQ_{\kappa_0,t-1}(y)dy\\
				&&+ \sup_t  \frac{1}{\rho^2}(|\mu(\frac{t}{T})-\mu_{0}(\frac{t}{T})|)^3 \int_y Q_{\kappa_0,t-1}(y)dy + \frac{1}{\rho^2}(|a_1(\frac{t}{T})-a_{10}(\frac{t}{T})|)^3 \int_y yQ_{\kappa_0,t-1}(y)dy\\
				&\lesssim& \|\mu-\mu_0\|_{\infty}^2+\|a_1-a_{01}\|_{\infty}^2+\|\mu-\mu_0\|_{\infty}^3+\|a_1-a_{01}\|_{\infty}^3\\
				&\lesssim& \|\mu-\mu_0\|_{\infty}^2+\|a_1-a_{01}\|_{\infty}^2
			\end{eqnarray*}
			In the above derivation, we have used the closeness of $\kappa(\cdot)=(\mu(\cdot),a_1(\cdot))$ and $\kappa_0(\cdot)=(\mu_0(\cdot),a_{10}(\cdot))$ multiple times as is and also in conjunction with Assumption (A.3) to imply $\inf_t a_{1*}(t/T) >\rho$. Due to time varying nature of the coefficient with an AR(1) structure, we could not bound above KL directly using Lemma 2.9 of \cite{ghosal2017fundamentals} type results that are used for nonparametric Poisson models. Thus, we consider Assumption (A.3) to tackle this complicated structure.  
		\end{proof}
		
		\noindent Proceeding with the rest of the proof of Theorem 1, we use the results of B-Splines,
		$\|\mu-\mu_0\|_{\infty}\leq \sqrt{J}\|\alpha-\alpha_0\|_2,$ where $\alpha = \{\alpha_j=\exp(\beta_j)\}$ and $\|a_1-a_{10}\|_{\infty}\leq \sqrt{K}\|\gamma_{j}-\gamma_{0,j}\|_2$, where $\gamma_{j}=\theta_{1j}M_1,$ such that $\gamma_{j}<1$.
		
		We also have,
		\begin{align}
		d^2_{1,T}(\kappa,\kappa_0)&\lesssim \|\mu-\mu_0\|_{\infty}^2+\|a_1-a_{01}\|_{\infty}^2.\label{distineq} 
		\end{align}
		By (\eqref{distineq}) verifies (10.32) and (\eqref{KLineq}) verifies (10.33) of Theorem 10.21 of \cite{ghosal2017fundamentals}.
		Other conditions of Theorem 10.21 therein are based on the sieve.
		
		Consider the following sieve for the parameter space $W_T=\{K_1,K_2,\alpha:K_1\leq K_{1T},K_2\leq K_{2T}, \|\log(\alpha)\|_{\infty}\leq A_{T}\}$, where $A_T$ is a polynomial in $T$. Then the $\epsilon_T$-entropy of the sieve is bounded by a constant multiple of $(K_{1T}+ K_{2T})\log T$.
		The prior on $K_1$ and $K_2$ satisfy the condition A1 from Chapter 10.4 of \cite{ghosal2017fundamentals} and the induced prior on $\alpha$ is log-normal which satisfies A2 and A3.  Now we use Lemma 10.20. As overall concentration can not be better that $T^{-1/2}$, we assume $\log(1/\epsilon_T)\lesssim \log T$. Then the probability on an $2\epsilon_T$-sized around the truth within the sieve $W_T$ can be lower bounded by $\epsilon_T^{K_T+J_T}$. Also $\epsilon_T$-entropy of the sieve is bounded by $(K_{1T}+K_{2T})\log T$ using Lemma 10.20. Using the same Lemma we get the upper bound of prior probability in the complement of sieve. It will be $\exp[-b_{12} K_{1T}(\log K_{1T})^{b_{13}}-b_{22} K_{2T}(\log K_{1T})^{b_{23}}]$ for $K_{1T}$ and $K_{2T}$. For $\theta$ it will be $K_{1T}\exp(-bT^2)$ for some constant $b$. To satisfy the conditions from general theory of posterior contraction, we have
		\begin{align*}
		b_{12} K_{1T}(\log K_{1T})^{b_{13}}+b_{22} K_{2T}(\log K_{1T})^{b_{23}}&\gtrsim T\epsilon_T^2,\\ K_{1T}\exp(-bT^2)&\leq\exp[ -(c_1+4)T\bar\epsilon_T^2].
		\end{align*}
		Following the steps given after Theorem 10.21, we calculate $\epsilon_T$ equal to $$\max\bigg\{T^{-\iota_1/(2\iota_1+1)} (\log T)^{\iota_1/(2\iota_1+1)+(1-b_{13})},T^{-\iota_2/(2\iota_2+1)} (\log T)^{\iota_2/(2\iota_2+1)+(1-b_{23})}\bigg\}.$$

		Using AM-GM inequality, 
		we can show that \begin{align}
		\int \sqrt{\frac{f_1}{f_0}}\leq \frac{1}{T}\sum_{t=1}^T\IE_{\kappa_0}\left(\frac{f_1(X_t|X_{t-1})}{f_0(X_t|X_{t-1})}\right)^{T/2}\label{eq:AMGM1}
		\end{align}. 
		
		If we can show that (?) \begin{align}
		\IE_{\kappa_0}\sqrt{\frac{f_1}{f_0}}\leq \prod_{t=1}^T\IE_{\kappa_0}\left(\frac{f_1(X_t|X_{t-1})}{f_0(X_t|X_{t-1})}\right)^{1/2}\label{eq:ineq2}
		\end{align}. 
		
		Similar calculations in obtaining~\eqref{eq:tobdexp} give us
		\begin{align}
		& \int_{X_t} \left\{f_1(X_t|X_{t-1})f_0(X_t|X_{t-1})\right\}^{T/2}=\nonumber\\&\quad \exp\Big[-\frac{T}{2}\{\mu_1(t/T)+a_{11}(t/T)X_{t-1}\}-\frac{T}{2}\{\mu_0(t/T)+a_{01}(t/T)X_{t-1}\}\nonumber\\&\quad + (\mu_1(t/T)+a_{11}(t/T)X_{t-1})^{T/2}(\mu_0(t/T)+a_{01}(t/T)X_{t-1})^{T/2} \Big]\nonumber\\&\quad \leq 1.
		\end{align}
		The last inequality is due to weighted AM-GM inequality, we have $\frac{T}{2}\{\mu_1(t/T)+a_{11}(t/T)X_{t-1}\}+\frac{T}{2}\{\mu_0(t/T)+a_{01}(t/T)X_{t-1}\} \geq (\mu_1(t/T)+a_{11}(t/T)X_{t-1})^{T/2}(\mu_0(t/T)+a_{01}(t/T)X_{t-1})^{T/2}$.
		
		Combining the two we have,$$1-\epsilon_T^2\leq\left(\IE_{\kappa_0}\sqrt{\frac{f_1}{f_0}}\right)^{1/T}\leq\left(\frac{1}{T}\sum_{t=1}^T\IE_{\kappa_0} \left(\frac{f_1(X_t|X_{t-1})}{f_0(X_t|X_{t-1})}\right)^{T/2}\right)^{1/T}\leq 1.$$
		
		As $\epsilon_T^2$ goes to zero, we have almost equality in AM-GM inequality of \eqref{eq:AMGM1}. Thus 
		$$\left(\frac{1}{T}\sum_{t=1}^T \IE_{\kappa_0}\left(\frac{f_1(X_t|X_{t-1})}{f_0(X_t|X_{t-1})}\right)^{T/2}\right)^{1/T}\approx \IE_{\kappa_0}\left(\frac{f_1(X_t|X_{t-1})}{f_0(X_t|X_{t-1})}\right)^{1/2},$$ for all $t$ as $\left(\IE_{\kappa_0}\left(\frac{f_1(X_t|X_{t-1})}{f_0(X_t|X_{t-1})}\right)^{1/2}\right)^2\leq \IE_{\kappa_0}\frac{f_1(X_t|X_{t-1})}{f_0(X_t|X_{t-1})}=\int\frac{f_1(X_t|X_{t-1})}{f_0(X_t|X_{t-1})}f_0=1$, implying $\IE_{\kappa_0}\left(\frac{f_1(X_t|X_{t-1})}{f_0(X_t|X_{t-1})}\right)^{1/2}\leq 1$.
		
		Again, similar calculations in obtaining~\eqref{eq:tobdexp} can be done to get that 
		\begin{align}
		&\IE_{\kappa_0}\left(\frac{f_1(X_t|X_{t-1})}{f_0(X_t|X_{t-1})}\right)^{1/2}=\nonumber\\&\quad \IE_{X_{t-1},\kappa_0}\exp\Big[-0.5\{\mu_1(t/T)+a_{11}(t/T)X_{t-1}\}-0.5\{\mu_0(t/T)+a_{01}(t/T)X_{t-1}\}\nonumber\\&\quad + (\mu_1(t/T)+a_{11}(t/T)X_{t-1})^{0.5}(\mu_0(t/T)+a_{01}(t/T)X_{t-1})^{0.5} \Big]\nonumber\\&\quad = \IE_{X_{t-1},\kappa_0}\exp\Big[-0.5\left\{(\mu_1(t/T)+a_{11}(t/T)X_{t-1})^{0.5}-(\mu_0(t/T)+a_{01}(t/T)X_{t-1})^{0.5}\right\}^2\Big]
		\end{align}.
		
		\begin{align}
		&1-\IE_{\kappa_0}\left(\frac{f_1(X_t|X_{t-1})}{f_0(X_t|X_{t-1})}\right)^{1/2}\nonumber\\&= \IE_{X_{t-1},\kappa_0}\Bigg(1-\exp\Big[-0.5\left\{(\mu_1(t/T)+a_{11}(t/T)X_{t-1})^{0.5}-(\mu_0(t/T)+a_{01}(t/T)X_{t-1})^{0.5}\right\}^2\Big]\Bigg)
		\end{align}
		
		We have above quantity is less than $\epsilon_T^2$ upto some constant. For $\epsilon_T^2$ very small, the above expectation can be rewritten using $1-e^{-x}\approx x$ as
		
		$1-\IE_{\kappa_0}\left(\frac{f_1(X_t|X_{t-1})}{f_0(X_t|X_{t-1})}\right)^{1/2}=\IE_{X_{t-1},\kappa_0}\left(0.5\left\{(\mu_1(t/T)+a_{11}(t/T)X_{t-1})^{0.5}-(\mu_0(t/T)+a_{01}(t/T)X_{t-1})^{0.5}\right\}^2\right)$.
		Since, we are taking expectation over a positive random variable, if this very small the random variable itself is very small. Since, $\IE_{t-1,\kappa_0}(X_{t-1})$ is finite under our assumption we must have,
		
		$(\mu(t/T)-\mu_0(t/T))^2+(a_{11}(t/T)-a_{01}(t/T))^2\lesssim\epsilon_T^2$. Thus $d_{2,T}^2(\kappa,\kappa_0)\lesssim \epsilon_T^2$.
		
		$$
		1\leq \frac{1}{\sup_t \IE_{\kappa_0}\left(\frac{f_1(X_t|X_{t-1})}{f_0(X_t|X_{t-1})}\right)^{1/2}}\lesssim\exp(\epsilon_T^2)\approx 1+\epsilon_T^2,
		$$
		which again implies that
		
		\begin{eqnarray}
		1-\epsilon_T^2\lesssim\sup_t \IE_{\kappa_0}\left(\frac{f_1(X_t|X_{t-1})}{f_0(X_t|X_{t-1})}\right)^{1/2}\leq 1
		\implies 0\le 1-\sup_t \IE_{\kappa_0}\left(\frac{f_1(X_t|X_{t-1})}{f_0(X_t|X_{t-1})}\right)^{1/2}\lesssim \epsilon_T^2\label{eq:bound}
		\end{eqnarray}
		
		Similar calculations in obtaining~\eqref{eq:tobdexp} can be done to get that 
		\begin{align}
		&\IE_{\kappa_0}\left(\frac{f_1(X_t|X_{t-1})}{f_0(X_t|X_{t-1})}\right)^{1/2}=\nonumber\\&\quad \IE_{X_{t-1},\kappa_0}\exp\Big[-0.5\{\mu_1(t/T)+a_{11}(t/T)X_{t-1}\}-0.5\{\mu_0(t/T)+a_{01}(t/T)X_{t-1}\}\nonumber\\&\quad + (\mu_1(t/T)+a_{11}(t/T)X_{t-1})^{0.5}(\mu_0(t/T)+a_{01}(t/T)X_{t-1})^{0.5} \Big]\nonumber\\&\quad = \IE_{X_{t-1},\kappa_0}\exp\Big[-0.5\left\{(\mu_1(t/T)+a_{11}(t/T)X_{t-1})^{0.5}-(\mu_0(t/T)+a_{01}(t/T)X_{t-1})^{0.5}\right\}^2\Big]
		\end{align}.
		
		\begin{align}
		&1-\IE_{\kappa_0}\left(\frac{f_1(X_t|X_{t-1})}{f_0(X_t|X_{t-1})}\right)^{1/2}\nonumber\\&= \IE_{X_{t-1},\kappa_0}\Bigg(1-\exp\Big[-0.5\left\{(\mu_1(t/T)+a_{11}(t/T)X_{t-1})^{0.5}-(\mu_0(t/T)+a_{01}(t/T)X_{t-1})^{0.5}\right\}^2\Big]\Bigg)
		\end{align}
		
		We have above quantity is less than $\epsilon_T^2$ upto some constant. For $\epsilon_T^2$ very small, the above expectation can be rewritten using $1-e^{-x}\approx x$ as
		
		$ 1-\IE_{\kappa_0}\left(\frac{f_1(X_t|X_{t-1})}{f_0(X_t|X_{t-1})}\right)^{1/2} \IE_{X_{t-1},\kappa_0}\left(0.5\left\{(\mu_1(t/T)+a_{11}(t/T)X_{t-1})^{0.5}-(\mu_0(t/T)+a_{01}(t/T)X_{t-1})^{0.5}\right\}^2\right)$
		
		If we have $-\log \IE_{\kappa_0}\sqrt{\frac{f_1}{f_0}}\geq Td_{2,T}^2(\kappa_0,\kappa)\implies \left(\IE_{\kappa_0}\sqrt{\frac{f_1}{f_0}}\right)^{1/T}\leq\exp\left\{-d_{2,T}^2(\kappa_0,\kappa)\right\}$, then $\Pi_0(d_{2,T}^2(\kappa_0,\kappa)\leq \epsilon_T^2)\geq\Pi_0(r_{T}^2(\kappa_0,\kappa)\leq \epsilon_T^2),$ which goes to one. 
		Mimicking the steps from the previous subsection, 
		we can show that $\left(\IE_{\kappa_0}\sqrt{\frac{f_1}{f_0}}\right)^{1/T}\leq\left(\sup_t \IE_{\kappa_0}\left(\frac{f_1(X_t|X_{t-1})}{f_0(X_t|X_{t-1})}\right)^{T/2}\right)^{1/T}=\sup_t \IE_{\kappa_0}\left(\frac{f_1(X_t|X_{t-1})}{f_0(X_t|X_{t-1})}\right)^{1/2}$. 
		
		Similar calculations in obtaining~\eqref{eq:tobdexp} can be done to get that 
		\begin{align}
		&\IE_{\kappa_0}\left(\frac{f_1(X_t|X_{t-1})}{f_0(X_t|X_{t-1})}\right)^{1/2}=\nonumber\\&\quad \IE_{X_{t-1},\kappa_0}\exp\Bigg[-0.5\{\mu_1(t/T)+a_{11}(t/T)X_{t-1}\}-0.5\{\mu_0(t/T)+a_{01}(t/T)X_{t-1}\}\nonumber\\&\quad + (\mu_1(t/T)+a_{11}(t/T)X_{t-1})^{0.5}(\mu_0(t/T)+a_{01}(t/T)X_{t-1})^{0.5} \Bigg]\nonumber\\&\quad = \IE_{X_{t-1},\kappa_0}\exp\Bigg[-0.5\left\{(\mu_1(t/T)+a_{11}(t/T)X_{t-1})^{0.5}-(\mu_0(t/T)+a_{01}(t/T)X_{t-1})^{0.5}\right\}^2\Bigg]
		\end{align}.
		
		If $\|\mu_1-\mu_0\|_{\infty}$ and  $\|a_{11}-a_{01}\|_{\infty}$ 
	}
	
	\bibliographystyle{bibstyle}
	\bibliography{main}

@article{lauer2020incubation,
  title={The incubation period of coronavirus disease 2019 (COVID-19) from publicly reported confirmed cases: estimation and application},
  author={Lauer, Stephen A and Grantz, Kyra H and Bi, Qifang and Jones, Forrest K and Zheng, Qulu and Meredith, Hannah R and Azman, Andrew S and Reich, Nicholas G and Lessler, Justin},
  journal={Annals of internal medicine},
  year={2020}
}

@article{karmakar2020optimal,
  title={Optimal Gaussian Approximation for Multiple Time Series},
  author={Karmakar, Sayar and Wu, Wei Biao},
  journal={Statistica Sinica},
  volume={30},
  number={3},
  pages={1399--1417},
  year={2020}
}

@techreport{karmakar2018asymptotic,
  title={Asymptotic Theory for Simultaneous Inference Under Dependence},
  author={Karmakar, Sayar},
  year={2018},
  institution={University of Chicago}
}

@article{roy2020bayesian,
  title={Bayesian semiparametric time varying model for count data to study the spread of the COVID-19 cases},
  author={Roy, Arkaprava and Karmakar, Sayar},
  journal={arXiv preprint arXiv:2004.02281},
  year={2020}
}

@article{ferreira2017estimation,
  title={Estimation and prediction of time-varying GARCH models through a state-space representation: a computational approach},
  author={Ferreira, Guillermo and Navarrete, Jean P and Rodr{\'\i}guez-Cort{\'e}s, Francisco J and Mateu, Jorge},
  journal={Journal of Statistical Computation and Simulation},
  volume={87},
  number={12},
  pages={2430--2449},
  year={2017},
  publisher={Taylor \& Francis}
}

@article{shen2015adaptive,
  title={Adaptive Bayesian procedures using random series priors},
  author={Shen, Weining and Ghosal, Subhashis},
  journal={Scandinavian Journal of Statistics},
  volume={42},
  number={4},
  pages={1194--1213},
  year={2015},
  publisher={Wiley Online Library}
}

@article{roy2018high,
  title={High Dimensional Single-Index Bayesian Modeling of Brain Atrophy},
  author={Roy, Arkaprava and Ghosal, Subhashis and Choudhury, Kingshuk Roy and others},
  journal={Bayesian Analysis},
  year={2018},
  publisher={International Society for Bayesian Analysis}
}

@article{besag1974spatial,
  title={Spatial interaction and the statistical analysis of lattice systems},
  author={Besag, Julian},
  journal={Journal of the Royal Statistical Society. Series B (Methodological)},
  pages={192--236},
  year={1974},
  publisher={JSTOR}
}

@article{roy2019nonparametric,
  title={Nonparametric graphical model for counts},
  author={Roy, Arkaprava and Dunson, David B},
  journal={arXiv preprint arXiv:1901.00886},
  year={2019}
}

@inproceedings{yang2013poisson,
  title={On Poisson graphical models},
  author={Yang, Eunho and Ravikumar, Pradeep K and Allen, Genevera I and Liu, Zhandong},
  booktitle={Advances in Neural Information Processing Systems},
  pages={1718--1726},
  year={2013}
}

@article{zeger1988regression,
  title={A regression model for time series of counts},
  author={Zeger, Scott L},
  journal={Biometrika},
  volume={75},
  number={4},
  pages={621--629},
  year={1988},
  publisher={Oxford University Press}
}

@article{ghosal2017fundamentals,
  title={Fundamentals of nonparametric Bayesian inference},
  author={Ghosal, Subhashis and Van der Vaart, Aad},
  volume={44},
  year={2017},
  publisher={Cambridge University Press}
}

@article{biswas2009discrete,
  title={Discrete-valued ARMA processes},
  author={Biswas, Atanu and Song, Peter X-K},
  journal={Statistics \& probability letters},
  volume={79},
  number={17},
  pages={1884--1889},
  year={2009},
  publisher={Elsevier}
}

@article{silveira2015bayesian,
  title={Bayesian GARMA models for count data},
  author={Silveira de Andrade, Breno and Andrade, Marinho G and Ehlers, Ricardo S},
  journal={Communications in Statistics: Case Studies, Data Analysis and Applications},
  volume={1},
  number={4},
  pages={192--205},
  year={2015},
  publisher={Taylor \& Francis}
}

@article{brandt2001linear,
  title={A linear Poisson autoregressive model: The Poisson AR (p) model},
  author={Brandt, Patrick T and Williams, John T},
  journal={Political Analysis},
  volume={9},
  number={2},
  pages={164--184},
  year={2001},
  publisher={Cambridge University Press}
}

@article{neal2011mcmc,
  title={MCMC using Hamiltonian dynamics},
  author={Neal, Radford M and others},
  journal={Handbook of Markov Chain Monte Carlo},
  volume={2},
  number={11},
  pages={2},
  year={2011}
}

@article {dahlhaussubbarao2006,
    AUTHOR = {Dahlhaus, Rainer and Subba Rao, Suhasini},
     TITLE = {Statistical inference for time-varying {ARCH} processes},
   JOURNAL = {Ann. Statist.},
  FJOURNAL = {The Annals of Statistics},
    VOLUME = {34},
      YEAR = {2006},
    NUMBER = {3},
     PAGES = {1075--1114},
      ISSN = {0090-5364},
   MRCLASS = {62M10 (41A58 62E20 62F12)},
  MRNUMBER = {2278352},
MRREVIEWER = {Cem Kadilar},
       DOI = {10.1214/009053606000000227},
}

@article {sayar2020,
    AUTHOR = {Karmakar, Sayar and Richter, Stefan and Wu, Wei Biao},
     TITLE = {Simultaneous inference for time-varying models},
   JOURNAL = {In revision, \url{https://sayarkarmakar.github.io/publications/sayar1.pdf}},
  FJOURNAL = {},
    VOLUME = {},
      YEAR = {2020+},
    NUMBER = {},
     PAGES = {},
      ISSN = {},
   MRCLASS = {},
  MRNUMBER = {},
MRREVIEWER = {},
       DOI = {},
}

@article{davis2003,
  title={Observation-driven models for Poisson counts},
  author={Davis, Richard A and Dunsmuir, William TM and Streett, Sarah B},
  journal={Biometrika},
  volume={90},
  number={4},
  pages={777--790},
  year={2003},
  publisher={Oxford University Press}
}

@article{zhu2012zero,
  title={Zero-inflated Poisson and negative binomial integer-valued GARCH models},
  author={Zhu, Fukang},
  journal={Journal of Statistical Planning and Inference},
  volume={142},
  number={4},
  pages={826--839},
  year={2012},
  publisher={Elsevier}
}

@article{zhu2012modeling1,
  title={Modeling overdispersed or underdispersed count data with generalized Poisson integer-valued GARCH models},
  author={Zhu, Fukang},
  journal={Journal of Mathematical Analysis and Applications},
  volume={389},
  number={1},
  pages={58--71},
  year={2012},
  publisher={Elsevier}
}

@article{zhu2011negative,
  title={A negative binomial integer-valued GARCH model},
  author={Zhu, Fukang},
  journal={Journal of Time Series Analysis},
  volume={32},
  number={1},
  pages={54--67},
  year={2011},
  publisher={Wiley Online Library}
}

@article{zhu2012modeling,
  title={Modeling time series of counts with COM-Poisson INGARCH models},
  author={Zhu, Fukang},
  journal={Mathematical and Computer Modelling},
  volume={56},
  number={9-10},
  pages={191--203},
  year={2012},
  publisher={Elsevier}
}

@article{chan1995,
  title={Monte Carlo EM estimation for time series models involving counts},
  author={Chan, KS and Ledolter, Johannes},
  journal={Journal of the American Statistical Association},
  volume={90},
  number={429},
  pages={242--252},
  year={1995},
  publisher={Taylor \& Francis}
}

@article{incubation,
    author = {Lauer, Stephen A. and Grantz, Kyra H. and Bi, Qifang and Jones, Forrest K. and Zheng, Qulu and Meredith, Hannah R. and Azman, Andrew S. and Reich, Nicholas G. and Lessler, Justin},
    title = "{The Incubation Period of Coronavirus Disease 2019 (COVID-19) From Publicly Reported Confirmed Cases: Estimation and Application}",
    journal = {Annals of Internal Medicine},
    year = {2020},
    month = {03},
    issn = {0003-4819},
    doi = {10.7326/M20-0504},
    url = {https://doi.org/10.7326/M20-0504},
    eprint = {https://annals.org/acp/content\_public/journal/aim/0/aime202005050-m200504.pdf},
}

@article {subbarao2008,
    AUTHOR = {Fryzlewicz, Piotr and Sapatinas, Theofanis and Subba Rao,
              Suhasini},
     TITLE = {Normalized least-squares estimation in time-varying {ARCH}
              models},
   JOURNAL = {Ann. Statist.},
  FJOURNAL = {The Annals of Statistics},
    VOLUME = {36},
      YEAR = {2008},
    NUMBER = {2},
     PAGES = {742--786},
      ISSN = {0090-5364},
   MRCLASS = {62M10 (62P20)},
  MRNUMBER = {2396814},
MRREVIEWER = {Vytautas Kazakevi\v cius},
       DOI = {10.1214/07-AOS510},
}

@article{ahmad2016poisson,
  title={Poisson QMLE of count time series models},
  author={Ahmad, Ali and Francq, Christian},
  journal={Journal of Time Series Analysis},
  volume={37},
  number={3},
  pages={291--314},
  year={2016},
  publisher={Wiley Online Library}
}

@article{ferland2006integer,
  title={Integer-valued GARCH process},
  author={Ferland, Ren{\'e} and Latour, Alain and Oraichi, Driss},
  journal={Journal of Time Series Analysis},
  volume={27},
  number={6},
  pages={923--942},
  year={2006},
  publisher={Wiley Online Library}
}

@article{biller2001bayesian,
  title={Bayesian varying-coefficient models using adaptive regression splines},
  author={Biller, Clemens and Fahrmeir, Ludwig},
  journal={Statistical Modelling},
  volume={1},
  number={3},
  pages={195--211},
  year={2001},
  publisher={Sage Publications Sage CA: Thousand Oaks, CA}
}

@article{hastie1993varying,
  title={Varying-coefficient models},
  author={Hastie, Trevor and Tibshirani, Robert},
  journal={Journal of the Royal Statistical Society: Series B (Methodological)},
  volume={55},
  number={4},
  pages={757--779},
  year={1993},
  publisher={Wiley Online Library}
}

@article{gu1993smoothing,
  title={Smoothing spline ANOVA with component-wise Bayesian “confidence intervals”},
  author={Gu, Chong and Wahba, Grace},
  journal={Journal of Computational and Graphical Statistics},
  volume={2},
  number={1},
  pages={97--117},
  year={1993},
  publisher={Taylor \& Francis Group}
}

@article{fan2008statistical,
  title={Statistical methods with varying coefficient models},
  author={Fan, Jianqing and Zhang, Wenyang},
  journal={Statistics and its Interface},
  volume={1},
  number={1},
  pages={179},
  year={2008},
  publisher={NIH Public Access}
}

@article{yue2014bayesian,
  title={Bayesian adaptive smoothing splines using stochastic differential equations},
  author={Yue, Yu Ryan and Simpson, Daniel and Lindgren, Finn and Rue, H{\aa}vard and others},
  journal={Bayesian Analysis},
  volume={9},
  number={2},
  pages={397--424},
  year={2014},
  publisher={International Society for Bayesian Analysis}
}

@article{franco2019unified,
  title={A unified view on Bayesian varying coefficient models},
  author={Franco-Villoria, Maria and Ventrucci, Massimo and Rue, H{\aa}vard and others},
  journal={Electronic Journal of Statistics},
  volume={13},
  number={2},
  pages={5334--5359},
  year={2019},
  publisher={The Institute of Mathematical Statistics and the Bernoulli Society}
}

@article{cai2000functional,
  title={Functional-coefficient regression models for nonlinear time series},
  author={Cai, Zongwu and Fan, Jianqing and Yao, Qiwei},
  journal={Journal of the American Statistical Association},
  volume={95},
  number={451},
  pages={941--956},
  year={2000},
  publisher={Taylor \& Francis Group}
}

@article{huang2004functional,
  title={Functional coefficient regression models for non-linear time series: a polynomial spline approach},
  author={Huang, Jianhua Z and Shen, Haipeng},
  journal={Scandinavian journal of statistics},
  volume={31},
  number={4},
  pages={515--534},
  year={2004},
  publisher={Wiley Online Library}
}

@article{huang2002varying,
  title={Varying-coefficient models and basis function approximations for the analysis of repeated measurements},
  author={Huang, Jianhua Z and Wu, Colin O and Zhou, Lan},
  journal={Biometrika},
  volume={89},
  number={1},
  pages={111--128},
  year={2002},
  publisher={Oxford University Press}
}

@article{amorim2008regression,
  title={Regression splines in the time-dependent coefficient rates model for recurrent event data},
  author={Amorim, Leila D and Cai, Jianwen and Zeng, Donglin and Barreto, Maur{\'\i}cio L},
  journal={Statistics in medicine},
  volume={27},
  number={28},
  pages={5890--5906},
  year={2008},
  publisher={Wiley Online Library}
}

@article{ning2020bayesian,
  title={Bayesian linear regression for multivariate responses under group sparsity},
  author={Ning, Bo and Jeong, Seonghyun and Ghosal, Subhashis and others},
  journal={Bernoulli},
  volume={26},
  number={3},
  pages={2353--2382},
  year={2020},
  publisher={Bernoulli Society for Mathematical Statistics and Probability}
}

@article{ghosal2000convergence,
  title={Convergence rates of posterior distributions},
  author={Ghosal, Subhashis and Ghosh, Jayanta K and Van Der Vaart, Aad W and others},
  journal={Annals of Statistics},
  volume={28},
  number={2},
  pages={500--531},
  year={2000},
  publisher={IMS INSTITUTE OF MATHEMATICAL STATISTICS}
}

@article{jeong2019frequentist,
  title={Frequentist Properties of Bayesian Procedures for High-Dimensional Sparse Regression.},
  author={Jeong, Seonghyun and others},
  year={2019}
}

@article{fokianos2009poisson,
  title={Poisson autoregression},
  author={Fokianos, Konstantinos and Rahbek, Anders and Tj{\o}stheim, Dag},
  journal={Journal of the American Statistical Association},
  volume={104},
  number={488},
  pages={1430--1439},
  year={2009},
  publisher={Taylor \& Francis}
}

@article{fokianos2011log,
  title={Log-linear Poisson autoregression},
  author={Fokianos, Konstantinos and Tj{\o}stheim, Dag},
  journal={Journal of Multivariate Analysis},
  volume={102},
  number={3},
  pages={563--578},
  year={2011},
  publisher={Elsevier}
}

@article{rohan2013nonparametric,
  title={Nonparametric estimation of a time-varying GARCH model},
  author={Rohan, Neelabh and Ramanathan, TV},
  journal={Journal of Nonparametric Statistics},
  volume={25},
  number={1},
  pages={33--52},
  year={2013},
  publisher={Taylor \& Francis}
}

@article{davis2009extreme,
  title={Extreme value theory for GARCH processes},
  author={Davis, Richard A and Mikosch, Thomas},
  booktitle={Handbook of financial time series},
  pages={187--200},
  year={2009},
  publisher={Springer}
}

@article{fryzlewicz2008normalized,
  title={Normalized least-squares estimation in time-varying ARCH models},
  author={Fryzlewicz, Piotr and Sapatinas, Theofanis and Rao, Suhasini Subba and others},
  journal={Annals of Statistics},
  volume={36},
  number={2},
  pages={742--786},
  year={2008},
  publisher={Institute of Mathematical Statistics}
}

@article{ghosal2007convergence,
  title={Convergence rates of posterior distributions for noniid observations},
  author={Ghosal, Subhashis and Van Der Vaart, Aad and others},
  journal={Annals of Statistics},
  volume={35},
  number={1},
  pages={192--223},
  year={2007},
  publisher={Institute of Mathematical Statistics}
}

@article{dahlhaus1997fitting,
  title={Fitting time series models to nonstationary processes},
  author={Dahlhaus, Rainer and others},
  journal={Annals of Statistics},
  volume={25},
  number={1},
  pages={1--37},
  year={1997},
  publisher={Institute of Mathematical Statistics}
}

@article{deyoreo2017bayesian,
  title={A Bayesian nonparametric Markovian model for non-stationary time series},
  author={DeYoreo, Maria and Kottas, Athanasios},
  journal={Statistics and Computing},
  volume={27},
  number={6},
  pages={1525--1538},
  year={2017},
  publisher={Springer}
}

@article{yang2020bayesian,
  title={Bayesian inference for big spatial data using non-stationary spectral simulation},
  author={Yang, Hou-Cheng and Bradley, Jonathan R},
  journal={arXiv preprint arXiv:2001.06477},
  year={2020}
}

@article{rosen2012adaptspec,
  title={AdaptSPEC: Adaptive spectral estimation for nonstationary time series},
  author={Rosen, Ori and Wood, Sally and Stoffer, David S},
  journal={Journal of the American Statistical Association},
  volume={107},
  number={500},
  pages={1575--1589},
  year={2012},
  publisher={Taylor \& Francis}
}

@article{rosen2009local,
  title={Local spectral analysis via a Bayesian mixture of smoothing splines},
  author={Rosen, Ori and Stoffer, David S and Wood, Sally},
  journal={Journal of the American Statistical Association},
  volume={104},
  number={485},
  pages={249--262},
  year={2009},
  publisher={Taylor \& Francis}
}

@article{truquet2019local,
  title={Local stationarity and time-inhomogeneous Markov chains},
  author={Truquet, Lionel and others},
  journal={Annals of Statistics},
  volume={47},
  number={4},
  pages={2023--2050},
  year={2019},
  publisher={Institute of Mathematical Statistics}
}

@article{dahlhaus2000likelihood,
  title={A likelihood approximation for locally stationary processes},
  author={Dahlhaus, Rainer and others},
  journal={The Annals of Statistics},
  volume={28},
  number={6},
  pages={1762--1794},
  year={2000},
  publisher={Institute of Mathematical Statistics}
}

@article{dahlhaus2012locally,
  title={Locally stationary processes},
  author={Dahlhaus, Rainer},
  booktitle={Handbook of statistics},
  volume={30},
  pages={351--413},
  year={2012},
  publisher={Elsevier}
}

@article{hadj2020bayesian,
  title={Bayesian Model Search for Nonstationary Periodic Time Series},
  author={Hadj-Amar, Beniamino and Rand, B{\"a}rbel Finkenst{\"a}dt and Fiecas, Mark and L{\'e}vi, Francis and Huckstepp, Robert},
  journal={Journal of the American Statistical Association},
  volume={115},
  number={531},
  pages={1320--1335},
  year={2020},
  publisher={Taylor \& Francis}
}
	
\end{document}